\def\lb{\label}
\newcommand{\er}[1]{\textrm{(\ref{#1})}}
\begin{document}


\renewcommand{\theequation}{\arabic{section}.\arabic{equation}}
\theoremstyle{plain}
\newtheorem{theorem}{\bf Theorem}[section]
\newtheorem{lemma}[theorem]{\bf Lemma}
\newtheorem{corollary}[theorem]{\bf Corollary}
\newtheorem{proposition}[theorem]{\bf Proposition}
\newtheorem{definition}[theorem]{\bf Definition}
\newtheorem{remark}[theorem]{\it Remark}
\newtheorem{example}[theorem]{\it Example}


\def\a{\alpha}  \def\cA{{\mathcal A}}     \def\bA{{\bf A}}  \def\mA{{\mathscr A}}
\def\b{\beta}   \def\cB{{\mathcal B}}     \def\bB{{\bf B}}  \def\mB{{\mathscr B}}
\def\g{\gamma}  \def\cC{{\mathcal C}}     \def\bC{{\bf C}}  \def\mC{{\mathscr C}}
\def\G{\Gamma}  \def\cD{{\mathcal D}}     \def\bD{{\bf D}}  \def\mD{{\mathscr D}}
\def\d{\delta}  \def\cE{{\mathcal E}}     \def\bE{{\bf E}}  \def\mE{{\mathscr E}}
\def\D{\Delta}  \def\cF{{\mathcal F}}     \def\bF{{\bf F}}  \def\mF{{\mathscr F}}
\def\c{\chi}    \def\cG{{\mathcal G}}     \def\bG{{\bf G}}  \def\mG{{\mathscr G}}
\def\z{\zeta}   \def\cH{{\mathcal H}}     \def\bH{{\bf H}}  \def\mH{{\mathscr H}}
\def\e{\eta}    \def\cI{{\mathcal I}}     \def\bI{{\bf I}}  \def\mI{{\mathscr I}}
\def\p{\psi}    \def\cJ{{\mathcal J}}     \def\bJ{{\bf J}}  \def\mJ{{\mathscr J}}
\def\vT{\Theta} \def\cK{{\mathcal K}}     \def\bK{{\bf K}}  \def\mK{{\mathscr K}}
\def\k{\kappa}  \def\cL{{\mathcal L}}     \def\bL{{\bf L}}  \def\mL{{\mathscr L}}
\def\l{\lambda} \def\cM{{\mathcal M}}     \def\bM{{\bf M}}  \def\mM{{\mathscr M}}
\def\L{\Lambda} \def\cN{{\mathcal N}}     \def\bN{{\bf N}}  \def\mN{{\mathscr N}}
\def\m{\mu}     \def\cO{{\mathcal O}}     \def\bO{{\bf O}}  \def\mO{{\mathscr O}}
\def\n{\nu}     \def\cP{{\mathcal P}}     \def\bP{{\bf P}}  \def\mP{{\mathscr P}}
\def\r{\rho}    \def\cQ{{\mathcal Q}}     \def\bQ{{\bf Q}}  \def\mQ{{\mathscr Q}}
\def\s{\sigma}  \def\cR{{\mathcal R}}     \def\bR{{\bf R}}  \def\mR{{\mathscr R}}
\def\S{\Sigma}  \def\cS{{\mathcal S}}     \def\bS{{\bf S}}  \def\mS{{\mathscr S}}
\def\t{\tau}    \def\cT{{\mathcal T}}     \def\bT{{\bf T}}  \def\mT{{\mathscr T}}
\def\f{\phi}    \def\cU{{\mathcal U}}     \def\bU{{\bf U}}  \def\mU{{\mathscr U}}
\def\F{\Phi}    \def\cV{{\mathcal V}}     \def\bV{{\bf V}}  \def\mV{{\mathscr V}}
\def\P{\Psi}    \def\cW{{\mathcal W}}     \def\bW{{\bf W}}  \def\mW{{\mathscr W}}
\def\o{\omega}  \def\cX{{\mathcal X}}     \def\bX{{\bf X}}  \def\mX{{\mathscr X}}
\def\x{\xi}     \def\cY{{\mathcal Y}}     \def\bY{{\bf Y}}  \def\mY{{\mathscr Y}}
\def\X{\Xi}     \def\cZ{{\mathcal Z}}     \def\bZ{{\bf Z}}  \def\mZ{{\mathscr Z}}
\def\O{\Omega}
\def\vr{\varrho}
\def\vs{\varsigma}

\newcommand{\gA}{\mathfrak{A}}          \newcommand{\ga}{\mathfrak{a}}
\newcommand{\gB}{\mathfrak{B}}          \newcommand{\gb}{\mathfrak{b}}
\newcommand{\gC}{\mathfrak{C}}          \newcommand{\gc}{\mathfrak{c}}
\newcommand{\gD}{\mathfrak{D}}          \newcommand{\gd}{\mathfrak{d}}
\newcommand{\gE}{\mathfrak{E}}
\newcommand{\gF}{\mathfrak{F}}           \newcommand{\gf}{\mathfrak{f}}
\newcommand{\gG}{\mathfrak{G}}           
\newcommand{\gH}{\mathfrak{H}}           \newcommand{\gh}{\mathfrak{h}}
\newcommand{\gI}{\mathfrak{I}}           \newcommand{\gi}{\mathfrak{i}}
\newcommand{\gJ}{\mathfrak{J}}           \newcommand{\gj}{\mathfrak{j}}
\newcommand{\gK}{\mathfrak{K}}            \newcommand{\gk}{\mathfrak{k}}
\newcommand{\gL}{\mathfrak{L}}            \newcommand{\gl}{\mathfrak{l}}
\newcommand{\gM}{\mathfrak{M}}            \newcommand{\gm}{\mathfrak{m}}
\newcommand{\gN}{\mathfrak{N}}            \newcommand{\gn}{\mathfrak{n}}
\newcommand{\gO}{\mathfrak{O}}
\newcommand{\gP}{\mathfrak{P}}             \newcommand{\gp}{\mathfrak{p}}
\newcommand{\gQ}{\mathfrak{Q}}             \newcommand{\gq}{\mathfrak{q}}
\newcommand{\gR}{\mathfrak{R}}             \newcommand{\gr}{\mathfrak{r}}
\newcommand{\gS}{\mathfrak{S}}              \newcommand{\gs}{\mathfrak{s}}
\newcommand{\gT}{\mathfrak{T}}             \newcommand{\gt}{\mathfrak{t}}
\newcommand{\gU}{\mathfrak{U}}             \newcommand{\gu}{\mathfrak{u}}
\newcommand{\gV}{\mathfrak{V}}             \newcommand{\gv}{\mathfrak{v}}
\newcommand{\gW}{\mathfrak{W}}             \newcommand{\gw}{\mathfrak{w}}
\newcommand{\gX}{\mathfrak{X}}               \newcommand{\gx}{\mathfrak{x}}
\newcommand{\gY}{\mathfrak{Y}}              \newcommand{\gy}{\mathfrak{y}}
\newcommand{\gZ}{\mathfrak{Z}}             \newcommand{\gz}{\mathfrak{z}}

\def\ba{{\bf a}}\def\be{{\bf e}} \def\bc{{\bf c}}
\def\bv{{\bf v}} \def\bu{{\bf u}}

\def\be{{\bf e}} \def\bc{{\bf c}}
\def\bx{{\bf x}} \def\by{{\bf y}}
\def\bv{{\bf v}} \def\bu{{\bf u}}

\def\Om{\Omega}
\def\bbD{\pmb \Delta}
\def\mm{\mathrm m}
\def\mn{\mathrm n}

\def\ve{\varepsilon}   \def\vt{\vartheta}    \def\vp{\varphi}
   \def\vk{\varkappa}

 \def\dD{{\mathbb A}}   \def\B{{\mathbb B}} \def\C{{\mathbb C}}
 \def\dD{{\mathbb D}}  \def\dE{{\mathbb E}}  \def\dG{{\mathbb G}}
 \def\dF{{\mathbb F}}  \def\dI{{\mathbb I}}  \def\dJ{{\mathbb J}}
 \def\K{{\mathbb K}}  \def\dL{{\mathbb L}}   \def\dM{{\mathbb M}}
 \def\N{{\mathbb N}}  \def\dO{{\mathbb O}}  \def\dP{{\mathbb P}}
 \def\R{{\mathbb R}}  \def\dS{{\mathbb S}}  \def\T{{\mathbb T}}
  \def\dU{{\mathbb U}}  \def\dV{{\mathbb V}} \def\dW{{\mathbb W}}
   \def\dX{{\mathbb X}} \def\dY{{\mathbb Y}} \def\Z{{\mathbb Z}}


\def\la{\leftarrow}              \def\ra{\rightarrow}      \def\Ra{\Rightarrow}
\def\ua{\uparrow}                \def\da{\downarrow}
\def\lra{\leftrightarrow}        \def\Lra{\Leftrightarrow}


\def\lt{\biggl}                  \def\rt{\biggr}
\def\ol{\overline}               \def\wt{\widetilde}
\def\no{\noindent}               \def\ti{\tilde}
\def\ul{\underline}


\let\ge\geqslant                 \let\le\leqslant
\def\lan{\langle}                \def\ran{\rangle}
\def\/{\over}                    \def\iy{\infty}
\def\sm{\setminus}               \def\es{\emptyset}
\def\ss{\subset}                 \def\ts{\times}
\def\pa{\partial}                \def\os{\oplus}
\def\om{\ominus}                 \def\ev{\equiv}
\def\iint{\int\!\!\!\int}        \def\iintt{\mathop{\int\!\!\int\!\!\dots\!\!\int}\limits}
\def\el2{\ell^{\,2}}             \def\1{1\!\!1}
\def\wh{\widehat}

\def\sh{\mathop{\mathrm{sh}}\nolimits}
\def\ch{\mathop{\mathrm{ch}}\nolimits}

\def\where{\mathop{\mathrm{where}}\nolimits}
\def\as{\mathop{\mathrm{as}}\nolimits}
\def\Area{\mathop{\mathrm{Area}}\nolimits}
\def\arg{\mathop{\mathrm{arg}}\nolimits}
\def\const{\mathop{\mathrm{const}}\nolimits}
\def\det{\mathop{\mathrm{det}}\nolimits}
\def\diag{\mathop{\mathrm{diag}}\nolimits}
\def\diam{\mathop{\mathrm{diam}}\nolimits}
\def\dim{\mathop{\mathrm{dim}}\nolimits}
\def\dist{\mathop{\mathrm{dist}}\nolimits}
\def\Im{\mathop{\mathrm{Im}}\nolimits}
\def\Iso{\mathop{\mathrm{Iso}}\nolimits}
\def\Ker{\mathop{\mathrm{Ker}}\nolimits}
\def\Lip{\mathop{\mathrm{Lip}}\nolimits}
\def\rank{\mathop{\mathrm{rank}}\limits}
\def\Ran{\mathop{\mathrm{Ran}}\nolimits}
\def\Re{\mathop{\mathrm{Re}}\nolimits}
\def\Res{\mathop{\mathrm{Res}}\nolimits}
\def\res{\mathop{\mathrm{res}}\limits}
\def\sign{\mathop{\mathrm{sign}}\nolimits}
\def\span{\mathop{\mathrm{span}}\nolimits}
\def\supp{\mathop{\mathrm{supp}}\nolimits}
\def\Tr{\mathop{\mathrm{Tr}}\nolimits}
\def\BBox{\hspace{1mm}\vrule height6pt width5.5pt depth0pt \hspace{6pt}}


\newcommand\nh[2]{\widehat{#1}\vphantom{#1}^{(#2)}}
\def\dia{\diamond}

\def\Oplus{\bigoplus\nolimits}



\def\qqq{\qquad}
\def\qq{\quad}
\let\ge\geqslant
\let\le\leqslant
\let\geq\geqslant
\let\leq\leqslant
\newcommand{\ca}{\begin{cases}}
\newcommand{\ac}{\end{cases}}
\newcommand{\ma}{\begin{pmatrix}}
\newcommand{\am}{\end{pmatrix}}
\renewcommand{\[}{\begin{equation}}
\renewcommand{\]}{\end{equation}}
\def\bu{\bullet}
\def\tes{\textstyle}

\baselineskip14pt

\title[{Scattering for time periodic Hamiltonians on graphs  }]
 {Scattering for time periodic Hamiltonians on graphs}

\author[Hiroshi Isozaki]{Hiroshi Isozaki}
\address{Graduate School of Pure and Applied Sciences,
University of Tsukuba, 1-1-1, Tennoudai, Tsukuba, 305-8571, Japan,
isozakih@math.tsukuba.ac.jp}

\author[Evgeny, L. Korotyaev]{Evgeny, L. Korotyaev}
\address{E. Korotyaev, Department of Mathematical Analysis, Saint-Petersburg State University, Universitetskaya nab. 7/9, St. Petersburg,
 korotyaev@gmail.com}

\date{\today}

\begin{abstract}
\no  We  develop a scattering theory for time-periodic Hamiltonians on
discrete graphs, including long-range potentials with zero average for
the period, and show the existence and completeness of wave operators.
\end{abstract}

\subjclass{34A55, (34B24, 47E05)} \keywords{trace formula,
time-periodic potentials}

\maketitle

\baselineskip 15pt

\section {Introduction}
\subsection{Space-time periodic systems}
We consider a scattering problem for a magnetic Schr\"odinger equation on a
periodic graph $\cG=(\cV,\cE)$:
\[
\lb{eg1xx}
\begin{aligned}
{\frac{d}{dt}}u(t)= -i h(t) u(t),\qq h(t)=\D_{\b(t)}+\gp+ V(t),
\end{aligned}
\]
where the Hamiltonian $h(t) $ is  $\t$-periodic in continuous time
$t$ in the following sense: $\D_{\b(t)}$ is a discrete magnetic
Laplacian with  magnetic potential  $\b$ on $\cV$ which is
$\t$-periodic in time (see \er{1}). The electric potential $V(t)$ on
$\cV$ is also assumed to be $\t$-periodic in time, and $\gp:\cV\to
\R $ is a time-independent bounded electric potential.

The above operator has a lot of applications to various periodic
media, e.g. nanomedia, in physics, chemistry and engineering, see,
e.g., \cite{NG04}.  Approximation by periodic graphs are often used
to study  properties of such media, and the problem is reduced to
Schr\"odinger operators on graphs. It is known that the spectrum of
Schr\"odinger operators with periodic potentials on periodic
discrete graphs consists of a finite number of flat bands, i.e.
eigenvalues  of infinite multiplicities, and an absolutely
continuous part (a union of a finite number of non-degenerate
bands), the latter of which gives rise to scattering. We are
interested in the scattering by periodic Hamiltonians on graphs,
which are also periodic in time.

\subsection{Magnetic Schr\"odinger operators on graphs}
Let $\cG=(\cV,\cE)$ be a connected infinite graph embedded in
$\R^d$. Here $\cV$ is the set of vertices and $\cE$   is the set of
non-oriented edges. For $x, y \in \cV$, if there is an edge in $\cE$
starting from $x$ and ending at $y$, we say that $x$ and $y$ are
adjacent and denote by $x\sim y$. Each edge is not oriented at
first, however, we give two orientations for the edges in $\cE$, and
denote  the set of all oriented edges by $\cA$.  The edge starting
from $x$  and ending at  $y$ is denoted by the ordered pair
$(x,y)\in\cA$, and is said to be \emph{incident} to the vertices. We
define the degree ${\vk}_x\ge 1$ of a vertex $x\in\cV$ as the number
of all edges in $\cA$ starting from $x$. We assume that
\begin{equation}
\vk_+:=\sup_{x\in\cV}\vk_x<\iy.
\label{kappaxdefine}
\end{equation}
A sequence of directed edges $(\be_1,\be_2,\ldots,\be_n)$ is called
a \emph{path} if the terminus of the edge $\be_s$ coincides with the
origin of the edge $\be_{s+1}$ for all $s=1,\ldots,n-1$. If the
terminus of $\be_n$ coincides with the origin of $\be_1$, then the
path is called a \emph{cycle}. The inverse edge of $\be=(x,y)\in\cA$
is denoted by $\ul \be=(y,x)$. A \emph{magnetic vector potential on
$\cG$} is  a function $\a:\cA\ra\R$ satisfying
$\a(\ul\be\,)=-\a(\be)$ for all $\be\in \cA$. Let $\ell^p(\cV),
p\ge1$ be the space of all functions $f:\cV\to \C$ equipped with the
norm $ \|f\|^p_{\ell^p(\cV)}=\sum_{x\in\cV}|f_x|^p<\infty. $ For a
magnetic vector potential $\a:\cA\ra\R$,  we define the
\emph{combinatorial magnetic Laplacian} $\D_\a$ on $f=(f_x)_{x\in
\cV}\in\ell^2(\cV)$ by
\[
\lb{1} \big(\D_{\a}
f\big)(x)={1\/2}\sum_{\be=(x,y)\in\cA}\big(f_x-e^{i\a(\be)}f_y\big),
\qqq x\in\cV,
\]
where the sum is taken over all oriented edges starting from
the vertex $x$. It is well-known (see, e.g., \cite{MW89})
that $\D_{\a}$is self-adjoint and its spectrum $\s(\D_{\a})$ satisfies
\[
\lb{2}
\begin{aligned}
\s(\D_{\a})\subset[0,\vk_+].
\end{aligned}
\]
If $\a=0$, then $\D=\D_0$ is the usual Laplacian $\D$ without
magnetic potential:
\begin{equation}
 \big(\D f\big)(x)={1\/2}\sum_{(x,y)\in\cA}\big(f_x-f_y\big),
\qqq x\in\cV.
\label{DefineDeltaforV}
\end{equation}
 Remark that there are other definitions of discrete magnetic
Laplacians on graphs: weighted,  normalized, standard Laplacians,
see e.g. \cite{MW89}, \cite{KS17} and references therein. For a
magnetic Laplacian $\D_\a$ and  a real potential $\gp \in
\ell^\iy(\cV)$, that is  $ \big(\gp f\big)(x)=\gp_xf_x$ for all
$x\in \cV$, we define a magnetic Schr\"odinger operator $h_\a$ with
an electric potential $\gp$ on the Hilbert space $\ell^2(\cV)$ by
\[
\lb{4} h_\a=\D_{\a}+\gp.
\]

\subsection{ Periodic graphs.} 
Let $\G$ be a lattice of rank $d$ in
$\R^d$ with a basis $\{\ba_1,\ldots,\ba_d\}$, i.e., $ \G=\Big\{\ba :
\ba=\sum_{j=1}^dn_j\ba_j, \; n_j\in\Z\Big\}$.
We define an equivalence relation on $\R^d$:
$$
x=y \; (\hspace{-4mm}\mod \G) \qq\Leftrightarrow\qq x-y\in\G \qqq
\forall\, x,y\in\R^d.
$$
We consider \emph{a locally finite $\G$-periodic graph} $\cG$, i.e. a
graph satisfying the following condition:

\medskip
$\bu $ $\cG=\cG+\ba$ for any $\ba\in\G$ and the quotient graph
$\cG_*=\cG/\G$ is finite.

\medskip
\noindent The basis $\ba_1,\ldots,\ba_d$ of the lattice $\G$ is
called the {\it periods}  of $\cG$. The \emph{fundamental graph}
$\cG_*=\cG/\G$ is  the quotient graph of $\cG$ by $\Gamma$, which is
a graph on the $d$-dimensional torus $\R^d/\G$. It has the vertex
set $\cV_*=\cV/\G$, the set $\cE_*=\cE/\G$ of unoriented edges and
the set $\cA_*=\cA/\G$ of oriented edges, all of which are finite.

A magnetic vector potential $\a:\cA\ra\R$  and an electric potential
$\gp:\cV\to \R$ are called $\G$-periodic  on $\cG$, if they satisfy
\[
\lb{per1} \a(\be+\ba)=\a(\be),\qq \gp_{x+\ba}=\gp_x \qq \forall \
(\be,\ba,x)\in \cA\ts\G\ts \cV.
\]
Let us recall the spectral properties  of the Schr\"odinger operator
$h_\a$ studied in \cite{KS17}, \cite{KS23}. We introduce the Hilbert space
\[\lb{Hisp}
\mH=L^2\Big(\T^{d},{dk\/(2\pi)^d}\,,\ell^2(\cV_*)\Big)
=\int_{\T^{d}}^{\os}\ell^2(\cV_*)\,{dk \/(2\pi)^d}\,, \qqq
\T^d=\R^d/(2\pi\Z)^d,
\]
i.e., a constant fiber direct integral, equipped with the norm
$
\|g\|^2_{\mH}=\int_{\T^d}\|g(k)\|_{\ell^2(\cV_*)}^2\frac{dk}{(2\pi)^d}\,,
$
where the function $g(k)\in\ell^2(\cV_*)$ for almost all $k\in\T^d$.
The parameter $k$ is called the \emph{quasimomentum}. We identify
the vertices of the fundamental graph $\cG_*=(\cV_*,\cE_*)$ with the
vertices of the $\G$-periodic graph $\cG=(\cV,\cE)$ in the
fundamental cell $\Omega$.
In \cite{KS17}, we have shown that there
exists a unitary operator $U:\ell^2(\cV)\to\mH$ such that
\[
\lb{raz}
\begin{aligned}
& Uh_\a U^{-1}=\int^\oplus_{\T^d}h_\a(k){dk\/(2\pi)^d},\qq
h_\a(k)=\D_\a(k)+\gp
\end{aligned}
\]
and the fiber Schr\"odinger operator $h_\a(k)$ and the fiber
magnetic Laplacian $\D_\a(k)$ act on $\ell^2(\cV_*)$. Here
$\D_\a(k)$  is some $\n\ts\n$ matrix with entries analytic in
$k\in \C^d$, where $\n:=\# \cV_*\in \N$, $\#M$ being the number
of elements in a set $M$. Each fiber operator $h_\a(k)$ has $\n$
real eigenvalues labeled in non-decreasing order (counting
multiplicities) by
\[
\label{eq.3H} \l_{1}(k)\leq\l_{2}(k)\leq\ldots\leq\l_{\nu}(k), \qqq
\forall\,k\in\T^{d}.
\]
 Each $\l_j(\cdot)$
is a real and piecewise analytic function  on the torus $\T^{d}$ and
gives rise to the \emph{spectral band} $\s_j(h_\a)$ given by
\[\lb{ban.1H}
\s_j(h_\a)=[\l_j^-,\l_j^+]=\l_j(\T^{d}),\qqq j\in\N_\n, \qqq
\N_\n=\{1,\ldots,\n\}.
\]
Then the spectrum of the operator  $h_\a$ has a band structure and
is a union of a finite number of bands given by
\[\lb{spec}
\s(h_\a)= \cup_{j=1}^{\nu}\s_j(h_\a),\qqq
\s_{j}(h_\a)=[\l_{j}^-,\l_{j}^+].
\]
Some of $\l_j(\cdot)$ may be constant, i.e.,
$\l_j(\cdot)=\L_j=\const$, on some subset $\cO$ of $\T^d$ of
positive Lebesgue measure. In this case the operator $h_\a$ on $\cG$
has the eigenvalue $\L_j$ of infinite multiplicity. We call
$\{\L_j\}$ a \emph{flat band}.  Thus, the spectrum of the magnetic
Schr\"odinger operator $h_\a$ on the periodic graph $\cG$ has the
form
\[
\lb{sfg} \s(h_\a)=\s_{ac}(h_\a)\cup \s_{fb}(h_\a),
\]
where $\s_{ac}(h_\a)$ is the absolutely continuous spectrum, which
is a union of non-degenerate bands, and $\s_{fb}(h_\a)$ is the set
of all flat bands $\{\L_j\}$. An open interval between two
neighboring non-degenerate bands is called a \emph{spectral gap}. It
is possible that $\s(h_\a)=\s_{fb}(h_\a)$ for specific
magnetic fields $\a$ and $\gp$, see recent results \cite{KS23}.
 Moreover, it was shown that for any periodic graph and any periodic
vector magnetic potential $\s_{ac}(\D_{g\a}+\gp)=\es$ only for a
finite number of coupling constants $g$ running through any finite
interval. Thus below we assume that $\s_{ac}(h_\a)\ne \es$.

 We discuss the existence and completeness of wave
operators for  $ h(t)=\D_{\b(t)}+V(t)$ in two situations. In the
first case, in Subsection \ref{sccateringintimeperiodc} we consider
Hamiltonians on $\Z^d$ with time-periodic electro-magnetic
potential, decaying at infinity of $\Z^d$, under the conditions
${\rm MZ_s}$ and ${\rm VZ_s}$ with $s = p, a$. In the second case, in
Subsection \ref{sec:1} we consider Hamiltonians on more general
graphs with time-periodic magnetic potential and time-independent
bounded electric potential, and in addition,  time-periodic electric
potential decaying at space infinity, under the conditions M and V. Our main results are Theorem \ref{Two2a}, Theorem \ref{T1} and Corollary \ref{CorGaugetransform}.

\subsection{Scattering in time-periodic systems}
\label{sccateringintimeperiodc}
 We consider two classes of
perturbations for magnetic and electric fields. Let $ \gB_p$ and
$\gB_a$ be the sect of functions $b(x)$ $: \Z^d \to {\bf R}$ such that
\[
\label{dp1}
\begin{aligned}
& b\in \gB_p \Longleftrightarrow  b^2(x)\in \ell^p(\Z^d),\qq \   d\ge 3,  \qq \
\ca 1\le p<{6\/5} & {\it if} \ \ d=3,\\
1\le p<{4\/3} &{\it  if}  \  \ d\ge 4, \ac
 \\
& b\in \gB_a \Longleftrightarrow   (1 + |x|)^a\, b^2(x) \in \ell^\iy(\Z^d), \qq a>1, \qq  d\ge 1.
\end{aligned}
\]
Throughout the paper, we denote a function $q$ on ${\R}\times
{\Z}^d$ as $q = q(t,x) = q_x(t)$. Our first class of magnetic,
electric potentials are assumed to satisfy the following conditions
for either $s =  p$ or $s = a$.

\medskip
\noindent {\bf Condition ${\bf MZ_s}$} : {\it The magnetic potential
$\b(\be,t)$ on $\cE$  is  $\t$-periodic in time, i.e.,
$\b(\be,t)=\b(\be,t+\t)$ for all $(\be,t)\in \cA\ts\T_\t$. Moreover
there exists $b \in \gB_s$ such that
\[
\lb{eg3m}
\begin{aligned}
|\b(\be,t)|\le 2b_xb_y\qqq \forall \ \be=(x,y).
\end{aligned}
\]
}

\noindent {\bf Condition ${\bf VZ_s}$} : {\it The electric potential
$V$ is written as $V =v+q$, where $v,q\in L^\iy(\T_\t\ts \Z^d)$ and
there exists $b\in \gB_s$ such that  letting $Q_x(t) := \int_0^t
q_x(s)ds$,  we have  for all $x\in \Z^d$ and ${\bf e} = (x,y)$
\[
\label{V1z}
\begin{aligned}
|v_x(t)|\le b_x^2,
\end{aligned}
\]
\[
\label{V2z}
\begin{aligned}
& Q_x(\t)=0, \qqq
\sup_{t\in\T_\t}|Q_x(t)|=o(1)\qq \as \qq |x|\to \iy,
\\
& |Q_{y}(t)-Q_x(t)|\le b_yb_x.\qq
\end{aligned}
\]
}

\bigskip
Our first main theorem is as follows. Let $U(t,s)$ be the evolution
operator for $h(t) = \Delta_{\beta (t)} + V(t)$, i.e. $U(t,s)f$ is
the unique solution to the initial value problem
\begin{equation}
i\frac{d}{dt}u(t) = h(t)u, \quad  t \in {\R}, \quad
u(s) = f.
\end{equation}

\begin{theorem} \label{Two2a} Assume that  the magnetic
potential $\b$ and the electric  potential $V = q + v$ of
$h(t)=\D_{\b(t)}+V(t)$ on $\ell^2(\Z^d)$
are $\t$-periodic and satisfy

\noindent (1) either "Conditions ${\rm MZ_p}$  and Conditions ${\rm
VZ_p}$",

\noindent (2) or "Conditions ${\rm MZ_a}$  and Conditions ${\rm
VZ_a}$".

\noindent
 Then the wave operator
\[
\begin{aligned}
\label{wo1}
   W_\pm=s-\lim U(0,t) e^{-it\D}\qqq  as\qqq t\to \pm\iy,
\end{aligned}
\]
 exists and is complete, i.e. $\Ran W_\pm=\cH_{ac}(U(\t,0))$.
\end{theorem}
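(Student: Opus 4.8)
The plan is to remove the zero-average long-range part by a $\t$-periodic gauge transformation and then run Kato's smooth perturbation theory on the associated Floquet Hamiltonian. First I would exploit the hypothesis $Q_x(\t)=0$ from \er{V2z}. Let $Q(t)$ denote multiplication by $Q_x(t)=\int_0^tq_x(s)\,ds$; since $q$ is $\t$-periodic and $Q_x(\t)=0$, the function $Q(t)$ is itself $\t$-periodic with $Q(0)=Q(\t)=0$. The gauge $u\mapsto e^{iQ(t)}u$ then conjugates $h(t)$ into $\wt h(t)=\D_{\wt\b(t)}+v(t)$, where $\wt\b(\be,t)=\b(\be,t)+Q_x(t)-Q_y(t)$ for $\be=(x,y)$: the term $q$ cancels against $\dot Q$, while the phase $e^{i(Q_x-Q_y)}$ is absorbed into the magnetic field. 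By \er{eg3m} and \er{V2z} the modified potential still satisfies $|\wt\b(\be,t)|\le 3b_xb_y$, so $\wt h$ is a purely short-range perturbation of $\D$. Since $Q(0)=Q(\t)=0$ the monodromy is unchanged, $\wt U(\t,0)=U(\t,0)$; and since $\sup_t|Q_x(t)|=o(1)$ as $|x|\to\iy$ by \er{V2z}, the inserted factor $e^{-iQ(t)}e^{-it\D}f$ escapes to spatial infinity on a dense set, so $W_\pm$ coincides with the wave operator for $\wt h$. It therefore suffices to treat $\wt h$.

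I would next write the perturbation $B(t)=\wt h(t)-\D=\D_{\wt\b(t)}-\D+v(t)$ in factored form. Let $G$ be the multiplication operator $(Gf)_x=b_xf_x$ on $\ell^2(\Z^d)$. The bounds $|\wt\b(\be,t)|\le 3b_xb_y$ and $|v_x(t)|\le b_x^2$ from \er{V1z} give $B(t)=G^*M(t)G$ with $\sup_t\|M(t)\|<\iy$. Everything then reduces to the single estimate that $G$ is $\D$-smooth in the sense of Kato,
\[
\lb{smoothkey}
\int_{\R}\|Ge^{-it\D}f\|^2_{\ell^2(\Z^d)}\,dt\le C\|f\|^2_{\ell^2(\Z^d)}.
\]
For $b\in\gB_a$ with $a>1$, \er{smoothkey} is the limiting absorption principle for $\D$ with weight $b\sim(1+|x|)^{-a/2}$, valid because $a/2>1/2$. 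For $b\in\gB_p$ it follows from the dispersive estimate $\|e^{-it\D}\|_{\ell^1\to\ell^\iy}\le C(1+|t|)^{-d/3}$ for the discrete Laplacian (equivalently, the uniform resolvent bounds) by a $TT^*$ argument; the ranges of $p$ in \er{dp1} and the restriction $d\ge3$ are precisely those for which these estimates close, given the degenerate decay rate $d/3$.

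To incorporate the time dependence I would pass to the Floquet Hamiltonians $K_0=-i\pa_t+\D$ and $K=-i\pa_t+\wt h(t)$ on $\mK=L^2(\T_\t;\ell^2(\Z^d))$, for which $(e^{-i\s K_0}\F)(t)=e^{-i\s\D}\F(t-\s)$. The decisive point is that \er{smoothkey} lifts automatically to $K_0$-smoothness of $1\otimes G$: expanding $\F=\sum_ne^{in\o_0t}\f_n$ with $\o_0=2\pi/\t$, the operator $e^{-i\s K_0}$ multiplies the $n$-th mode by $e^{-i\s n\o_0}e^{-i\s\D}$, the scalar phase drops out of the norm, and the time integral over $\mK$ collapses to $\sum_n\int_{\R}\|Ge^{-i\s\D}\f_n\|^2\,d\s\le C\|\F\|^2$. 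Since $K-K_0=(1\otimes G)^*M(t)(1\otimes G)$, Kato--Birman smooth perturbation theory yields existence and completeness of the stationary wave operators $\cW_\pm=s-\lim_{\s\to\pm\iy}e^{i\s K}e^{-i\s K_0}$, provided the limiting absorption principle also holds for $K$.

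Finally I would translate back. By the Howland--Yajima correspondence the evaluation of $\cW_\pm$ on the fiber $t=0$ reproduces $W_\pm$, and the intertwining $W_\pm e^{-i\t\D}=U(\t,0)W_\pm$ (which follows from the $\t$-periodicity relation $U(0,t-\t)=U(\t,t)$) identifies the target as the a.c. subspace of the monodromy; thus completeness of $\cW_\pm$ on $\cH_{ac}(K)$ gives $\Ran W_\pm=\cH_{ac}(U(\t,0))$. The main obstacle I anticipate is the limiting absorption principle for the full Floquet operator $K$, i.e.\ upgrading $K_0$-smoothness of $G$ to $K$-smoothness uniformly up to the spectrum. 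This amounts to inverting $I+M(t)\,(1\otimes G)R_0(\l\pm i0)(1\otimes G)^*$ on $\mK$ for almost every $\l\in\R$, equivalently to excluding embedded singular spectrum of $K$ (time-periodic resonances of the monodromy). It is delicate precisely because the discrete dispersion is degenerate: at the internal thresholds coming from the critical points $\x_j=\pm\pi/2$ of the symbol $\sum_j(1-\cos\x_j)$ the decay is only $t^{-d/3}$ and the limiting-absorption bounds are weakest, so the Fredholm and unique-continuation analysis excluding embedded resonances must be carried out with particular care there.
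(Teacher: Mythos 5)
Your overall strategy --- gauge away $q$ using $Q_x(\t)=0$, factor the remaining short-range perturbation through the weight $b$, pass to the Floquet (quasienergy) Hamiltonian, and finish with stationary scattering theory driven by uniform estimates on the weighted free resolvent --- is the same as the paper's, and your reductions all match: the conjugated magnetic potential $\wt\b(\be,t)=\b(\be,t)+Q_x(t)-Q_y(t)$ again satisfies ${\rm MZ_s}$, the monodromy is unchanged since $Q(\t)=0$, and completeness transfers through the factor $e^{-iQ}-I$ because $\sup_t|Q_x(t)|=o(1)$ makes it compact relative to the free resolvent. The paper's free-resolvent input is the weighted bounds $\|\r_a r_0(\l)\r_a\|\le C$ and $\|b\,r_0(\l)\,b\|_{\cB_2}\le C\|b\|_p^p$ of Theorems \ref{Tsp2} and \ref{Tsp3} (lifted to the Floquet resolvent in Corollaries \ref{Tz2} and \ref{Tz3}), rather than the $\ell^1\to\ell^\iy$ dispersive estimate you quote, but these play the same role as your smoothness bound.

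The genuine gap is in your final step. Kato--Birman smooth perturbation theory yields completeness only if $1\otimes G$ is smooth with respect to the \emph{perturbed} Floquet operator $K$, i.e.\ you need the limiting absorption principle for $K$, equivalently the invertibility of $I+M(t)(1\otimes G)R_0(\l\pm i0)(1\otimes G)^*$ for almost every $\l$. You correctly single this out as the main obstacle, but you leave it unresolved, and it does not follow from the free estimates alone. The paper closes precisely this gap by invoking the Kato--Kuroda theorem (\cite{KK71}, restated as Theorem \ref{TA97}): its hypotheses concern only the \emph{free} resolvent --- compactness of $\gF(\l)=X_b\,bR_0(\l)b$, analyticity in $\C_\pm$, and H\"older continuity up to the boundary away from the thresholds --- and its conclusion automatically provides a closed Lebesgue-null exceptional set $\cE$ off which $I+\gF(\l\pm i0)$ is invertible, together with existence and completeness of the wave operators. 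No unique-continuation or absence-of-embedded-resonances argument is required: embedded point and singular continuous spectrum are simply confined to $\cE$. To complete your proof you should either replace the Kato--Birman smoothness step by this Kato--Kuroda argument, or supply the analytic Fredholm argument yourself, which amounts to reproving that theorem.
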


\begin{remark}
 1) Theorem \ref{Two2a} describes the continuous spectrum of
 the monodromy  operators $U(\t,0)$. The discrete spectrum of $U(\t,0)$
in possible gaps in the continuous spectrum,  embedded eigenvalues
in the continuous spectrum and their numbers are discussed in \cite{K25}.

\no 2) Eigenvalues of $U(\t,0)$ for complex potentials are studied in \cite{K21}.
\end{remark}

As an example,  let us  consider a potential $V = v$ decaying sufficiently rapidly in $x$.

\begin{example}
 Consider the Hamiltonian $h(t)=\D+v(t)$ on $\Z^d, d\ge 1$, where
 $v(t)={A\/ 1+|x+\sin t|^a},  a>1$ and $A=\const$, $|x|^2=x_1^2+...+x_d^2$.
 As the  potetial $v$ satisfies Condition ${\rm VZ_a}$, by Theorem \ref {Two2a},
 the wave operator $W_\pm$ in \er{wo1} exists and is complete.
\end{example}

We can also deal with the potential $V=q$, where $q$ is not decaying but  oscillates sufficiently fast.

\begin{example} Consider the Hamiltonian $h(t)=\D+q_x(t)$ on $\Z^d, d\ge 1$,
where the potential $q_x(t)=A\cos (|x|^{2d}t)$ and $A=\const$,
 $|x|^2=x_1^2+...+x_d^2$. Note that $q$ is only oscillating in $t$ and
 $x$, but its integral
$$
Q_x(t)=\int_0^t q_x(s)ds=   {A\sin (|x|^{2d}t)\/|x|^{2d}}
$$
decays in $x$, since  $Q_x(t)={O(1)\/|x|^{2d}}$ as $|x|\to \iy$.
Thus $q$ satisfies Condition ${\rm VZ_a}$,  and by Theorem \ref{Two2a},
the wave operator $W_\pm$ in \er{wo1} exists and is complete.
\end{example}

\subsection{Scattering in space-time periodic systems}\label{sec:1}
We consider in a more general situation, and study a magnetic Schr\"odinger equation on a graph
$\cG$:
\[
\lb{eg1}
\begin{aligned}
{\frac{ d }{d t}}u(t)= -i h(t) u(t),\qq h(t)=\D_{\b(t)}+\gp+ V(t),
\end{aligned}
\]
where $h(t)$  is a discrete magnetic  Laplacian given by \er{1} and
$\gp:\cV\to \R $ is a bounded electric potential.  Let $\wt
\cH=L^2(\T_\t,\cH)$, $\T_\t=\R/(\t \Z)$. Also let $h_0 = \Delta$ be
the free Laplacian on $\cG$, and $R_0(\lambda) = (h_0 - i
\partial_t - \lambda)^{-1}$ on $\wt \cH$. Let $\wt\cB_\iy$ be the
set of compact operators on $\wt \cH$. We assume that the magnetic
potential  $\b$ is $\t$ -periodic in time and satisfies:

\medskip
\no{\bf Condition M.} {\it The electric potential $\gp$ is
time-independent and in $ \ell^\iy(\cV)$.  The magnetic potential
$\b(\cdot,t)$ on $\ell^\iy(\cE)$ has a form
$\b(\be,t)=\a(\be)+\d(\be,t)$, where $\a$ and $\delta$ are magnetic
potentials and $\d$ is $\t$-periodic in time (i.e.,
$\delta (\be,t) =\d(\be,t+\t)$ for all $(\be,t)\in \cA\ts\T_\t$) and
satisfies:
\[
\lb{eg3}
\begin{aligned}
\int_0^\t\sum_{\be\in \cE}  |\sin \d(\be,t)|dt<\iy.
\end{aligned}
\]
} We impose the following condition  on  the electric  potential
$V(t) = V(t,x)$ in (\ref{eg1}).

\medskip
\no {\bf Condition V.} {\it The function $V=v+q$, where  $v,q$ are
$\t$-periodic and satisfy:
\[
\label{V1}
\begin{aligned}
v\in L^1(\T_\t,\ell^1(\cV)), \qqq q\in L^1(\T_\t,\ell^\iy(\cV)),
\end{aligned}
\]
and the potential $Q_x(t)=\int_0^t q_x(s)ds$ for $(x,t)\in \cV\ts
\R$ satisfies
\[
\label{V3}
\begin{aligned}
Q(\t)=0,\qq  Q\in L^2(\T_\t,\ell^2(\cV)),
\end{aligned}
\]
\[
\label{V4}
\begin{aligned}
  \sum_{\be=(x,y)\in\cA}|Q_{x+\be}(t)-Q_x(t)|\in
L^1(\T_\t,\ell^1(\cV)).
\end{aligned}
\]
}

The second main theorem is as follows.
\begin{theorem}
\lb{T1}
 Let Schr\"odinger operators $h(t)=\D_{\b(t)}+\gp+V(t)$  and
$h_\a=\D_{\a}+\gp$ act on the Hilbert  space $\cH=\ell^2(\cV)$,
where the magnetic potential $\b=\a+ \delta$ satisfies Condition M, the
electric  potential $\gp\in \ell^\iy(\cV)$ is real and the potential
$V$ satisfies Condition V. We assume that $\s_{ac}(h_\a)\neq \es$.
Then the wave operator
\[
\lb{Wo} W_\pm=s-\lim U(0,t)e^{-ith_\a}P_{ac}(h_\a)\qqq \as \ t\to
\pm \iy
\]
 exists and is complete, i.e. $\Ran W_\pm=\cH_{ac}(U(\t,0))$.
 \end{theorem}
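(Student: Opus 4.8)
The plan is to recast the time-periodic problem as a stationary scattering problem via the Floquet--Howland formalism on the extended space $\wt\cH=L^2(\T_\t,\cH)$, and to use that Conditions M and V make the perturbation trace class fibrewise in the time variable. Introduce the Floquet Hamiltonians $K_0=h_\a-i\pa_t$ and $K=h(t)-i\pa_t$ on $\wt\cH$. A direct computation gives $(e^{-isK}f)(t)=U(t,t-s)f(t-s)$, so $e^{-i\t K}$ is decomposable in $t$ with fibres $U(t,t-\t)$, each unitarily equivalent through the propagator $U(t,0)$ to the monodromy operator $U(\t,0)$; moreover $\wt P_{ac}(K_0)=P_{ac}(h_\a)\otimes 1$ and $\wt\cH_{ac}(K)$ corresponds to $\cH_{ac}(U(\t,0))$. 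Hence existence and completeness of $W_\pm$ is equivalent to existence and completeness of the Floquet wave operators $\wt W_\pm(K,K_0)=s\text{-}\lim\, e^{i\theta K}e^{-i\theta K_0}\wt P_{ac}(K_0)$ as $\theta\to\pm\iy$, whose fibre at $t=0$ is precisely $W_\pm$. The assumption $\s_{ac}(h_\a)\ne\es$ guarantees $\wt\cH_{ac}(K_0)\ne\{0\}$, so the statement is nonvacuous.

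The second step is a gauge transformation removing the non-integrable oscillating part $q$. Set $\cU(t)=e^{iQ(t)}$, the diagonal multiplication by $e^{iQ_x(t)}$; by \er{V3} we have $Q(\t)=0$, so $\cU$ is $\t$-periodic with $\cU(0)=\cU(\t)=I$. The transformed Hamiltonian is $\tilde h(t)=\cU(t)h(t)\cU(t)^{-1}+i(\pa_t\cU(t))\cU(t)^{-1}$; since $\dot Q=q$ the gauge term equals $-q$ and cancels the electric part $q$, while the magnetic potential turns into $\tilde\b(\be,t)=\b(\be,t)+Q_x(t)-Q_y(t)$ for $\be=(x,y)$, so $\tilde h(t)=\D_{\tilde\b(t)}+\gp+v(t)$. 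As $\cU(0)=\cU(\t)=I$, the monodromy is unchanged and the wave operators of the gauged system coincide with those for $h(t)$. After this reduction the perturbation $W(t)=\D_{\tilde\b(t)}-\D_\a+v(t)$ is, for a.e.\ $t$, of trace class on $\cH$, with $\int_0^\t\|W(t)\|_{\gS_1}dt\le\int_0^\t\big(\|v(t)\|_{\ell^1(\cV)}+\sum_{\be=(x,y)\in\cA}(2|\sin\d(\be,t)|+|Q_x(t)-Q_y(t)|)\big)dt<\iy$, finite by \er{V1}, Condition M \er{eg3} and \er{V4}.

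With a trace-norm--integrable $W$, I would run the smooth/stationary scheme on $\wt\cH$. Fourier analysis in $t$ gives $K_0=\Oplus_{n\in\Z}(h_\a+\o_n)$ with $\o_n=2\pi n/\t$, hence $\s_{ac}(K_0)=\ol{\bigcup_{n}(\s_{ac}(h_\a)+\o_n)}$, and the limiting absorption principle for $h_\a$ on the periodic graph---expressed through the band functions $\l_j$ and the band edges $\l_j^\pm$ of \cite{KS17},\cite{KS23}---lifts to a weighted LAP for $K_0$ off the threshold set $\cT=\{\l_j^\pm+\o_n\}$. Factor $W(t)=B(t)^*A(t)$ with $A,B$ Hilbert--Schmidt--valued and $\int_0^\t(\|A(t)\|_{\gS_2}^2+\|B(t)\|_{\gS_2}^2)dt<\iy$; the induced fibre multiplications $\wt A,\wt B$ on $\wt\cH$ are then $K_0$-smooth. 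Since $\wt A(K_0-\l)^{-1}\wt B^*\in\wt\cB_\iy$ and is analytic in $\l$ off $\R$, the analytic Fredholm theorem yields the LAP for the gauged Floquet Hamiltonian $K=\D_{\tilde\b(t)}+\gp+v(t)-i\pa_t=K_0+\wt B^*\wt A$ off a discrete set, so $\wt B$ is $K$-smooth and $K$ has empty singular continuous spectrum. Kato's smooth perturbation theory then gives that $\wt W_\pm(K,K_0)$ exist and are complete, i.e.\ $\Ran\wt W_\pm=\wt\cH_{ac}(K)$.

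Finally transfer back: undoing the gauge (a unitary equal to $I$ at $t=0$) and reading the $t=0$ fibre of $\wt W_\pm$ through the correspondence of the first paragraph yields existence of $W_\pm=s\text{-}\lim U(0,t)e^{-ith_\a}P_{ac}(h_\a)$ as $t\to\pm\iy$, with the $K_0$-smoothness furnishing the uniform-in-$t$ control that upgrades a.e.-in-$t$ existence to the single fibre $t=0$; completeness of $\wt W_\pm$ becomes $\Ran W_\pm=\cH_{ac}(U(\t,0))$. As a consistency check, the Duhamel formula with the same bound gives $U(\t,0)-e^{-i\t h_\a}\in\gS_1$, which already yields completeness of the discrete-time monodromy wave operators by the Kato--Rosenblum theorem for unitaries; the continuous-time statement, however, genuinely needs the Floquet smoothness to control all intermediate times. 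I expect the real difficulty in the stationary step: the LAP for $K_0$ and, above all, the $K$-smoothness of $\wt B$ near thresholds, because after the shifts by $\o_n$ the set $\cT$ of band edges may be dense in $\R$. It is precisely the trace-norm integrability from Conditions M and V---stronger than the bare Hilbert--Schmidt bound---that makes $\wt A(K_0-\l)^{-1}\wt B^*$ compact and closes the Fredholm argument, securing completeness despite a possibly dense threshold set.
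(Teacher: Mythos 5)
Your first two steps track the paper's proof closely: the Floquet/Howland reformulation on $\wt\cH=L^2(\T_\t,\cH)$ and, crucially, the gauge transformation $e^{\pm iQ(t)}$ that cancels the non-decaying part $q$ and converts it into a magnetic shift $Q_x-Q_y$, after which Conditions M and V (via \er{eg3} and \er{V4}) make the residual perturbation lie in $L^1(\T_\t,\cB_1)$. This is exactly the paper's reduction. But your third step then takes the wrong tool, and this is a genuine gap, not a stylistic difference. You propose to finish by a stationary scheme: a limiting absorption principle for $K_0=h_\a-i\pa_t$ via band functions, a Hilbert--Schmidt factorization $W=B^*A$, $K_0$-smoothness of $\wt A,\wt B$, and analytic Fredholm theory. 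None of this is available in the stated generality. Theorem \ref{T1} is for an arbitrary graph with $\vk_+<\iy$ and an arbitrary bounded real $\gp$ (periodicity is only assumed in Corollary \ref{CorGaugetransform}; the paper explicitly mentions trees), so there are no band functions and no LAP for $h_\a$ to lift. Even in the periodic case you yourself note that the thresholds $\{\l_j^\pm+\o n\}$ may be dense in $\R$, and a Hilbert--Schmidt factorization does \emph{not} imply $K_0$-smoothness without a uniform resolvent bound --- which is precisely the LAP you do not have. Your Fredholm argument therefore does not close.

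The fix is the one you relegate to a ``consistency check'' and then wrongly dismiss: once the gauged perturbation is in $L^1(\T_\t,\cB_1)$, the trace-class \emph{time-dependent} theory (Schmidt \cite{S75}, Howland \cite{H79}; Theorem \ref{THS} in the paper) gives existence and completeness of $s\text{-}\lim \ul U(0,t)e^{-ith_\a}P_{ac}(h_\a)$ directly, with no LAP, no smoothness, and no threshold analysis --- it rests on the Kato--Rosenblum/Pearson mechanism, and it does control all intermediate times, contrary to your remark that only the monodromy statement follows. A second, smaller gap: you assert that the gauged and ungauged systems have the same wave operators because $\cU(0)=\cU(\t)=I$. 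Equality of the monodromy operators follows from $\cU(\t)=I$, but equality of the wave operators requires $\|(\cU(t)^{-1}-I)e^{-ith_\a}P_{ac}(h_\a)f\|\to 0$ as $t\to\pm\iy$, which is not automatic; the paper proves it in Lemma \ref{TJ} (estimate \er{J2}) using $Q\in L^2(\T_\t,\ell^2(\cV))$ from \er{V3} and a RAGE-type argument. You need to supply that step as well.
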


We apply Theorem  \ref{T1} to periodic graphs with periodic magnetic
vector potentials $\a:\cA\ra\R$  and periodic electric potentials
$\gp:\cV\to \R$. Recall the spectral structure \er{sfg}.

\begin{corollary}
\label{CorGaugetransform}
\lb{T1c} Let $\cG=(\cV, \cE)$ be a periodic graph.
 Let $h(t)=\D_{\b(t)}+\gp+V(t)$ act on the Hilbert  space
$\cH=\ell^2(\cV)$, where the magnetic potential $\b=\a+\delta$
satisfies Condition M, the potential $\gp\in \ell^\iy(\cV)$ and the
potential $V$ satisfies Condition V. We assume that the Schr\"odinger
operator $h_\a=\D_{\a}+\gp$, where the magnetic potential $\a$ and
the electric potential $\gp$ are  real $\G$-periodic, satisfies
$\s_{ac}(h_\a)\neq \es$. Then the wave operator
\[
\lb{WoC} W_\pm=s-\lim U(0,t)e^{-ith_\a}P_{ac}(h_\a)\qqq \as \ t\to
\pm \iy
\]
 exists and is complete, i.e. $\Ran W_\pm=\cH_{ac}(U(\t,0))$.
 \end{corollary}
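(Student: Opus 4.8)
The plan is to read Corollary~\ref{T1c} as the specialization of Theorem~\ref{T1} to a $\G$-periodic graph. Conditions M and V constrain only the time-dependent data $\d,v,q$ and are imposed here verbatim; the additional hypotheses are merely that $\cG$, $\a$ and $\gp$ are $\G$-periodic; and the single spectral input of Theorem~\ref{T1} is $\s_{ac}(h_\a)\neq\es$, which is assumed. So it suffices to check that the periodic structure supplies this input in a concrete form, to reduce the time-dependence to the short-range situation of Theorem~\ref{T1}, and then to invoke that theorem.

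First I would feed the reference operator $h_\a=\D_\a+\gp$ into the Floquet--Bloch decomposition \er{raz}, with fibers $h_\a(k)=\D_\a(k)+\gp$ on $\ell^2(\cV_*)$ and analytic eigenvalues $\l_1(k)\le\dots\le\l_\n(k)$. This produces the band picture \er{sfg}, $\s(h_\a)=\s_{ac}(h_\a)\cup\s_{fb}(h_\a)$, in which $\s_{ac}(h_\a)$ is the union of the non-degenerate bands $\s_j(h_\a)=[\l_j^-,\l_j^+]$ and $P_{ac}(h_\a)$ is the spectral projection onto them. The standing assumption $\s_{ac}(h_\a)\neq\es$ then forces $P_{ac}(h_\a)\neq0$, so the comparison dynamics $e^{-ith_\a}P_{ac}(h_\a)$ in \er{WoC} is non-trivial and is exactly the object governed by Theorem~\ref{T1}.

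Next I would strip off the genuinely time-dependent part by the gauge transformation that names the corollary. Since $Q(\t)=0$ by \er{V3} and $Q_x(0)=0$, the multiplier $(\cU(t)f)_x=e^{-iQ_x(t)}f_x$ is a $\t$-periodic unitary with $\cU(0)=\cU(\t)=I$, so it descends to $\wt\cH=L^2(\T_\t,\cH)$ and respects the Floquet structure. Conjugating $h(t)$ by $\cU(t)$ produces $\wt h(t)=\D_{\b'(t)}+\gp+v(t)$ with $\b'(\be,t)=\b(\be,t)+Q_x(t)-Q_y(t)$ for $\be=(x,y)$: the oscillating part $q$ is cancelled because $\pa_t(-Q_x)=-q_x$, the added link terms are summable by \er{V4}, the magnetic defect $\sin\d$ by \er{eg3}, and $v\in L^1(\T_\t,\ell^1(\cV))$ by \er{V1}, so $\wt h$ is the integrated short-range perturbation of $h_\a$ treated in Theorem~\ref{T1}. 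Writing $\wt U(t,s)$ for the evolution of $\wt h$, one has $U(t,s)=\cU(t)\wt U(t,s)\cU(s)^{-1}$, hence $U(0,t)=\wt U(0,t)\cU(t)^{-1}$ and $U(\t,0)=\wt U(\t,0)$, so the target $\cH_{ac}(U(\t,0))$ is intrinsic.

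I expect the main obstacle to lie in the two points that are genuinely particular to the corollary, everything analytically hard being inherited from Theorem~\ref{T1}. The first is confirming that the gauge factor $\cU(t)^{-1}$ is asymptotically trivial on $\Ran P_{ac}(h_\a)$: for $\phi\in\Ran P_{ac}(h_\a)$ the reference motion $e^{-ith_\a}\phi$ escapes to spatial infinity, where $Q$ decays by \er{V3}, whence $(\cU(t)^{-1}-I)e^{-ith_\a}P_{ac}(h_\a)\to0$ strongly and $W_\pm$ for $h$ and for $\wt h$ coincide. The second is that the periodic reference $h_\a$ must supply whatever spectral regularity Theorem~\ref{T1} demands of the comparison operator relative to the free model $R_0(\l)=(h_0-i\pa_t-\l)^{-1}$, $h_0=\D$; here the Floquet--Bloch picture is decisive, since the band functions $\l_j(\cdot)$ are analytic on the compact torus $\T^d$ and $\#\cV_*<\iy$, so the degeneracies (band edges where $\nabla_k\l_j=0$, and the flat bands $\s_{fb}(h_\a)$) form a closed Lebesgue-null set off which $h_\a$ on $\Ran P_{ac}(h_\a)$ behaves like the free model. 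Granting these, the existence and completeness $\Ran W_\pm=\cH_{ac}(U(\t,0))$ follow directly from Theorem~\ref{T1}.
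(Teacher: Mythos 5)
Your proposal is correct and matches the paper's treatment: the paper gives no separate argument for Corollary \ref{CorGaugetransform}, since its hypotheses are exactly those of Theorem \ref{T1} (Conditions M and V together with $\s_{ac}(h_\a)\neq\es$), and the conclusion is read off directly from that theorem. Your additional re-derivation of the gauge transformation and of the asymptotic triviality of $\cU(t)^{-1}$ on $\Ran P_{ac}(h_\a)$ merely duplicates steps already contained in the proof of Theorem \ref{T1} and is not needed at the level of the corollary.
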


As is well-known, for time-independent potentials $V(x)$, there is a
sharp border line of the decay rate for the existence and
completeness of wave operators. The situation is more complicated
for time-dependent perturbations. For the sake of comparison, in
Theorem 6.1, we discuss the case of electric and magnetic potentials
decaying $t$ and $x$. It is also well-known that in some cases
strongly oscillating perturbations (in $t$ or $x$) form an
exceptional class. We shall also discuss them.

Although the above theorems are apparently similar, the method of
proof is different. Theorem \ref{Two2a} is proven by Kato-Kuroda's
stationary scattering theory. For the proof of Theorems \ref{T1}, we
use time-dependent scattering theory, especially Howland's theory of
time-dependent Hamiltonian and the gauge transform.

We can also apply Theorem \ref{T1} to different graphs, for example,
to trees in which the corresponding Laplacian has an absolutely continuous spectrum. Such
Laplacians on trees are studied by Aptekarev,
 Denisov, Yattselev in their paper \cite{ADY20}.

Different properties of
Schr\"odinger operators with periodic potentials  on periodic graphs
were considered in \cite{KS14, MW89}, see also
references therein. Schr\"odinger operators with periodic magnetic
and electrostatic potentials  are discussed in \cite{KS17, KS23}.

\subsection{Related works}
We briefly describe results about Schr\"odinger operators with
decaying potentials on periodic graphs. The scattering problem for
the Schr\"odinger operator with a decaying potential on the lattice
$\Z^d , d > 1$, was considered in the papers \cite{BS99, IK12, IM14,
KM19}, see also references therein. The discrete spectrum was
discussed in \cite{BSL18,DHKS,KS20,RoS09}. The discrete Schr\"odinger
operator with decaying potentials on arbitrary periodic graphs was
studied in \cite{KS20}, \cite{PR18}. The Schr\"odinger operator with
a potential periodic in some directions and finitely supported in
others on arbitrary periodic graphs was investigated in the article
\cite{KS17}. Trace formulas and eigenvalues estimates  for discrete Schr\"odinger operators
were discussed in \cite{IK12, KL18, K17}.

 Scattering theory for discrete Schr\"odinger operators on graphs with
long-range potentials were discussed in \cite{N14, Ta19,Ta19m,
Ta20}. Inverse scattering theory for the discrete Schr\"odinger
operators with finitely supported potentials was considered in
\cite{IK12} for the case of the lattice $Z^d$ and in \cite{A13} for
the case of the hexagonal lattice.

While there is a lot of results on
time periodic Hamiltonians on $\R^d, d\ge 1$,  \cite{AK19,AKS10,H74,H79,K84,K85,K89,M00,S75,Y77}
 including the time periodic magnetic fields \cite{AK16,K19,K89},
 there are only few results about time periodic Hamiltonians on the graphs \cite{K21,K25,OP18}.

\subsection{Acknowledgements}
H. I. is indebted to the support by Grant-in-Aid for Scientific Research (C) 24K06768.

\section{Propagators} \setcounter{equation}{0}
\subsection{Quasienergy operators}

Let $\mathcal H$ be a separable Hilbert space.
 Throughout the paper
by a {\it propagator} we mean a two-parameter family of unitary
operators $U(t,s)$, $t, s \in {\R}$, acting on $\mathcal H$ and
satisfying the following conditions:

\medskip
{\it
$\bu$ $U(t,s)U(s,r)=U(t,r)$ for all $t,s,r\in \R$,

$\bu$ $U(t,t)=I$ for all $t\in \R$,

$\bu$ $U(t,s)$ is strongly continuous in $t,s\in \R$.}


\medskip

Now we discuss the scattering for time periodic Hamiltonian
$h(t), t\in \R$. We assume that
$h(t)$ is $\t$-periodic in time, i.e., $h(t + \t) = h(t)$ for any $t\in \R$.
Letting $\T_\t=\R/(\t \Z)$, we introduce the
space $\wt \cH=L^2(\T_\t,\cH)$ of  $\cH$-valued functions ${\mathbb R} \ni t \to f(t)$ that are
$\t$--periodic in $t$ equipped with norm
$$
\|f\|^2_{\ti
\cH}={1\/\t}\int_0^\t \|f(t)\|^2_{\cH}dt.
$$
For the propagator
$U(t,s)$ on $\cH$, the mapping
$$
\widetilde U(s) :\wt\cH\ni f\to (\widetilde U(s)f)(t) := \,
U(t,t+s)f(t+s),\qq (t,s)\in \T_\t\ts \R
$$
defines a strongly continuous group on $\wt\cH$  (with respect to the parameter $s\in \R$). By
Stone's theorem,  this group defines a self-adjoint  operator $\wt
h$ acting on $\wt\cH$ as
\[
\begin{aligned}
  \label{U2}
   (e^{is \wt h}f)(t)=U(t,t+s)f(t+s), \qq f\in \wt\cH.
\end{aligned}
\]
We call such an operator $\wt h$ a {\it  quasienergy  operator} and the
spectrum of $\wt h$ the {\it  quasienergies}. We consider the case
when $U(t,s)$ is the propagator for a family of $\t$-periodic
self-adjoint operators $h(t)$. In the case when $h(t)=h_0+V(t), t\in
\R$, $U(t,s)$ is constructed by the standard method of iteration for
the integral equation (see \er{p2} and \cite{H79}). We introduce a
notation. For a $t$-dependent, $\tau$-periodic  bounded operator
$A(t)$ on $\mathcal H$, we define the operator $\langle A(t)\rangle
$ on $\wt\cH $ by
\begin{equation}
\langle A(t)\rangle : \wt\cH \ni f(t) \to A(t)f(t) \in \wt\cH.
\end{equation}
For the propagator $U(t,s)$,  the monodromy operator is defined by
\begin{equation}
M(t)=U(t+\t,t), \qqq t\in \R.
\label{M(t)monodromyoperator}
\end{equation}
The above conditions for the propagator yield
\begin{equation}
M(t)=U(t,0)U(\t,0)U(0,t).
\end{equation}
Evaluating at $s=-\t$ in (\ref{U2}), as $f \in \widetilde{\mathcal
H}$ is $\tau$-periodic, we have
\[
  \label{U3}
  e^{-i\t\wt h}=\langle M(t)\rangle.
\]
 Let $\pa=-i\frac{\pa}{\pa t}$ be the
self-adjoint operator in $L^2(\T_\t)$. The same notation
$\pa=-i\frac{\pa}{\pa t}$ is used for the corresponding operator in
$\wt \cH$ with the natural domain $\mD(\pa)$.

\begin{example}
For a bounded self-adjoint operator  $h_0$ on $\cH$, define $\wt
h_0=\pa +h_0$ on $\wt\cH$. Then $\wt h_0$ is the quasienergy
operator associated with the propagator $U_0(t,s)=e^{-i(t-s) h_0}$,
 which is $\tau$-periodic for any $\tau > 0$,  and satisfies
\[
\begin{aligned}
  \label{U4}
   (e^{is \wt h_0}f)(t)=U_0(t,t+s)f(t+s)=e^{is h_0}f(t+s), \qq f\in \wt\cH,
\qq t,s\in \R.
\end{aligned}
\]
Furthermore, we have:
\begin{itemize}
\item
 If  $h_0=0$, then  $U_0(t,s)=I$, $\wt
h_0=-i{\pa\/\pa t}$, and $e^{is \wt h_0}$ is the group of
translation $(e^{is \wt h_0}f)(t)=f(t+s)$.
\item
If we know the spectrum $\s(h_0)$, then the spectrum of $\wt h_0$
has the form
\[
\s(\wt h_0)=\bigcup_{n\in \Z} \s(h_0+\o n),\qqq \o={2\pi\/\t}.
\]
Note that if $\o$ is  big enough and if $\s(h_0)$ is a finite
interval, then the spectrum of $\wt h_0$ has the band structure with
the bands $\s(\D+\o n)$ separated by gaps.
\end{itemize}
\end{example}

\medskip
Let $\cB_1$ and $\cB_2$ be the trace class and the Hilbert-Schmidt
class of operators in $\cH$, equipped with  norm $\|\cdot
\|_{\cB_1}$ and $ \|\cdot \|_{\cB_2}$, respectively. Let $\cB_\iy$
be the class of compact operators in $\cH$. The corresponding
classes of operators acting in $\wt\cH$ are denoted by $\wt\cB_1,
\wt\cB_2, \widetilde{\cB}_\iy.$

Given a bounded self-adjoint operator  $h_0$ on $\cH$ and $\wt
h_0=\pa +h_0$ on $\wt\cH$, we introduce the free resolvent
$R_0(\l)=(\wt h_0-\l)^{-1}, \l\in \C_\pm$. We will show that if
$V=\lan V(t) \ran$ satisfies Condition V in (\ref{V1})-(\ref{V4}) or
Condition H in (\ref{v3})-(\ref{v5x}), then
\[
\lb{1.c} VR_0(\l)\in \wt\cB_\iy, \qq \forall \ \l\in \C_\pm.
\]
This yields  $\wt h=\wt h_0+V$ is self-adjoint on $\mD(\wt
h)=\mD(\wt h_0)$. Define the perturbed resolvent $R(\l)=(\wt
h-\l)^{-1}, \l \in \C_\pm$.
We have the useful identity
\[
\lb{i1} \lan e^{it\o}\ran R(\l)\lan e^{-it\o}\ran  =R(\l+\o),\qq
\forall \l\in \C_\pm.
\]
It means that the spectrum of $\wt h$ (and $\wt h_0$)  is
$\o$--periodic.

\bigskip
We recall the well-known results for the  quasienergy
operators, see e.g., \cite{H74, H79}.
Let $U(t,s)$ be the propagator for a $\t$-periodic family of
operators $h(t)$, and $\wt h$ the corresponding quasienergy
operator. Then

{\it
\begin{itemize}
\item
 The following identities hold true:
\[
\lb{2.5} P_{r}(\wt h)=\langle P_{r}(M(t))\rangle,\qq {\rm for}  \ \
r ={ac}, \ {pp}, \ {sc}, \ {c}, \ {fb},
\]
where $M(t)$ is the monodromy operator defined in
(\ref{M(t)monodromyoperator}) and $P_{r}(H)$ is the projection
associated with the spectral subspace for a self-adjoint operator
$H$ with ac = absolutely continuous, pp = pure point, sc = singular
continuous, c = continuous and fb = flat band.

\item
 If  $\p$ is the eigenvector of the quasienergy operator $\wt
h$, i.e.,  $\wt h\p=\l \p$ for some $\l\in\R$, then $\p$ is strongly
continuous in $t\in \R$ and satisfies
\[
\begin{aligned}
\label{u1x}
  \p(t)=e^{it \l }U(t,0)\p(0), \qqq t\in \R,
\end{aligned}
\]
 and consequently,  $\p(t)$ is the eigenvector of the monodromy operator:
$$
M(t)\p(t)=e^{-i\t\l }\p(t),\qq t\in \T.
$$
Moreover, the vector  $\p_n=\langle e^{i\o n t}\rangle\p$, where
$\o={2\pi\/\t}$,  is the eigenvector of the  operator $\wt h$ and
\[
\label{u1xcc}
 \wt h\p_n=(\l+\o n )\p_n,
\]
where $
n\in  \Z$ and $0\le \l<\o$.
 Any eigenvector  of $\wt\cH_{pp}(\wt h)$ has this  form $\p_n$.  Similarly any vector  in
$\wt\cH_{pp}(M(t)), t\in \T$, has  the  form $\p(t)$ of
\eqref{u1xcc}, where $0\le \l<\o$ and  $\wt h\p=\l\p$.

\item
 Let $n\in \Z$. Then
\[
\lb{2.7} \wt h \langle e^{i\o n t}\rangle =\langle e^{i\o n
t}\rangle (\wt h+\o n).
\]
\item
 Let $U_0(t,s)$ be the propagator for a $\t$-periodic family of
operators $h_0(t)$, and let $\wt h_0$ be the corresponding
quasienergy operator. Define the wave operators by
\[
\lb{Wo12}
\begin{aligned}
W_\pm=s-\lim U(0,s)e^{-ish_0}P_{ac}(h_0),
\\
\wt W_\pm=s-\lim e^{is \wt h} e^{-is \wt h_0}P_{ac}(\wt h_0)
\end{aligned}
\]
as  $s\to \pm \iy$. Then $W_\pm$ exists and is
complete, i.e. $W_\pm \cH=\cH_{ac}(U(\t,0))$ if and only if
$\wt W_\pm$ exists and is complete, i.e. $\wt W_\pm\wt \cH=\cH_{ac}(\wt
h)$.
\end{itemize}
}

\subsection{Operators on graphs}
We consider some properties of operators on graphs  $\cG=(\cV,\cE)$.
Define the space $\ell^p(\cV), p\ge 1$ of all  sequences
$f=(f_x)_{x\in \Z^d}$ equipped with  norm
$$
\begin{aligned}
\|f\|_{p}=\|f\|_{\ell^p(\cV)} =\big(\sum_{x\in
\cV}|f_x|^p\big)^{1\/p},\qq  \ p\in [1,\iy),
\end{aligned}
$$
and $\|f\|_{\iy}=\|f\|_{\ell^\iy(\cV)} =\sup_{x\in \cV}|f_x|$. For a
Banach space $\gB$, the space $\ell^p(\gB)$ is defined similarly
with $\cV$ replaced by $\gB$. Let $L^r(\T_\t,\gB)$ be the space of
$\gB$-valued $L^r$ functions on $\T_\t$. In the case $f(\cdot)\in
L^r(\T_\t,\ell^p(\cV))$ we define  the norm $\|f\|_{r,p}$ by
\[
\|f\|_{p,r}^r=\int_{\T_\t}\|f(t)\|_{\ell^p(\cV)}^rdt, \qq p,r \ge 1.
\]
Note that $\|f\|_{p,r}\le \|f\|_{s,r}$ for all $p\ge s\ge 1$.  For
$\cH=\ell^2(\cV)$, the space $\ti \cH=L^2(\T_\t,\cH)$ can be
realized as $\ell^2(\cH)$ via the Fourier transform $\F: \ti \cH\to
\ell^2(\cH)$ defined by
$$
\begin{aligned}
f\to \F f=(f_n)_{n\in \Z}, \qq  f_n=(\F f)_n={1\/\sqrt \t}\int_0^\t
e^{-int\o}f(t)dt,\qq \o={2\pi\/\t},\qq  f\in \ti \cH.
\end{aligned}
$$
  We use the notation  $\lan A(t)\ran$ to denote the operator of
multiplication  by $A(t)$ on $\wt\cH$. Introduce the operators $\wt
h_0$ and $\wt h$ on $\wt\cH=L^2(\T_\t, \ell^2(\cV))$ by
$$
\wt h_0=\pa+\D,\qqq  \wt h=\wt h_0 +\lan V(t) \ran,
$$
where $\Delta$ is the Laplacian (\ref{DefineDeltaforV}) on $\mathcal
V$.  The spectrum of $\wt h_0$ has the form
\[
\s(\wt h_0)=\s_{ac}(\wt h_0)=\bigcup_{n\in \Z} \s(\D+\o
n)=\bigcup_{n\in \Z} [\s(\D)+\o n].
\]
 For a function $f_x, x\in \cV,$  we have an
estimate
\[
\lb{ii}
\begin{aligned}
 \sum_{x\in\cV}
\sum_{\be=(x,y)\in\cA}|f_y|^2\le \vk_+\|f\|_{\ell^2(\cV)}^2,
\end{aligned}
\]
Consider time periodic magnetic perturbations with magnetic
potential $\b=\a+\d(\be,t)$.

\begin{lemma} \label{Tmp}
Let $F(t)=\D_{\b(t)}-\D_{\a},  t\in\T_\t$ with
magnetic potential $\b(\be,t)=\a(\be)+\d(\be,t)$.

\no i) If $
 |\d(\be,t)|\le b_xb_y$ for all  $\be=(x,y)\in \cA$,
 for some $b\in \ell^p(\cV), p\ge 1$. Then
\[
\lb{mp1x}  F(t)=b F_b(t)b,\qqq  \|F_b(t)\|_{\cB}\le \vk_+.
\]
\no ii) If $\d(\be,t)$  satisfies Condition M. Then
\[
\lb{mp1V}   F\in L^1(\T_\t,\cB_1).
\]

\end{lemma}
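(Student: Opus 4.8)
The plan is to begin by writing $F(t)$ out explicitly. Since the diagonal terms of $\D_{\b(t)}$ and $\D_\a$ in \er{1} coincide (they carry no magnetic phase), the difference retains only the hopping terms, and one finds
\[
(F(t)f)(x)={1\/2}\sum_{\be=(x,y)\in\cA}e^{i\a(\be)}\big(1-e^{i\d(\be,t)}\big)f_y,\qq x\in\cV,
\]
so $F(t)$ is the Hermitian matrix operator on $\ell^2(\cV)$ with entries $F(t)_{xy}={1\/2}\sum_{\be=(x,y)}e^{i\a(\be)}(1-e^{i\d(\be,t)})$. Everything is then driven by the elementary identity $|1-e^{i\d(\be,t)}|=2|\sin{\d(\be,t)\/2}|\le|\d(\be,t)|$ together with $\d(\ul\be,t)=-\d(\be,t)$, which makes these moduli symmetric under reversal of orientation.

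For part i) I would take $b$ to denote the diagonal operator of multiplication by $(b_x)_{x\in\cV}$ and define $F_b(t)$ through $(F_b(t))_{xy}=F(t)_{xy}/(b_xb_y)$, with the convention that the entry is $0$ whenever $b_xb_y=0$; this is consistent, since $|\d(\be,t)|\le b_xb_y=0$ then forces $1-e^{i\d(\be,t)}=0$. By construction $F(t)=bF_b(t)b$. The hypothesis $|\d(\be,t)|\le b_xb_y$ together with the identity above bounds the contribution of each oriented edge to $(F_b(t))_{xy}$ by ${1\/2}$, whence the row sum $\sum_y|(F_b(t))_{xy}|\le{1\/2}\vk_x\le{1\/2}\vk_+$ (there being $\vk_x$ oriented edges issuing from $x$, by \er{kappaxdefine}), and likewise for the columns by the orientation symmetry. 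A Schur test with weight $\equiv1$ then bounds $\|F_b(t)\|_{\cB}$ by the geometric mean of the row- and column-sum suprema, giving the claimed $\|F_b(t)\|_{\cB}\le\vk_+$.

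For part ii) the natural route is to split $F(t)$ into its rank-one edge pieces,
\[
F(t)=\sum_{\be=(x,y)\in\cA}{1\/2}\,e^{i\a(\be)}\big(1-e^{i\d(\be,t)}\big)\,|e_x\ran\lan e_y|,
\]
each summand having trace norm ${1\/2}|1-e^{i\d(\be,t)}|$. Summing and collapsing the two orientations yields the pointwise bound $\|F(t)\|_{\cB_1}\le\sum_{\be\in\cE}|1-e^{i\d(\be,t)}|=2\sum_{\be\in\cE}|\sin{\d(\be,t)\/2}|$. After checking joint measurability of $t\mapsto F(t)\in\cB_1$, integrating over $\T_\t$ and invoking Condition M \er{eg3} would give $F\in L^1(\T_\t,\cB_1)$, provided the right-hand side is controlled by $\int_0^\t\sum_{\be\in\cE}|\sin\d(\be,t)|\,dt$.

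The main obstacle is precisely this last comparison. Since $|\sin\d|=2|\sin{\d\/2}|\,|\cos{\d\/2}|$, the quantity $|\sin{\d\/2}|$ that controls the trace norm is comparable to $|\sin\d|$ only where $\cos{\d\/2}$ is bounded away from $0$, i.e.\ where $\d(\be,t)$ stays away from odd multiples of $\pi$ — which is the regime implicitly in force here, since on a disjoint union of single edges with $\d\to\pi$ one sees that Condition M may hold while the trace norm blows up. On all but finitely many edges $\d$ is small, so there $|1-e^{i\d}|\le C|\sin\d|$, and the remaining edges contribute a bounded amount; pinning down this step uniformly in $t$ (and the measurability) is where the real work lies, the rest being the bookkeeping of the rank-one sums and the Schur test.
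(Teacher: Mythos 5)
Your part i) is correct and follows essentially the paper's route: the paper starts from the same identity $(F(t)f)_x=-i\sum_{\be=(x,y)\in\cA}e^{i\a(\be)+i\d(\be,t)/2}\,\sin\tfrac{\d(\be,t)}{2}\,f_y$, uses $|b_x^{-1}\sin\tfrac{\d(\be,t)}{2}\,b_y^{-1}|\le 1$, and closes with Cauchy--Schwarz together with the estimate \er{ii} rather than your Schur test; both yield $\|F_b(t)\|_{\cB}\le\vk_+$ (your version in fact gives the sharper constant $\vk_+/2$). The only cosmetic difference is that the paper tacitly assumes $b_x>0$ so that it can write $b_x^{-1}$, whereas you handle the degenerate case $b_xb_y=0$ explicitly.

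For part ii) your derivation again coincides with the paper's up to the decisive point: both of you sum the rank-one edge contributions to get $\|F(t)\|_{\cB_1}\le\sum_{\be\in\cA}\bigl|\sin\tfrac{\d(\be,t)}{2}\bigr|$. The obstacle you then flag is genuine, and you should be aware that the paper does not resolve it --- its proof stops at exactly this inequality and asserts ``which yields \er{mp1V}'', silently comparing $|\sin\tfrac{\d}{2}|$ with the quantity $|\sin\d|$ appearing in Condition M \er{eg3}. As you note, no such comparison holds near $\d=\pi$: taking $\d(\be,t)\equiv\pi$ on every edge of an infinite graph makes $\sin\d\equiv 0$, so \er{eg3} is satisfied, while $(F(t)f)_x=\sum_{\be=(x,y)\in\cA}e^{i\a(\be)}f_y$ is a bounded operator of infinite rank, hence not in $\cB_1$ and a fortiori not in $L^1(\T_\t,\cB_1)$. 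So the statement is false under Condition M as literally written; the hypothesis is evidently intended to be $\int_0^\t\sum_{\be\in\cE}|\sin\tfrac{\d(\be,t)}{2}|\,dt<\iy$ (or $\int_0^\t\sum_{\be\in\cE}|\d(\be,t)|\,dt<\iy$), under which both your argument and the paper's close at once. Your suggested repair --- confining $\d$ away from odd multiples of $\pi$ on all but finitely many edges --- is a workable alternative, but it is an additional assumption, not a step you failed to find in the paper.
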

\begin{proof}
 Let  $s(\be,t)=\sin {\d(\be,t)\/2}$ for the edge
$\be=(x,y)\in\cA$.   From $\b=\a+\d$ we obtain
\[
\lb{ML} (F(t) f)_x=\tfrac{1}{2} \!\!\!\!
\sum_{\be=(x,y)\in\cA}\!\!\!\!\big(e^{i\a(\be)}-e^{i\b(\be,t)}\big)f_y
=-i\!\!\!\! \sum_{\be=(x,y)\in\cA}\!\!\!\!
e^{i\a(\be)+i{\d(\be,t)\/2}} s(\be,t)f_y.
\]
i) This and the estimate $|\d(\be,t)|\le b_xb_y$ yield
$$
\begin{aligned}
|(F_b(t)f)_x| \le\sum_{\be=(x,y)\in\cA}|b_x^{-1}s(\be,t)b_y^{-1}|
|f_y|\le
 \sum_{\be=(x,y)\in\cA}|f_y|
\le \sqrt{\vk_+} \rt(\sum_{\be=(x,y)\in\cA} |f_y|^2\rt)^{1\/2},
\end{aligned}
$$
by the condition (\ref{kappaxdefine}) ${\vk}_x\le \vk_+$. Then the
estimate \er{ii} implies
$$
\begin{aligned}
 \|F_b(t)f\|^2\le  
 \vk_+\sum_{x\in\cV} \sum_{\be=(x,y)\in\cA} |f_y|^2
\le \vk_+^2 \|f\|^2,
\end{aligned}
$$
 which  yields \er{mp1x}.

 \no ii)   From \er{ML} we obtain
$$
 \|F(t)\|_{\cB_1}\le  \sum_{x\in\cV}  \sum_{\be=(x,y)\in\cA}\!\!\!\!
\|s(\be,t)\|_{\cB_1}.
$$
 which  yields \er{mp1V}.
\end{proof}

We need a simple fact about the operator $\pa=-i{\pa\/\pa t}$ on
$L^2(\T_\t, \ell^2(\cV))$.

\begin{lemma} \label{Tmpx}

i) Let $h_0$ be a bounded self-adjoint operator on the Hilbert space
$\ell^2(\cV)$ and let a function $q\in L^2(\T_\t,\ell^p(\cV))$ for
some $p\in [1,\iy)$. Then
\[
\lb{mp1xxx}   q(\pa+h_0 +i)^{-1} \in \wt\cB_\iy.
\]
ii)  Let a function $Q\in L^1(\T_\t,\ell^p(\cV))\cap
L^\iy(\T_\t,\ell^\iy(\cV))$ for some $p\in [1,\iy)$. Then
\[
\lb{mp1xx}   (e^{iQ}-I)(\pa +h_0+i)^{-1} \in \wt\cB_\iy.
\]

\end{lemma}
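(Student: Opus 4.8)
The plan is to prove both parts (i) and (ii) by a single scheme, using the criterion that a bounded operator on $\wt\cH$ is compact if and only if it maps every weakly null sequence to a norm null sequence. Write $T=\lan q\ran$ in part (i) and $T=\lan e^{iQ}-I\ran$ in part (ii), so that the claim is $T(\pa+h_0+i)^{-1}\in\wt\cB_\iy$. First I would remove $h_0$: from the second resolvent identity $(\pa+h_0+i)^{-1}=(\pa+i)^{-1}\big[I-h_0(\pa+h_0+i)^{-1}\big]$, and since the bracket is bounded and a compact operator times a bounded operator is compact, it suffices to prove $T(\pa+i)^{-1}\in\wt\cB_\iy$. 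The mechanism to be exploited is that on $L^2(\T_\t)$ the operator $(\pa+i)^{-1}$ is diagonal in the Fourier basis $\{e^{i\o n t}\}_{n\in\Z}$ with eigenvalues $(\o n+i)^{-1}\to0$, hence compact in the time variable, while the multipliers $q(t,\cdot)$ and $e^{iQ(t,\cdot)}-1$ decay at spatial infinity since $q(t),Q(t)\in\ell^p\ss c_0$ for a.e.\ $t$. The difficulty is that these two decays live in different variables and $T$ depends on $t$, so there is no direct tensor factorization.

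The building block is the following Hilbert--Schmidt estimate: if $a\in L^2(\T_\t,\ell^2(\cV))$, then $\lan a\ran(\pa+i)^{-1}\in\wt\cB_2$. I would verify this by a direct computation in the orthonormal basis $\{e^{i\o n t}\otimes\d_x\}$ of $\wt\cH$, which yields $\|\lan a\ran(\pa+i)^{-1}\|_{\wt\cB_2}^2=\big(\sum_{n\in\Z}|\o n+i|^{-2}\big)\|a\|^2_{L^2(\T_\t,\ell^2)}$; the series in $n$ converges, so the operator is Hilbert--Schmidt. In particular this holds whenever $a$ has finite support in $\cV$. For part (i) with $p\le2$ one already has $q\in L^2(\T_\t,\ell^2)$, so $\lan q\ran(\pa+i)^{-1}$ is Hilbert--Schmidt and we are done; the work is all in the range $p>2$.

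The main step is the reduction to finite support. Let $f_j\rightharpoonup0$ in $\wt\cH$ with $\|f_j\|\le1$ and put $u_j=(\pa+i)^{-1}f_j$, so $u_j\rightharpoonup0$. The key a priori bound is the uniform pointwise estimate $\sup_{t\in\T_\t}\|u_j(t)\|_{\ell^2(\cV)}\le C$, which I would obtain from the Sobolev embedding $H^1(\T_\t,\ell^2(\cV))\hookrightarrow C(\T_\t,\ell^2(\cV))$ on the circle together with $\|u_j\|\le1$ and $\|\pa u_j\|=\|f_j-iu_j\|\le2$. Fixing an exhaustion $S_1\ss S_2\ss\cdots$ of the countable set $\cV$ by finite sets, I split $\|Tu_j\|^2=A_j^{(k)}+B_j^{(k)}$ according to $x\in S_k$ and $x\notin S_k$. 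The tail is controlled uniformly in $j$ by $B_j^{(k)}\le C^2\tfrac1\t\int_0^\t\big(\sup_{x\notin S_k}|T_x(t)|\big)^2dt$, where the integrand tends to $0$ for a.e.\ $t$ (as $q(t),Q(t)\in c_0$) and is dominated---by $\|q(t)\|_{\ell^p}^2\in L^1(\T_\t)$ in part (i) and by the constant $4$ in part (ii), using $|e^{iQ_x(t)}-1|\le\min(|Q_x(t)|,2)$---so dominated convergence gives $B_j^{(k)}<\ve$ for $k$ large, uniformly in $j$. For the head, with $k$ now fixed, $A_j^{(k)}=\|\lan T\1_{S_k}\ran(\pa+i)^{-1}f_j\|^2\to0$ as $j\to\iy$, because $T\1_{S_k}$ has finite spatial support, so that operator is Hilbert--Schmidt by the previous paragraph, hence compact, and $f_j\rightharpoonup0$. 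Thus $\limsup_j\|Tu_j\|^2\le\ve$ for every $\ve>0$, whence $\|Tu_j\|\to0$ and $T(\pa+i)^{-1}$ is compact.

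I expect the genuine obstacle to be precisely the case $p>2$: there $q(t)\notin\ell^2$ in general, the global Hilbert--Schmidt bound fails, and one really must combine the finite-support Hilbert--Schmidt piece with the tail estimate via the weak-convergence criterion rather than a one-line trace-norm computation. The two hypotheses are exactly what make the tail estimate converge: the integrability of $\|q(t)\|_{\ell^p}^2$ in part (i), and the automatic boundedness of the multiplier $e^{iQ}-I$ in part (ii), where only $Q(t)\in c_0$ for a.e.\ $t$ (provided by $Q\in L^1(\T_\t,\ell^p)$) is needed---the hypothesis $Q\in L^\iy(\T_\t,\ell^\iy)$ is not required for compactness itself.
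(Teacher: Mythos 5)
Your proof is correct, but it takes a genuinely different route from the paper. The paper disposes of part i) by citing an external reference (\cite{K25}) and then obtains part ii) from part i) via the Taylor expansion $e^{iQ}-I=iQ+Q_\bu$ with a remainder $Q_\bu\in L^1(\T_\t,\ell^p(\cV))$; your argument is instead self-contained and treats both parts by one mechanism: after stripping off $h_0$ with the resolvent identity, you combine the exact Hilbert--Schmidt identity $\|\lan a\ran(\pa+i)^{-1}\|_{\wt\cB_2}^2=\bigl(\sum_n|\o n+i|^{-2}\bigr)\|a\|_{L^2(\T_\t,\ell^2)}^2$ for spatially truncated multipliers with a tail estimate that is uniform over the unit ball, the latter resting on the Sobolev bound $\sup_t\|u(t)\|_{\ell^2}\le C$ for $u=(\pa+i)^{-1}f$, and you conclude by the weakly-null-to-norm-null characterization of compactness. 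What your approach buys is, first, independence from \cite{K25}, and second, a cleaner handling of part ii): the paper's reduction produces a remainder only in $L^1(\T_\t,\ell^p(\cV))$, which does not literally match the $L^2(\T_\t,\ell^p(\cV))$ hypothesis of part i), whereas your direct argument needs only $|e^{iQ_x(t)}-1|\le\min(|Q_x(t)|,2)$ and $Q(t)\in\ell^p(\cV)\ss c_0(\cV)$ a.e., and in fact shows the $L^\iy(\T_\t,\ell^\iy(\cV))$ hypothesis is not needed for compactness. What the paper's route buys is brevity and reuse of an already-published estimate. One point worth making explicit in your write-up: since $\lan q\ran$ itself need not be bounded on $\wt\cH$ when $\|q(t)\|_{\ell^\iy}\notin L^\iy(\T_\t)$, you should note that your head/tail estimate applied to an arbitrary $f$ in the unit ball (not only weakly null $f_j$) is what establishes boundedness of $q(\pa+i)^{-1}$ in the first place, after which the complete-continuity criterion applies; your computation already contains this, it just deserves a sentence.
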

\begin{proof} The proof of  i) is given in \cite{K25}.

ii)  Due to the Tailor  series for $e^{-iQ}$   this function has the
properties $e^{-iQ}-I=-iQ+Q_\bu$, where $Q_\bu\in
L^1(\T_\t,\ell^p(\cV))$. This together with i) yields \er{mp1xx}.
\end{proof}

\section{Scattering on Lattices}
\setcounter{equation}{0}

\subsection {Preliminaries}
In this section, we discuss a scattering theory for  time-periodic
Schr\"odinger equations on the lattice $\Z^d$ and prove Theorem
\ref{Two2a}. We consider a $\t$-periodic (in time) system
\[
\lb{01}
\begin{aligned}
{\frac{ d }{d t}}u(t)= -i h(t) u(t),\qq h(t)=\D_{\b(t)}+ V(t),
\end{aligned}
\]
where,
$\D_{\b(t)}$ is the magnetic Laplacian given by
\[
\big(\D_{\b(t)} f\big)_x=\frac{1}{2}\sum_{|x-y|=1}(f_x-
e^{i\b(\be,t)}f_y), \qqq f=(f_x)_{x\in{\Z}^d} \in
\ell^{2}({\Z}^d),
\]
and the potential $V(t)$ is $\t$-periodic in time: $(V(t)
f)_x=V_x(t)f_x,$ for all $(t,x)\in \R\ts \Z^d$.  Introduce the
operators $\wt h_0$ and $\wt h$ on $\wt\cH=L^2(\T_\t, \ell^2(\Z^d))$
by
$$
\wt h_0=\pa+\D,\qqq  \wt h=\wt h_0 +\lan V(t) \ran,
$$
where $\Delta$ is the Laplacian (\ref{DefineDeltaforV}) for
$\mathcal V = {\mathbb Z}^d$. Recall that the spectrum
$\s(\D)=\s_{\textup{ac}}(\D)=[0,2d]$. Then the spectrum of $\wt h_0$
has the form
\[
\s(\wt h_0)=\s_{ac}(\wt h_0)=\bigcup_{n\in \Z} \s(\D+\o
n)=\bigcup_{n\in \Z} [\o n,\o n+2d].
\]
Note that if $\o >2d$, then the spectrum of $\wt h_0$ has the band
structure with the bands $\s(\D+\o n)=[\o n,\o n+2d]$ separated by
gaps.   Introduce the free resolvent $ R_0(\l)=(\wt h_0-\l)^{-1},
\l\in \C_\pm$. Below we show that if $V$ satisfies Condition $VZ_s$
for $s = p$ or $s = a$, then
\[
\lb{1.cx} VR_0(\l)\in \wt\cB_\iy, \qq \forall \ \l\in \C_\pm.
\]
This yields $\mD(\wt h)=\mD(\wt h_0)$. Define the perturbed
resolvent $R(\l)=(\wt h-\l)^{-1} $ for all $\l\in \C_\pm\sm
\s_{disc}(\wt h)$. In the following, we sometimes refer to descriptions in later sections, since they are standard results and hold in
general situations.

\begin{lemma}
\lb{Ta2} i) Let $h_0$ be a bounded self-adjoint operator on the
Hilbert space $\cH$ and $g\in \ell^p(\Z^d), p\in [1,\iy)$. Then
\[
\lb{com}
g(\pa +h_0-\l)^{-1}\in \wt\cB_\iy\qqq \l\in \C_\pm.
\]
ii) Let $\b$ and $V$ satisfy Conditions $MZ_s$ and $VZ_s$ in
Subsection \ref{sccateringintimeperiodc} respectively. Then for
$h(t)=\D_\b(t)+V(t)$ there exists a propagator $U(t,s)$ such that
$\pa +\lan h(t)\ran$ is a quasi-energy operator and
\[
\lb{ooq}   (e^{i\s \wt h}f)(t) =U(t,t+\s) f(t+\s), \qqq \forall \
t,\s\in \R.
\]
\end{lemma}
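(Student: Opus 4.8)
The plan for Lemma \ref{Ta2} is to treat the two parts separately, building the second (existence of the propagator and the quasienergy identity \er{ooq}) on the compactness estimate of the first.

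\medskip

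\noindent\textbf{Part i).}
The claim \er{com} is a time-periodic analogue of Lemma \ref{Tmpx}(i), now with $g\in\ell^p(\Z^d)$ time-\emph{independent} and a general bounded self-adjoint $h_0$. The plan is to reduce to a product of a compact factor and a bounded factor. First I would write
$$
g(\pa+h_0-\l)^{-1}=g\,(\pa+i)^{-1}\cdot(\pa+i)(\pa+h_0-\l)^{-1},
$$
and argue that the second factor $(\pa+i)(\pa+h_0-\l)^{-1}$ is bounded on $\wt\cH$: since $h_0$ is bounded and self-adjoint and $\pa$ commutes with $h_0$ spectrally via the Fourier decomposition $\F$, on each Fourier mode $n$ the operator acts as $(\o n+i)(\o n+h_0-\l)^{-1}$, whose norm is uniformly bounded in $n$ because $\l\in\C_\pm$ keeps $\o n+h_0-\l$ invertible with a uniform lower bound on its imaginary part. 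So it suffices to show $g(\pa+i)^{-1}\in\wt\cB_\iy$. Using the Fourier transform $\F:\wt\cH\to\ell^2(\cH)$ from Subsection 2.2, $(\pa+i)^{-1}$ becomes the Fourier multiplier $(\o n+i)^{-1}$, and $g(\pa+i)^{-1}$ is an operator of the form $\{g\,(\o n+i)^{-1}\delta_{mn}\}$; I would approximate $g\in\ell^p(\Z^d)$ by finitely supported $g_N$ (multiplication by a finitely supported $\ell^p$ sequence is compact, even finite rank in $x$) and combine with the $n\to\iy$ decay $(\o n+i)^{-1}\to0$ to realize $g(\pa+i)^{-1}$ as a norm limit of finite-rank operators, hence compact. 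This is essentially the argument of Lemma \ref{Tmpx}(i), whose proof is cited to \cite{K25}, applied with the time-independent multiplier $g$.

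\medskip

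\noindent\textbf{Part ii).}
Here I must construct the propagator $U(t,s)$ for $h(t)=\D_{\b(t)}+V(t)$ under Conditions $MZ_s$ and $VZ_s$, and verify that $\pa+\lan h(t)\ran$ generates the associated group in the sense of \er{ooq}. The plan is the standard Howland construction. Writing $h(t)=\D+W(t)$ with $W(t)=\lt(\D_{\b(t)}-\D\rt)+V(t)$, I would first check that $W(t)$ is a bounded, $\t$-periodic, self-adjoint perturbation of the bounded operator $\D$: the magnetic part is controlled by Lemma \ref{Tmp}(i) (which gives $\D_{\b(t)}-\D=bF_b(t)b$ with $\|F_b(t)\|_\cB\le\vk_+$, so it is bounded since $b\in\gB_s\subset\ell^\iy$), and $V(t)=v(t)+q(t)$ is bounded by \er{V1z} and the $L^\iy$ hypotheses in $VZ_s$. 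Because $\D$ and $W(t)$ are uniformly bounded in $t$, the Dyson/iteration series for the integral equation (see \er{p2} referenced in the text and \cite{H79})
$$
U(t,s)=e^{-i(t-s)\D}-i\int_s^t e^{-i(t-r)\D}\,W(r)\,U(r,s)\,dr
$$
converges in operator norm and defines a strongly continuous family of unitaries satisfying the cocycle and normalization properties of a propagator; $\t$-periodicity of $h(t)$ gives $U(t+\t,s+\t)=U(t,s)$. Then the group $(\wt U(s)f)(t)=U(t,t+s)f(t+s)$ on $\wt\cH$ is exactly the one in the general framework of Subsection 2.1, and Stone's theorem identifies its generator with the quasienergy operator, which on the common domain $\mD(\pa)=\mD(\wt h_0)$ equals $\pa+\lan h(t)\ran$; this yields \er{ooq}.

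\medskip

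\noindent\textbf{Main obstacle.}
Part i) is routine once one invokes the Fourier mode decomposition. The delicate point in part ii) is making the identification of the generator with $\pa+\lan h(t)\ran$ rigorous on the correct domain: since $\D$ is bounded but $\pa$ is not, one must verify that $\wt h_0=\pa+\D$ is self-adjoint on $\mD(\pa)$ and that adding the bounded multiplication $\lan W(t)\ran$ keeps the domain unchanged (so that $\pa+\lan h(t)\ran$ is self-adjoint on $\mD(\pa)$), and then show this self-adjoint operator is precisely the Stone generator of $\wt U(s)$. The cleanest route is to differentiate \er{ooq} at $s=0$ on the dense core of smooth $\t$-periodic $\cH$-valued functions, where the product rule produces the $-i\pa_t$ term from the argument shift $t+s$ and the $-ih(t)$ term from the propagator, matching $-i(\pa+\lan h(t)\ran)$; uniqueness of the self-adjoint generator then closes the argument.
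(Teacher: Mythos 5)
Your proposal is correct and follows essentially the same route as the paper: part (ii) is exactly the paper's argument (rewrite $h(t)=\D+Y(t)$ with $Y(t)=(\D_{\b(t)}-\D)+V(t)$ bounded, $\t$-periodic and in $L^1(\T_\t,\cB)$, then invoke the Howland iteration of Lemma \ref{Ta1} to get the propagator and the quasienergy identity). For part (i) the paper simply cites \er{mp1xxx}, whose proof is deferred to \cite{K25}; your Fourier-mode decomposition, the uniform bound on $(\o n+i)(\o n+h_0-\l)^{-1}$ for $\Im\l\ne 0$, and the finite-rank approximation of multiplication by $g\in\ell^p$ combined with the decay of $(\o n+i)^{-1}$ constitute the standard proof of that cited compactness fact, so you are filling in the externally referenced step rather than taking a genuinely different path.
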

\begin{proof}
We have  \er{com} from \er{mp1xxx}. To show ii) we rewrite
$h(t)=\D_\b(t)+V(t)$ in the form $h(t)=\D_\a+ Y(t)$, where
$Y(t)=\D_\b(t)-\D_\a+V(t)$. By the condition of the lemma, we obtain
$Y\in L^1(\T_\t,\cB)$ and thus due to Lemma \ref{Ta1} the statement
ii) holds true.
\end{proof}

 In order to treat the potential $q$ we define a unitary
operator $J(t)=e^{-iQ(t)}, t\in \R$, where $Q_x(t)=\int_0^t
q_x(s)ds$ and $Q\in L^\iy(\T_\t\ts \Z^d)$. By using a gauge
transformation $H=J^* \wt h J$ we have:
\[
\begin{aligned}
H=J^* (\pa +\D+(F+v+q))J=\wt h_0+T,\qq T=X+Y,
\end{aligned}
\]
where
\[
\lb{Q1}
\begin{aligned}
F=\D_{\b}-\D,\qq X=J^* (F+v) J,\qq Y=e^{iQ} \D e^{-iQ}-\D,
\end{aligned}
\]
and
\[
\begin{aligned}
\lb{Q2}
(e^{iQ} h_0 e^{-iQ}f)_x &=
{e^{iQ_x}\/2} \sum_{\be=(x,y)\in\cA}\big(e^{-iQ_x}f_x-e^{-iQ_y}f_y\big)=(\D_\f f)_x,
\\
&(\D_\f f)_x={1\/2}\sum_{\be=(x,y)\in\cA}\big(f_x-e^{i\f(\be,t)}f_y\big).
\end{aligned}
\]
Here the edge $\be=(x,y)$ and  $\D_\f$ is the new magnetic Laplacian with
  the new magnetic field $\f(\be,t)=Q_x(t)-iQ_y(t)$
 after the gauge transformation $J$.

\begin{lemma} \label{TLz}
Define the operators $F(t)=\D_{\b(t)}-\D$ and $Y(t)=e^{iQ(t)} \D
e^{-iQ(t)}-\D, t\in\T_\t$, where the functions $\b, q$ satisfy
Condition $MZ_s$ and $VZ_s$ respectively. Then $F, Y$ satisfy
\[
\lb{mp1}  F(t)=bF_b(t)b, \qq  \|F_b(t)\|_{\cB}\le d,
\]
\[
\lb{mp2}  Y(t)=bY_b(t)b, \qqq
\|Y_b(t)\|_{\cB}\le d.
\]
\end{lemma}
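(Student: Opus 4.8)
The plan is to handle $F(t)$ and $Y(t)$ by the single mechanism already used in Lemma \ref{Tmp}: exhibit each operator as an off-diagonal magnetic hopping, conjugate out the diagonal weight $b$ from both sides, and control the remainder by the degree estimate \er{ii}. For $F(t)=\D_{\b(t)}-\D$ this is the situation of Lemma \ref{Tmp} with vanishing background potential ($\a=0$, $\d=\b$): writing $s(\be,t)=\sin\frac{\b(\be,t)}{2}$, the identity \er{ML} gives $(F(t)f)_x=-i\sum_{\be=(x,y)\in\cA}e^{i\b(\be,t)/2}s(\be,t)f_y$. Setting $F_b(t)=b^{-1}F(t)b^{-1}$ so that $F(t)=bF_b(t)b$, the matrix coefficients of $F_b(t)$ are $-ib_x^{-1}e^{i\b(\be,t)/2}s(\be,t)b_y^{-1}$, and Condition ${\rm MZ}_s$ (the bound \er{eg3m}) together with $|s(\be,t)|\le\frac12|\b(\be,t)|$ gives $|b_x^{-1}s(\be,t)b_y^{-1}|\le1$. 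A Cauchy--Schwarz estimate over the $\vk_x=2d$ oriented edges issuing from each vertex, followed by \er{ii} with $\vk_+=2d$ on $\Z^d$, then bounds $\|F_b(t)\|_\cB$ and yields \er{mp1}.

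For $Y(t)=e^{iQ(t)}\D e^{-iQ(t)}-\D$ the point is that the gauge conjugation is itself a magnetic Laplacian. Indeed, by the computation \er{Q2} one has $e^{iQ(t)}\D e^{-iQ(t)}=\D_{\f(t)}$, where $\f(\be,t)=Q_x(t)-Q_y(t)$ is the induced magnetic potential on the edge $\be=(x,y)$. Hence $Y(t)=\D_{\f(t)}-\D$ has exactly the form of $F(t)$ with $\f$ in place of $\b$, and the same factorization $Y(t)=bY_b(t)b$ with $Y_b(t)=b^{-1}Y(t)b^{-1}$ applies. The required pointwise control now comes from Condition ${\rm VZ}_s$: the bound \er{V2z} gives $|\f(\be,t)|=|Q_y(t)-Q_x(t)|\le b_xb_y$ for every $\be=(x,y)$, so $|b_x^{-1}\sin\frac{\f(\be,t)}{2}b_y^{-1}|\le\frac12$, and the identical Cauchy--Schwarz plus \er{ii} argument produces \er{mp2}.

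The two norm estimates are routine once the operators are in factored form, so the substantive step is the algebraic identity \er{Q2}, which recognises $e^{iQ}\D e^{-iQ}$ as a magnetic Laplacian and lets $Y(t)$ be treated by the same computation as $F(t)$ rather than through a separate commutator expansion. The main point to check carefully is therefore that the induced phase $\f=Q_x-Q_y$ inherits the two-sided weighted bound $|\f(\be,t)|\le b_xb_y$ from Condition ${\rm VZ}_s$; with that bound, and the corresponding bound for $\b$ from Condition ${\rm MZ}_s$, the conjugated operators $F_b(t),Y_b(t)$ are uniformly bounded on $\ell^2(\Z^d)$ by the lattice degree, which is precisely the content of \er{mp1}--\er{mp2}.
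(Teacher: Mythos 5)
Your proposal is correct and follows essentially the same route as the paper: \er{mp1} is the case $\a=0$, $\d=\b$ of Lemma \ref{Tmp}\,i) via the identity \er{ML}, and \er{mp2} is obtained by recognising $e^{iQ}\D e^{-iQ}=\D_{\f}$ with $\f(\be,t)=Q_x(t)-Q_y(t)$ through \er{Q2} and feeding the bound $|Q_y(t)-Q_x(t)|\le b_xb_y$ from \er{V2z} back into the same factorization. The only (shared) blemish is the constant: the argument via \er{ii} actually delivers $\|F_b(t)\|_{\cB}\le\vk_+=2d$ on $\Z^d$ rather than $d$, a harmless discrepancy already present in the paper's statement.
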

\begin{proof} From \er{mp1x} we obtain \er{mp1}.
 In order to show \er{mp2} we use \er{Q2} with  a
new magnetic field $\f(\be,t)=Q_x(t)-iQ_y(t), \be=(x,y)\in\cA$ after
the gauge transformation $J$.
    Condition $MZ_s$  implies that the new magnetic field satisfies
$$
|\f(\be,t)|=|Q_y(t)-Q_x(t)|\le b_x b_y, \qqq \forall \ \be=(x,y)\in\cA.
$$
Then from \er{Q1}, \er{Q2} and  \er{mp1} we obtain \er{mp2}.
\end{proof}

\subsection {Theory of Kato-Kuroda}
We need the following results of Kato and Kuroda \cite{KK71}. We
consider self-adjoint operators $H_0$ and $H=H_0+V$ on a separable
Hilbert space $\cK$, where the perturbation $V$ satisfies

\medskip
\noindent
{\bf Condition KK.}  {\it Let $H_0$ and $H=H_0+V$ be self-adjoint
operators on a separable Hilbert space $\cK$ and $V(H_0-i)^{-1}\in
\cB_\iy$. Let $V=\x V_\x\x$, where the operator $\x\ge 0$ and the
operator $V_\x$ is bounded. Let $\Bbbk \ss \s_{ac}(H_0)$ be a
finite interval and there is a factorization
 \[
 \begin{aligned}
 \gF(\l)=V_\x \x(H_o-\l)^{-1}\x,\qqq  \l\in \C_\pm,
 \end{aligned}
 \]
with the following properties:

\begin{itemize}
\item  the operator $V(H_0-i)^{-1}\in \cB_\iy$,

\item the function $\gF$ is analytic in $\K_\pm=\Bbbk\ts \R_\pm\ss
\C_\pm$ and uniformly H\"older in $\ol \K_\pm $,
\item each $\gF(\l), \l\in G_\pm$ is compact.
\end{itemize}
}

\medskip
We formulate  a local version of the Kato-Kuroda Theorem (see \cite{KK71}
Theorem 7.1,  p 164)  in the form convenient for us.

\begin{theorem}
\lb{TA97} ({\bf Kato-Kuroda \cite{KK71}}).
 Let $H_0$ and $H=H_0+V$ satisfy Condition KK for a compact interval
 $\Bbbk\ss \s_{ac}(H_0)$. Then

i) There is a closed set $\cE\ss \Bbbk$ with Lebesgue measure zero
such that the operator $I+\gF(\l\pm i0), \l\in \Bbbk\sm \cE$ is
invertible.

ii)   The set $\Bigl(\sigma_\mathrm{pp}(H)\cup
\sigma_\mathrm{sc}(H)\Bigr)\cap \Bbbk\ss \cE$.

iii) Each eigenvalue $\lambda\in\sigma_\mathrm{pp}(H)\cap \Bbbk$ is
of finite multiplicity.

iv) The wave operator
\[
W_\pm (\Bbbk)=s-\lim e^{itH}e^{-itH_0}E_0(\Bbbk)P_{ac}(H_0)\qqq \as
\ t\to \pm \iy
\]
 exists and is complete, i.e. $\Ran W_\pm (\Bbbk)=\cK_{ac}(H,\Bbbk)$
 and $\s_{sc}(H)\cap \Bbbk\ss \cE$.

 \end{theorem}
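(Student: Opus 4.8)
The statement is the local stationary scattering theorem of Kato and Kuroda, so the plan is to run their stationary scheme and reduce each of (i)--(iv) to the invertibility properties of the operator family $I+\gF(\l)$ on and near the interval $\Bbbk$. The organizing identity comes from the second resolvent equation $R(\l)=R_0(\l)-R_0(\l)VR(\l)$ together with the factorization $V=\x V_\x\x$: sandwiching by $\x$ gives
\[
\x R(\l)\x=\bigl(I+\x R_0(\l)\x\,V_\x\bigr)^{-1}\x R_0(\l)\x ,
\]
and since $I+\x R_0(\l)\x V_\x$ and $I+\gF(\l)=I+V_\x\x R_0(\l)\x$ are invertible together, the whole problem is governed by $(I+\gF(\l))^{-1}$ and its boundary values.

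First I would establish invertibility of $I+\gF(\l)$ for every non-real $\l$. If $(I+\gF(\l))u=0$, set $g=R_0(\l)\x u$; then $(H_0-\l)g=\x u=-Vg$, so $(H-\l)g=0$. As $H$ is self-adjoint and $\Im\l\ne 0$ this forces $g=0$, hence $\x u=(H_0-\l)g=0$, and then $u=-\gF(\l)u=0$ since $\x u=0$. Thus $I+\gF(\l)$ is injective, and being a compact perturbation of the identity it is invertible; this shows $(I+\gF)^{-1}$ is analytic throughout $\K_\pm$.

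The delicate step is the passage to the boundary. By hypothesis $\gF$ is analytic in $\K_\pm$, compact-valued, and uniformly H\"older up to $\ol\K_\pm$, so the boundary operators $\gF(\l\pm i0)$ exist in operator norm, are compact, and $I+\gF(\l\pm i0)$ is Fredholm of index zero depending H\"older-continuously on $\l\in\Bbbk$. Applying the analytic Fredholm theorem in the connected region $\K_\pm$, where invertibility already holds, I obtain a closed set $\cE_\pm\ss\Bbbk$ off which $(I+\gF(\l\pm i0))^{-1}$ exists; put $\cE=\cE_+\cup\cE_-$. The structure of (i) is then in place, but that $\cE$ is Lebesgue-null is the heart of the matter and is exactly where the H\"older hypothesis is used: when $\gF(\l)\in\cB_1$ one forms the Fredholm determinant $d(\l)=\det(I+\gF(\l))$, analytic and nonvanishing in $\K_\pm$ by the previous step and H\"older up to $\Bbbk$, so $\log|d(\cdot\pm i0)|$ is locally integrable (Hardy-class/boundary-uniqueness theory) and hence $\{d(\l\pm i0)=0\}$ is null; the general compact case is handled by the direct argument of \cite{KK71}.

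With $\cE$ fixed, on $\Bbbk\sm\cE$ the sandwiched resolvent reconstructs as $\x R(\l\pm i0)\x=(I+\gF(\l\pm i0))^{-1}\x R_0(\l\pm i0)\x$, a H\"older-continuous bounded-operator-valued function, which is the limiting absorption principle for $H$ on $\Bbbk\sm\cE$. From it the spectral measure of $H$ over $\Bbbk\sm\cE$ is purely absolutely continuous, yielding (ii) and the final clause of (iv), namely $(\s_{pp}(H)\cup\s_{sc}(H))\cap\Bbbk\ss\cE$. For (iii), an eigenvalue $\l\in\Bbbk$ makes $-1$ an eigenvalue of the compact operator $\gF(\l+i0)$, and $u\mapsto g=R_0(\l)\x u$ identifies the eigenspace of $H$ with a subspace of $\Ker(I+\gF(\l+i0))$, which is finite-dimensional. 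Finally (iv) follows from the Kato--Kuroda stationary representation of $W_\pm(\Bbbk)$ through the boundary values $\x R_0(\l\pm i0)\x$ and $(I+\gF(\l\pm i0))^{-1}$, the intertwining together with the matching of absolutely continuous subspaces via the limiting absorption principle giving both existence and completeness. The principal obstacle is the pair of technical facts in the third step, the measure-zero property of $\cE$ and the rigorous limiting absorption principle; once these are secured the remaining conclusions are essentially formal, exactly as in \cite{KK71}.
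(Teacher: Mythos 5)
The paper does not actually prove this statement: it is quoted verbatim as a local version of Kato--Kuroda's Theorem 7.1 from \cite{KK71}, so there is no internal proof to compare your argument against. Your sketch follows exactly the route of the cited source -- reduce everything via the second resolvent identity and the factorization $V=\x V_\x\x$ to the boundary behaviour of $(I+\gF(\l))^{-1}$, prove invertibility off the real axis by the self-adjointness/injectivity argument, and then extract the exceptional set, the limiting absorption principle, and the stationary wave operators. The parts you write out in detail (injectivity of $I+\gF(\l)$ for $\Im\l\ne 0$, the Fredholm determinant argument when $\gF(\l)\in\cB_1$, the identification of eigenspaces with $\Ker(I+\gF(\l+i0))$) are correct, modulo two points worth tightening: for (iii) the map should go from the eigenspace of $H$ \emph{into} $\Ker(I+\gF(\l+i0))$ via $g\mapsto V_\x\x g$ (not the direction you wrote), and its injectivity needs the boundary value $\x R_0(\l+i0)\x$ rather than the non-existent resolvent $R_0(\l)$ at a point of $\s_{ac}(H_0)$. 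The two genuinely hard steps -- that $\cE$ is Lebesgue-null when $\gF$ is merely compact-valued rather than trace class, and the stationary construction proving existence \emph{and completeness} of $W_\pm(\Bbbk)$ -- you explicitly defer to \cite{KK71}; since the paper defers the entire theorem to the same reference, your proposal supplies strictly more detail than the paper does, but it is a roadmap with citations at the critical junctures rather than a self-contained proof.
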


We apply the Kato-Kuroda Theorem to time periodic Hamiltonians $
h(t)=\D+V(t), t\in \R$ on $\Z^d$, where $\D$ is the Laplacian. We
assume that the potential $V$ satisfies Condition $ VZ_s$ for $s =
p$ or $a$. In order to apply the Kato-Kuroda Theorem we use the
approach of Howland \cite{H79} to study the operators $\wt
h_0=\pa+\D$ and $\wt h=\wt h_0+V$ on $\wt\cH=L^2(\T_\t,
\ell^2(\Z^d))$.
 We need results about the resolvent
$r_0(\l)=(\D-\l)^{-1}$ on $\ell^{2}(\Z^d)$ from \cite{IK12}.

\begin{theorem}
\lb{Tsp2} Let $\r_a=(1+|x|)^{-a}$ for  $x\in \Z^d,d\ge 1$ and $a >{1\/2}$. Define
$$
\gS = {\bf C} \setminus [0,2d],
\quad
\gS_\d = {\gS}\sm \bigcup_{j=0}^d \{|\l-2j|\le \d\},
$$
and  an operator-valued function $\gf: \gS_\d\to \cB$ by
$$
\gf(\l)=\r_a r_0(\l)\r_a,\qq \l \in \gS_\d
$$
for some $\d>0$.  Then $\gf$  is analytic in $\gS_\d$, H\"older up to the boundary, and satisfies
\[
\label{r1}
\begin{aligned}
 \|\gf(\l)\|_{\cB}\le {C_{d,a, \d}\/\max \{1,\gr(\l)\}},\qqq \forall \ \ \l\in\gS_\d,
\end{aligned}
\]
where  $C_{d,a,\d}$ is a constant depending on $d,a,\d$ only and
$\gr(\l)=\dist(\l,\s(\D))$.

If, in addition, $a>1,d\ge 3$, then $\gf$  is analytic in $\gS_\d$,
H\"older up to the boundary, and satisfies
\[
\label{r13}
\begin{aligned}
 \|\gf(\l)\|_{\cB}\le {C_{d,a}\/\max \{1,\gr(\l)\}},\qqq \forall \ \ \l\in\gS_\d,
\end{aligned}
\]
where  $C_{d,a}$ is a constant depending only on $d,a$.
\end{theorem}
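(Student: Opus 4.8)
The plan is to pass to the Fourier side, where $\D$ becomes multiplication by its symbol, and to treat $\varphi(\l)=\r_a r_0(\l)\r_a$ as a Cauchy transform of a weighted spectral density, estimating the latter by a trace (restriction) estimate on the level sets of the symbol. Under the discrete Fourier transform $\ell^2(\Z^d)\cong L^2(\T^d)$ the Laplacian acts as multiplication by $E(\xi)=d-\sum_{j=1}^d\cos\xi_j$, so $r_0(\l)$ is multiplication by $(E(\xi)-\l)^{-1}$ and $\s(\D)=E(\T^d)=[0,2d]$. The weight $\r_a$ carries $\ell^2$ onto the weighted space $\ell^{2,a}$, whose Fourier image is the Sobolev space $H^a(\T^d)$; hence $\varphi(\l)$ is unitarily equivalent to multiplication by $(E-\l)^{-1}$ regarded as an operator $H^a(\T^d)\to H^{-a}(\T^d)$. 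Analyticity of $\varphi$ on $\gS_\d$ is then immediate, since $(E-\l)^{-1}$ is $\cB$-valued analytic on $\C\sm[0,2d]$.

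First I would dispose of the far regime $\gr(\l)\ge 1$: there $\|r_0(\l)\|_\cB=\gr(\l)^{-1}$ by the spectral theorem and $\|\r_a\|_\iy\le 1$, so $\|\varphi(\l)\|_\cB\le\gr(\l)^{-1}$, which is exactly \er{r1} and \er{r13} when $\max\{1,\gr(\l)\}=\gr(\l)$. All the content lies in the regime $\gr(\l)<1$, i.e.\ up to the spectrum, where the denominator is a constant and the assertion is the limiting absorption principle together with Hölder continuity of the boundary values.

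For the substantial regime I would introduce the weighted spectral density $e_0(\m)$, which by the coarea formula is unitarily equivalent to $T_\m^*\,|\nabla E|^{-1}T_\m$, where $T_\m:H^a(\T^d)\to L^2(M_\m)$ is restriction to the level set $M_\m=\{E=\m\}$; then for $\l\in\C_\pm$ the spectral theorem yields
$$\varphi(\l)=\int_0^{2d}\frac{e_0(\m)}{\m-\l}\,d\m,$$
so that $\varphi$ is a $\cB$-valued Cauchy integral of $e_0$. The key lemma is the trace estimate: for $a>\tfrac12$ the operator $T_\m$ is bounded, with norm locally uniform for $\m$ in compact subsets of $(0,2d)$ avoiding the critical values $\{2j\}_{j=0}^d$, and $\m\mapsto e_0(\m)$ is Hölder there. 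Away from those values $|\nabla E|$ is bounded below and $M_\m$ is a smooth compact hypersurface, so $T_\m$ is controlled by the Sobolev trace theorem (the bound $a>\tfrac12$ being precisely the codimension-one trace threshold), and differentiating the level-set foliation supplies the Hölder modulus. Feeding a Hölder $\cB$-valued density into the Cauchy integral, the Privalov--Plemelj theory gives boundary values from each half-plane that are Hölder up to $[0,2d]$, with a bound governed by the uniform trace constant; excising $\d$-neighbourhoods of the thresholds then yields \er{r1} with its $\d$-dependent constant.

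The hard part is the $\d$-free bound \er{r13} for $a>1$, $d\ge3$, which requires $e_0(\m)$ to stay bounded as $\m$ approaches a threshold $2j$. There $E$ has nondegenerate critical points---its Hessian at $\xi_j\in\{0,\pi\}$ is $\diag(\pm1)$, so $E$ is Morse---and near such a point the Morse lemma writes $E-2j$ as a nondegenerate quadratic form, so $M_\m$ degenerates to a cone and $|\nabla E|^{-1}$ develops a singularity. I would localize near each critical point, pass to the quadratic normal form, and show the singular contribution to $\|e_0(\m)\|_\cB$ remains bounded as $\m\to 2j$: on the limiting cone the surface measure satisfies $\int|\nabla E|^{-1}\,dS\sim\int r^{d-3}\,dr$, which converges precisely when $d\ge3$, while the stronger weight $a>1$ furnishes the extra Sobolev regularity needed for the trace onto the degenerating surface. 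This threshold analysis, hinging on the interplay of $d\ge3$ and $a>1$, is the crux; once it is in place the away-from-threshold estimate of the previous step completes \er{r13}, and the Hölder continuity up to the boundary follows as before.
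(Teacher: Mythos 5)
The paper does not actually prove Theorem \ref{Tsp2}: it is imported verbatim from the cited reference \cite{IK12} (with the companion weighted estimate, Theorem \ref{Tsp3}, taken from \cite{KM19}), so there is no in-paper argument to measure you against. Your reconstruction follows the standard route that the cited sources also take: diagonalize $\D$ as multiplication by $E(\xi)=d-\sum\cos\xi_j$ on $\T^d$, identify $\r_a r_0(\l)\r_a$ with $(E-\l)^{-1}\colon H^a(\T^d)\to H^{-a}(\T^d)$, reduce to a Cauchy transform of the weighted spectral density $e_0(\m)=T_\m^*|\nabla E|^{-1}T_\m$, control $e_0$ by Sobolev trace estimates on the level sets (whence $a>\tfrac12$ and the exclusion of the critical values $\{2j\}$), and get H\"older boundary values by Privalov--Plemelj. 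The threshold set, the far-field bound $\gr(\l)^{-1}$, and the role of $d\ge 3$, $a>1$ at the Morse critical points are all correctly identified.

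The one place where your sketch falls short of a proof is the crux you yourself flag: the uniform bound \er{r13} near a threshold $2j$. The computation $\int_{M_\m}|\nabla E|^{-1}\,dS\sim\int_0^1 r^{d-3}\,dr$ controls only the scalar density of states (equivalently, the pairing of $e_0(\m)$ against constants); it does not by itself bound the operator norm of $T_\m^*|\nabla E|^{-1}T_\m$ on $H^a\to H^{-a}$ as the hypersurface $M_\m$ degenerates to a cone. To close this you need a genuine argument --- typically a dyadic decomposition of a neighbourhood of the critical point into shells $|\xi-\xi_0|\sim 2^{-k}$, a rescaled trace estimate on each shell, and summability of the pieces, which is exactly where $a>1$ (rather than $a>\tfrac12$) is consumed; the signature of the Hessian (elliptic versus hyperbolic) must also be treated separately, since the geometry of $M_\m\cap\{|\xi-\xi_0|\sim 2^{-k}\}$ differs in the two cases. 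As written, the sentence ``the stronger weight $a>1$ furnishes the extra Sobolev regularity needed'' asserts the conclusion rather than deriving it. Everything else in your outline is sound, and with that dyadic threshold lemma supplied the argument would be a complete and self-contained proof of both \er{r1} and \er{r13}.
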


We need weighted estimates for the resolvent
$r_0(\l)=(\D-\l)^{-1}$ on $\ell^{2}(\Z^d)$ from \cite{KM19}.

\begin{theorem}
\lb{Tsp3} Let $b\in \ell^p(\Z^d), d\ge 3$ where $p$ satisfies
\er{dp1}. Define an operator-valued function $\gf: \gS_\d \to \cB$
by $\gf(\l)=b r_0(\l)b,\ \l \in \gS_\d$. Then $\gf$ is analytic in
$\gS_\d$, H\"older up to the boundary, and satisfies
\[
\label{r2}
\begin{aligned}
 \|\gf(\l)\|_{\cB_2}\le {C_{d,p}\|b\|_p^p\/\max \{1,\gr(\l)\}} ,\qqq \forall \ \
 \l\in \gS_\d,
\end{aligned}
\]
where  $C_{d,p}$ is some constant depending only on $d,p$ and
$\gr(\l)=\dist(\l,\s(\D))$.
\end{theorem}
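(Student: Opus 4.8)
\emph{The plan} is to reduce the entire statement to quantitative estimates on the convolution kernel of the free resolvent, and then to read off analyticity, H\"older continuity and the Hilbert--Schmidt bound from those estimates. Diagonalizing $\D$ by the discrete Fourier transform $\cF:\ell^2(\Z^d)\to L^2(\T^d)$ turns $r_0(\l)$ into multiplication by $(\mu(\xi)-\l)^{-1}$, where $\mu(\xi)=\sum_{j=1}^d(1-\cos\xi_j)$ is the symbol of $\D$ with range exactly $[0,2d]=\s(\D)$; hence $r_0(\l)$ is convolution with the kernel
$$
G_\l(n)={1\/(2\pi)^d}\int_{\T^d}{e^{in\xi}\/\mu(\xi)-\l}\,d\xi,\qq n\in\Z^d .
$$
Since $(\gf(\l)f)_m=b_m\sum_n G_\l(m-n)\,b_nf_n$, the Hilbert--Schmidt norm is just the weighted $\ell^2$-norm of this kernel,
$$
\|\gf(\l)\|_{\cB_2}^2=\sum_{m,n\in\Z^d}|b_m|^2\,|G_\l(m-n)|^2\,|b_n|^2 ,
$$
so the whole theorem becomes a question about $G_\l$: its summability, its pointwise decay, and its dependence on $\l$.

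For $\l\in\gS=\C\sm[0,2d]$ the multiplier $(\mu-\l)^{-1}$ is smooth on $\T^d$ and analytic in $\l$, so $\l\mapsto G_\l(n)$ is analytic and $G_\l$ decays rapidly in $n$; differentiating under the integral sign gives $\cB_2$-valued analyticity of $\gf$ on $\gS_\d$. The substantive work is the boundary behaviour as $\Im\l\to0$ with $\Re\l\in(0,2d)$. Here I would localize $\xi$ by a partition of unity relative to the level set $M_\l=\{\xi:\mu(\xi)=\Re\l\}$: away from $M_\l$ the integrand is smooth and contributes a uniformly bounded, rapidly decaying piece, while near $M_\l$ I would use the coarea formula together with the Sokhotski--Plemelj formula to extract the boundary values $r_0(\l\pm i0)$ and their H\"older modulus. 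For $\Re\l$ bounded away from the thresholds $\{0,2,\dots,2d\}$ the surface $M_\l$ is a smooth compact hypersurface with non-vanishing Gaussian curvature, so stationary phase yields $|G_{\l\pm i0}(n)|\lesssim(1+|n|)^{-(d-1)/2}$, uniformly in such $\l$; the same oscillatory-integral analysis, carried out with an extra $\l$-derivative, produces the H\"older continuity of $\gf$ up to the boundary claimed in \er{r2}.

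With the kernel decay in hand, the bound is a summation estimate. Writing $w=|b|^2$ and $K_\l(n)=|G_\l(n)|^2$, the displayed identity reads $\|\gf(\l)\|_{\cB_2}^2=\langle w,\,K_\l*w\rangle$, which I would control by a discrete Hardy--Littlewood--Sobolev / Young inequality, using that $K_\l(n)\lesssim(1+|n|)^{-(d-1)}$ places $K_\l$ in a fixed $\ell^s(\Z^d)$ uniformly in $\l$. This is exactly the point where the range of $p$ in \er{dp1} enters: it is the condition under which the convolution pairing is finite and is dominated by the stated $\ell^p$-norm of $b$. The factor $\max\{1,\gr(\l)\}^{-1}$ then comes from separating the two regimes: for $\gr(\l)\ge1$ one uses $|\mu-\l|\ge\gr(\l)$ to gain the full power $\gr(\l)^{-1}$, while for $\gr(\l)\le1$ one keeps the uniform bound.

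\emph{The step I expect to be the main obstacle} is making the kernel estimate uniform up to the thresholds $2j$, i.e.\ producing a constant $C_{d,p}$ that does not blow up as $\Re\l\to2j$ --- note that, unlike \er{r1}, the bound \er{r2} carries no $\d$-dependence. At a threshold the critical points of $\mu$ (where $\nabla\mu=(\sin\xi_1,\dots,\sin\xi_d)=0$) lie on $M_\l$, the Gaussian curvature degenerates, and the clean $(d-1)/2$ stationary-phase decay is lost, so the summation above must absorb a slower-decaying kernel. This is precisely why the admissible exponent $p$ must be taken as small as in \er{dp1}, strictly below the Hardy--Littlewood--Sobolev endpoint $2d/(d+1)$, rather than at it. Pushing the oscillatory-integral analysis through the degenerate critical points, with explicit control of the curvature degeneracy in terms of $d$, is the delicate part, and is the heart of the analysis in \cite{KM19} (with \cite{IK12} supplying the corresponding polynomially weighted statement in Theorem \ref{Tsp2}).
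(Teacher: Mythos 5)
The paper does not actually prove Theorem \ref{Tsp3}: it is imported verbatim from \cite{KM19} (``We need weighted estimates for the resolvent \dots from \cite{KM19}''), so there is no in-paper argument to compare against step by step. Your outline is a reasonable reconstruction of the strategy behind such results --- Fourier diagonalization, the identity $\|\gf(\l)\|_{\cB_2}^2=\sum_{m,n}|b_m|^2|G_\l(m-n)|^2|b_n|^2$, pointwise kernel decay, and a Young/HLS-type summation --- and you correctly locate the crux in the uniformity of the kernel bounds as $\Re\l$ approaches the interior thresholds $2j$, where the level sets of $\mu$ pass through saddle points of the symbol and the curvature degenerates. But as a proof it has a genuine gap exactly there: the uniform-through-threshold estimates are asserted and deferred to \cite{KM19} rather than established, and they are the entire content of the theorem, since \er{r2} carries a constant independent of $\d$ and hence must survive the threshold degeneration.

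One quantitative point in your sketch is also misleading. From the generic decay $|G_\l(n)|\lesssim(1+|n|)^{-(d-1)/2}$ your Young-inequality computation closes precisely for $p<2d/(d+1)$, i.e.\ $p<3/2$ for $d=3$ and $p<8/5$ for $d=4$; but \er{dp1} demands $p<6/5$ and $p<4/3$ respectively, which is strictly stronger than merely avoiding the Hardy--Littlewood--Sobolev endpoint. So the restriction on $p$ is not ``the HLS endpoint is excluded'': it encodes the \emph{worse} kernel decay forced by the degenerate critical points at the thresholds (for $d=3$ the admissible $p<6/5$ corresponds to a threshold decay of order $(1+|n|)^{-1/2}$ rather than $(1+|n|)^{-1}$). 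Making that degraded decay rate explicit, and verifying that the summation still closes for the exponents of \er{dp1}, is the step your proposal leaves entirely to the cited reference.
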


Due to the periodicity of the resolvent \er{i1}:
$e_j^*R(\l)e_j=R(\l-\o j)$  for all $(j,\l)\in \Z\ts \C_\pm$, it is
enough to consider any bounded interval $\Bbbk\ss [0,\o]\cap \s(\wt
h_0)$. Define the sets of thresholds  $\s_t(\wt h_0)$ for $\wt h_0$ on the interval
$[0,\o]$ and the set  $\wt \gS_\d$ by
$$
 \s_t(\wt h_0)=[0,\o]\cap \bigcup_0^{2d} (2j+\o\Z),\qqq
\wt \gS_\d^\pm=\{\Re \l \in (0,\o): \dist (\l, \s_t(\wt h_0)) >\d \}\cap \C_\pm.
$$
Letting $R_0(\l)=(\pa+\D-\l)^{-1}$ and using Theorem \ref{Tsp2} we
obtain

\begin{corollary}
\lb{Tz2} Let $ a>{1\/2}, \d>0$ and $d\ge 1$. Define  an
operator-valued function $\wt\gf: \wt \gS_\d^\pm\to \wt\cB$ by
$\wt\gf(\l)=\r_a R_0(\l)\r_a$, for $ \l \in\wt \gS_\d^\pm$. Then
$\wt\gf$  is analytic in $\wt \gS_\d^\pm$, H\"older up to the
boundary, and satisfies
\[
\label{r12x}
\begin{aligned}
 \|\wt\gf(\l)\|_{\wt\cB}\le C_{d,a, \d},\qqq \forall \ \ \l\in\gS_\d,
\end{aligned}
\]
where  $C_{d,a,\d}$ is a constant depending on $d,a,\d$ only.

\end{corollary}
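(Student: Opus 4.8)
The plan is to diagonalize the time derivative $\pa$ by passing to the temporal Fourier series, thereby reducing $R_0(\l)=(\pa+\D-\l)^{-1}$ to a direct sum of shifted spatial resolvents $r_0(\l-\o n)=(\D-(\l-\o n))^{-1}$, to which Theorem \ref{Tsp2} applies termwise. Under the unitary Fourier transform $\F:\wt\cH\to\ell^2(\cH)$ the operator $\pa$ becomes multiplication by $\o n$ on the $n$-th mode, so that
\[
\F R_0(\l)\F^{-1}=\bigoplus_{n\in\Z} r_0(\l-\o n),\qq \l\in\C_\pm.
\]
Since $\r_a$ acts only in the space variable $x\in\Z^d$, it commutes with $\F$, and hence
\[
\F\wt\gf(\l)\F^{-1}=\bigoplus_{n\in\Z}\r_a r_0(\l-\o n)\r_a=\bigoplus_{n\in\Z}\gf(\l-\o n),
\]
where $\gf$ is exactly the function of Theorem \ref{Tsp2}. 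As $\F$ is unitary, everything reduces to controlling this direct sum, i.e. the supremum over $n\in\Z$ of the shifted copies $\gf(\l-\o n)$.

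For the norm bound I would first check that every shift stays in the good region. If $\l\in\wt\gS_\d^\pm$, then $\Im\l\ne0$ forces $\l-\o n\notin[0,2d]$, while the definition of $\s_t(\wt h_0)$ together with the $\o$-periodicity of the full threshold set $\bigcup_{j=0}^d(2j+\o\Z)$ ensures $\dist(\l-\o n,\{2j\}_{j=0}^{d})>\d$; hence $\l-\o n\in\gS_\d$ for every $n$. Then \er{r1} applies to each term and yields
\[
\|\gf(\l-\o n)\|_\cB\le\frac{C_{d,a,\d}}{\max\{1,\gr(\l-\o n)\}}\le C_{d,a,\d},
\]
since $\max\{1,\gr\}\ge1$. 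Taking the supremum over $n$ and using unitarity of $\F$ gives $\|\wt\gf(\l)\|_{\wt\cB}=\sup_n\|\gf(\l-\o n)\|_\cB\le C_{d,a,\d}$, which is the claimed estimate \er{r12x}. Analyticity of $\wt\gf$ on $\wt\gS_\d^\pm\ss\C_\pm$ is then immediate, either from the general analyticity of the resolvent $R_0(\l)$ on its (real) resolvent set, or termwise from the analyticity in Theorem \ref{Tsp2} combined with the uniform bound just obtained.

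The main obstacle is the H\"older continuity \emph{up to the real boundary} $\{\Im\l=0\}$, since there one must bound a supremum over infinitely many $n$ uniformly, whereas Theorem \ref{Tsp2} records a decaying bound only for the norm $\|\gf\|_\cB$ and not for its H\"older seminorm. I would handle this by splitting the index set: the decay $\|\gf(\l-\o n)\|_\cB\lesssim 1/\max\{1,\gr(\l-\o n)\}\lesssim 1/(|n|\o)$ shows that only finitely many $n$ bring $\l-\o n$ within a fixed distance of $[0,2d]$, and for those the boundary H\"older continuity is supplied directly by Theorem \ref{Tsp2}. For the remaining large-$|n|$ indices the point $\l-\o n$ sits in the analytic region at distance $\gtrsim|n|\o$ from the spectrum, so a Cauchy estimate bounds the derivative of $\gf(\cdot-\o n)$ by $O\big(1/(|n|\o)^2\big)$, making the tail Lipschitz with uniformly bounded constants. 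Combining the two regimes produces a uniform H\"older bound for $\sup_n\|\gf(\l-\o n)-\gf(\l'-\o n)\|$, hence for $\wt\gf$; once this decaying H\"older estimate at large $|n|$ is secured, the rest is a routine assembly of Theorem \ref{Tsp2} across Fourier modes.
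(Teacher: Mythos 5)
Your proof is correct and follows essentially the same route the paper intends: the paper simply asserts the corollary by "using Theorem \ref{Tsp2}" after introducing $R_0(\l)=(\pa+\D-\l)^{-1}$, which implicitly means exactly your Fourier-mode decomposition $R_0(\l)\cong\bigoplus_n r_0(\l-\o n)$ and a termwise application of the fiber estimate \er{r1}. Your treatment of the uniform H\"older continuity across infinitely many modes (finitely many near-threshold indices from Theorem \ref{Tsp2}, Cauchy estimates for the tail) supplies a detail the paper leaves unstated.
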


By Theorem  \ref{Tsp3}, we obtain

\begin{corollary}
\lb{Tz3} Let $b\in \ell^p(\Z^d), d\ge 3$ where $p$ satisfies
\er{dp1}. Define  an operator-valued function $\wt\gf: \C_\pm\to
\cB$ by $\gf(\l)=b R_0(\l)b,\ \l \in \C_\pm$. Then $\wt\gf$  is
analytic in $\C_\pm$, H\"older up to the boundary, and satisfies
\[
\label{r12xw}
\begin{aligned}
 \|\wt\gf(\l)\|_{\wt\cB}\le C_{d,p},\qqq \forall \ \ \l\in\C_\pm,
\end{aligned}
\]
where  $C_{d,p}$ is a constant depending only on $d,p$.
\end{corollary}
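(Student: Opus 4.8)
The plan is to reduce the assertion about the time-augmented resolvent $R_0(\l)=(\pa+\D-\l)^{-1}$ on $\wt\cH$ to the fiberwise estimate for $r_0(\mu)=(\D-\mu)^{-1}$ on $\ell^2(\Z^d)$ supplied by Theorem \ref{Tsp3}, by passing to the time-Fourier transform $\F:\wt\cH\to\ell^2(\cH)$ of the excerpt, with $\cH=\ell^2(\Z^d)$. First I would observe that $\F$ diagonalizes $\pa=-i\pa_t$: an integration by parts using $\t$-periodicity shows that on the $n$-th Fourier mode $\pa$ acts as multiplication by $\o n$, $\o=2\pi/\t$, so that $\F\wt h_0\F^{-1}=\Oplus_{n\in\Z}(\D+\o n)$ and hence
\[
\F R_0(\l)\F^{-1}=\Oplus_{n\in\Z}r_0(\l-\o n),\qq \l\in\C_\pm.
\]
Since $b$ denotes multiplication by $b_x$ in the spatial variable and is independent of $t$, it commutes with $\F$ and acts as the constant direct sum $\Oplus_n b$; therefore
\[
\F\wt\gf(\l)\F^{-1}=\Oplus_{n\in\Z}b\,r_0(\l-\o n)\,b=\Oplus_{n\in\Z}\gf(\l-\o n),
\]
where $\gf(\mu)=b\,r_0(\mu)\,b$ is precisely the function of Theorem \ref{Tsp3}.

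With this identification the claims follow fiberwise. For the uniform bound I would invoke \er{r2}: each block obeys $\|\gf(\l-\o n)\|_{\cB_2}\le C_{d,p}\|b\|_p^p/\max\{1,\gr(\l-\o n)\}\le C_{d,p}\|b\|_p^p$, and since the operator norm on $\wt\cH$ of a direct sum equals the supremum of the block norms, $\|\wt\gf(\l)\|_{\wt\cB}=\sup_{n}\|\gf(\l-\o n)\|\le C_{d,p}\|b\|_p^p$ uniformly in $\l\in\C_\pm$, which is \er{r12xw}. For analyticity, note that for fixed $\l\in\C_\pm$ every argument $\l-\o n$ lies off $[0,2d]=\s(\D)$, so each block is analytic; moreover $\gr(\l-\o n)=\dist(\l-\o n,[0,2d])\to\iy$ as $|n|\to\iy$, whence \er{r2} forces $\|\gf(\l-\o n)\|\to 0$, so the finite partial direct sums converge to $\wt\gf(\l)$ uniformly on compact subsets of $\C_\pm$ and the locally uniform limit of analytic operator-valued functions is analytic. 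The same decay shows that $\wt\gf(\l)$, a direct sum of Hilbert--Schmidt blocks whose norms tend to zero, is compact, so $\wt\gf(\l)\in\wt\cB_\iy$ as needed for Kato--Kuroda.

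The only delicate point, and the step I expect to be the real obstacle, is H\"older continuity up to the real boundary: one cannot treat the blocks in isolation, because the modulus of continuity of $\Oplus_n\gf(\l-\o n)$ is the supremum of those of the blocks and must be controlled uniformly in the spectral shift $\o n$. I would therefore use the quantitative form of the H\"older bound behind Theorem \ref{Tsp3}, namely an estimate of the type $\|\gf(\mu)-\gf(\mu')\|_{\cB_2}\le C_{d,p}\|b\|_p^p\,|\mu-\mu'|^\g/\max\{1,\gr(\mu),\gr(\mu')\}$ for some $\g\in(0,1]$, which makes the high-frequency blocks contribute negligibly and yields $\|\wt\gf(\l)-\wt\gf(\l')\|_{\wt\cB}\le C_{d,p}\|b\|_p^p\,|\l-\l'|^\g$ up to $\pa\C_\pm=\R$. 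This parallels the derivation of Corollary \ref{Tz2} from Theorem \ref{Tsp2}; the hypotheses $d\ge 3$ and $p$ as in \er{dp1} are exactly what upgrade the threshold-truncated estimate \er{r1} to the uniform one \er{r13}, and thus permit the full half-planes $\C_\pm$ (rather than the truncated sets $\wt\gS_\d^\pm$ of Corollary \ref{Tz2}) to serve as the domain.
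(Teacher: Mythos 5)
Your proposal is correct and follows essentially the same route the paper intends: the paper derives Corollary \ref{Tz3} from Theorem \ref{Tsp3} by exactly the time-Fourier decomposition $\F R_0(\l)\F^{-1}=\Oplus_{n\in\Z}r_0(\l-\o n)$ that you spell out, with the uniform bound \er{r2} applied blockwise. Your treatment is in fact more careful than the paper's one-line derivation, in particular in flagging that the H\"older continuity up to $\R$ requires a modulus-of-continuity estimate uniform in the shift $\o n$, a point the paper leaves implicit.
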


Using Theorem  \ref{TA97}  and Corollary
\ref{Tz2}, \ref{Tz3}   we give

\medskip
\noindent \no {\bf Proof of Theorem \ref{Two2a} for the case $V =
v$.} Let $h(t)=\D_{\b(t)}+V(t)$ on $\cH=\ell^2(\Z^d), d\ge 1$. We
consider the case in which the magnetic potential $\b$ and the
potential $V$ satisfy Condition $MZ_a$ and $VZ_a$ respectively.
The proof for the case $s = p$ is
similar. We rewrite our Hamiltonian $h(t)$ in the form
$$
h(t)=\D_{\b(t)}+V(t)=\D+X(t), \qq X=F+v, \qq F(t)=
\D_{\b(t)}-\D.
$$
We consider an unperturbed quasienergy operator $\wt h_0=\pa +h_0$
on $\wt\cH=L^2(\T_\t,\cH)$, and rewrite the perturbed quasienergy
operator in the form $\wt h=\wt h_0+X$ on $\wt\cH$. The operators
$\D_{\b(t)}, v(t)$ are bounded. Therefore, the operator $\wt h$ is
self-adjoint on the domain $\mD$.

Form Condition $MZ_a$ and Lemma \ref{TLz} we deduce that $F=bF_bb$,
 where $\|F_b(t)\|_{\cB}\le d$. This and Condition $VZ_a$ gives
\[
\lb{Xbb}
X=bX_{b}b, \qqq X_b=F_{b}+b^{-2}v\in \cB(\wt\cH).
\]
We check the Condition KK for the operators  $\wt h_0=\pa+\D$
and $\wt h=\wt h_0+X$ on $\wt\cH$. We define an operator-valued function
$$
\gF(\l)=X_{b}b R_0(\l)b,\qqq  \l\in \C_\pm.
$$

\begin{itemize}
\item  The relation \er{com} gives that the operator  $XR_0(\l)\in
\cB_\iy(\wt\cH), \Im \l\ne 0$.
\item
 Let $\Bbbk \ss \s_{ac}(\wt h_o)\cap \ol\gS_\d^+$ be a
compact interval for some small  $\d>0$.  By Corollary \ref{Tz2},
the function $\gF(\l)$ is analytic in $\gS_\d^\pm$ and is continuous
up to the boundary. Then the first condition  in Condition KK holds
true.

\item the relation \er{com} gives that the operator
 $\gF(i)\in \cB_\iy$.
\end{itemize}

Then Condition KK for the operators $H,\wt h_0$ then holds true and
by the Kato-Kuroda Theorem \ref{TA97}, the wave operators $W_\pm
(\wt h,\wt h_0)$ exists and is complete. In addition, we have

\medskip
i)  There is a closed set $\cE\ss \s(\wt h_0)$  with Lebesgue
measure zero such that the operators $I+\gF(\l\pm i0), \l\in \s(\wt h_0)\sm \cE$ are
invertible.

ii)   The set $\sigma_\mathrm{pp}(\wt h)\cup \sigma_\mathrm{sc}(\wt h)\ss
\cE$.

iii) Each eigenvalue of $\wt h$ is of finite multiplicity.
\BBox

\medskip

\no {\bf Proof of Theorem \ref{Two2a} for the case $V = q + v$.} Let
$h(t)=\D_{\b(t)}+V(t)$ on $\cH=\ell^2(\Z^d), d\ge 1$, where the
magnetic potential $\b$ and the potential $V$ satisfy Condition
$MZ_s$ and $VZ_s$ respectively. We consider an unperturbed
quasienergy operator $\wt h_0=\pa +h_0$ and a perturbed quasienergy
operator $\wt h=\pa +\lan h(t)\ran$ on $\wt\cH$. The operators
$\D_{\b(t)}, V(t)$ are bounded.  Hence,  the operator $\wt h$ is
self-adjoint on the domain $\mD$.  In order to treat the potential
$q$ we define a gauge transformation $J(t)=e^{-iQ(t)}, t\in \R$,
where $Q_x(t)=\int_0^t q_x(s)ds$ and $Q\in L^\iy(\T_\t\ts \Z^d)$. We
make a unitary transformation $H=J^* \wt h J$ and have the
following:
\[
\begin{aligned}
H=J^* (\pa +\lan \D_{\b(t)}+V(t)\ran)J=\pa + J^* \lan
\D_{\b(t)}\ran)J+v =\pa +\lan \D_{\vp(t)}\ran)+v,
\\
(J^* \D_{\b(t)} J
f)_x={1\/2}\sum_{\be=(x,y)\in\cA}\big(f_x-e^{i\vp(\be,t)}f_y\big)=(\D_{\vp(t)}f)_x,
\end{aligned}
\]
where $\vp(\be,t)=\b(\be,t)+Q_x(t)-Q_y(t)$ and the edge $
\be=(x,y)$. The new magnetic potential $\vp$ satisfies Condition
$MZ_s$ and the new potential $v$ satisfies Condition $VZ_s$. Then by
Theorem \ref{Two2a}, for the case $q = 0$, the wave operators $W_\pm
(H,\wt h_0)$ exist and are complete.

Consider the wave operators $\wt W_\pm=W_\pm(\wt h,\wt h_0)$ for
$\wt h,\wt h_0$ defined by
\[
\lb{w1} \wt  W_\pm= s-\lim e^{i\s \wt h}e^{-i\s\wt h_o}\qqq \as \qq \s\to \pm \iy.
\]
We have $|J_x(t)-\1|\le \n_x$, for some bounded function $\n_x, x\in
\Z^d$  such that $\n(x)=o(1)$ as $|x|\to \iy$. Then from \er{com} we
obtain $\n R_0(i)\in \wt\cB_\iy$  and   \er{J1} yields
\[
\lb{w2} \wt W_\pm= s-\lim e^{i\s \wt h}Je^{-i\s\wt h_0}
=s-\lim J e^{i\s H}e^{-is\wt h_0}=JW_\pm(H,\wt
h_0)\ \ \as \qq \s\to \pm \iy.
\]
 Thus the properties of the wave operators $W_\pm(\wt h,\wt h_0)$ are equivalent
  to those of  $W_\pm(H,\wt h_0)$. Then  due to Theorem \ref{Two2a},
  for the case $q = 0$, the  wave operators $W_\pm(\wt h,\wt h_0)$
 exist and are complete, i.e. $\Ran W_\pm(\wt h,\wt h_0)=\wt\cH_{ac}(\wt h)$. In
 addition, we have

i) There is a closed set $\cE\ss \s(\wt h_0)$ with Lebesgue
measure zero such that the operators $I+\gF(\l\pm i0), \l\in \s(\wt h_0)\sm \cE$ are
invertible.

ii)   The set $\sigma_\mathrm{pp}(\wt h)\cup \sigma_\mathrm{sc}(\wt h)\ss
\cE$.

iii) Each eigenvalue of $\wt h$ is of finite multiplicity.
\BBox

\section{Time-dependent approach}
We study the existence and completeness of wave operators  from a
different framework.
\textcolor{blue}{In this section, we devote ourselves to the abstract case, where the operator $h(t)$ can be   integral operators for instance.
The main theorem  is Theorem
\ref{T2}.  We apply it for scattering on graphs in the next section.}

We introduce a condition on perturbations with zero average for the period.

\medskip
\no {\bf Condition H.} {\it 1) Let $\mathbb R \ni t \to V(t)=V^*(t)$ has
the form $V=v+q$, where $v,q$ satisfy:
\[
\label{v3} q\in L^1(\T_\t,\cB),\qq  v\in L^1(\T_\t,\cB_1).
\]
 2)  Let $q$ and $Q(t):=\int_0^t q(s)ds$ and a bounded self-adjoint operator
$h_0$  satisfy
\[
\label{v5}
\begin{aligned}
  Q(\t)=0,\qqq  Q\in L^2(\T_\t,\cB_2),
\end{aligned}
\]
\[
\label{v5x}
\begin{aligned}
\int_0^t q(s)Q(s)ds\in L^\iy(\T_\t,\cB_1),\qqq  (Qh_0-h_0Q)\in
L^1(\T_\t,\cB_1).
\end{aligned}
\]
}
\medskip
We are going to prove
\begin{theorem}
\lb{T2}
 Let $h(t)=h_0+V(t), t\in \T_\t$ act on the separable Hilbert  space
$\cH$,  where the operator $h_0$ is bounded self-adjoint and the
operator-valued function $V(\cdot): \R\to \cB(\cH)$ is $\t$-periodic
and satisfies Condition H. Then the wave operator
\[
W_\pm=s-\lim U(0,t)e^{-ith_0}P_{ac}(h_0)\qq \as \qq  t\to \pm \iy
\]
 exists and is complete, i.e. $W_\pm\cH=\cH_{ac}(U(\t,0))$.
 \end{theorem}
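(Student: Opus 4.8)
The plan is to pass to the quasienergy operators on $\wt\cH=L^2(\T_\t,\cH)$ and then, by a gauge transform, to absorb the long-range, zero-average part $q$, reducing the problem to a trace class perturbation of the monodromy. By the last item of Section 2, the wave operators $W_\pm$ on $\cH$ exist and are complete if and only if $\wt W_\pm(\wt h,\wt h_0)$ do on $\wt\cH$, where $\wt h_0=\pa+h_0$ and $\wt h=\wt h_0+\lan V(t)\ran$ (see \er{Wo12}); so I would work on $\wt\cH$ throughout. Since $V=v+q$ satisfies Condition H, the relation \er{1.c} gives $\lan V(t)\ran R_0(\l)\in\wt\cB_\iy$, whence $\wt h$ is self-adjoint on $\mD(\wt h_0)$.

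Next I introduce the unitary gauge $J=\lan e^{-iQ(t)}\ran$. Because $q$ is $\t$-periodic and $Q(\t)=0$ by \er{v5}, the primitive $Q$ is itself $\t$-periodic and bounded in $\cB$, so $J$ is a well-defined unitary on $\wt\cH$. Setting $H=J^*\wt h J$ and using Duhamel's formula for $\partial_t e^{\pm iQ(t)}$, a direct computation shows that $H$ is again a quasienergy operator, $H=\wt h_0+\lan T(t)\ran$, with
$$
T=\big(e^{iQ}h_0e^{-iQ}-h_0\big)+e^{iQ}ve^{-iQ}+D,\qq D=i\!\int_0^1\!\!\!\int_\s^1 e^{irQ}[Q,q]e^{-irQ}\,dr\,d\s .
$$
The first summand equals $i\int_0^1 e^{isQ}[Q,h_0]e^{-isQ}ds$ and lies in $L^1(\T_\t,\cB_1)$ because $[Q,h_0]=Qh_0-h_0Q\in L^1(\T_\t,\cB_1)$ by \er{v5x}; the second lies in $L^1(\T_\t,\cB_1)$ since $v\in L^1(\T_\t,\cB_1)$ and conjugation preserves the trace class.

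The defect $D$ is the crux and the step I expect to be the main obstacle. It is built from $[Q,q]=Qq-qQ$, and since $Q\in\cB_2$ while $q$ is merely bounded, $[Q,q]$ is only Hilbert-Schmidt valued (not trace class) pointwise in $t$; thus $\lan T\ran$ is not even compact on $\wt\cH$ and ordinary trace class perturbation theory does not apply to $H-\wt h_0$ directly. The resolution exploits the zero average. Writing $D=\tfrac{i}{2}[Q,q]+D_1$, the remainder $D_1$ carries an extra commutator with $Q\in\cB_2$, is therefore trace class valued, and is absorbed with the first two summands (after a further integration by parts if needed). For the leading piece $\tfrac{i}{2}[Q,q]$ I use that $\tfrac{d}{dt}Q^2=qQ+Qq$ with $Q(0)=0$ gives $\int_0^t(qQ+Qq)\,ds=Q(t)^2\in\cB_1$, while \er{v5x} gives $\int_0^t qQ\,ds\in L^\iy(\T_\t,\cB_1)$; hence $\int_0^t[Q,q]\,ds\in L^\iy(\T_\t,\cB_1)$. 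Consequently, inserting $\tfrac{i}{2}[Q,q]$ into the Duhamel expansion of the monodromy $U_H(\t,0)-e^{-i\t h_0}$ and integrating by parts in the time variable moves the integral onto $\int_0^{\cdot}[Q,q]\,ds$, yielding a genuinely trace class operator; the boundary term is controlled because $\int_0^\t[Q,q]\,ds=-2\int_0^\t qQ\,ds\in\cB_1$ by $Q(\t)=0$. Together with the two $\cB_1$-valued summands and $D_1$, this gives $U_H(\t,0)-e^{-i\t h_0}\in\cB_1$.

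Finally, the trace class difference of the monodromies lets me apply the Kato--Rosenblum/Birman--Krein theorem for the unitary pair $\big(U_H(\t,0),e^{-i\t h_0}\big)$ on $\cH$, giving existence and completeness of the associated discrete wave operators. Via the quasienergy identification $e^{-i\t H}=\lan M_H(t)\ran$ and the spectral correspondence \er{2.5}, this returns existence and completeness of $W_\pm(H,\wt h_0)$. Undoing the gauge exactly as in \er{w2}, using $(J-I)R_0(\l)\in\wt\cB_\iy$ (so that $J$ does not affect $e^{-is\wt h_0}$ asymptotically on the absolutely continuous subspace), transfers this to $\wt W_\pm(\wt h,\wt h_0)$, and the equivalence \er{Wo12} yields the claim for $W_\pm$ on $\cH$, with $\Ran W_\pm=\cH_{ac}(U(\t,0))$.
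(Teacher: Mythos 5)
Your overall strategy (gauge away $q$, reduce to a trace class situation, undo the gauge by compactness) is the right one, but you chose a different gauge from the paper, and that choice is where the trouble starts. In the abstract setting of Theorem \ref{T2} the paper does \emph{not} use $J(t)=e^{-iQ(t)}$: it uses the exact propagator $J(t)=\cU(t,0,q)$ of the Hamiltonian $q(t)$ alone (Lemma \ref{Tq1}). With that choice the transformation $\ul U=J^*UJ$ removes $q$ with no commutator defect: one gets $\ul h=J^*(h_0+v)J=h_0+J^*(iK+K_\bu)+J^*vJ$ with $K=Qh_0-h_0Q$, $K_\bu=h_0Q_\bu-Q_\bu h_0$, where $J=I-iQ+Q_\bu$ and $Q_\bu\in L^\iy(\T_\t,\cB_1)$ comes from the Dyson series precisely because $\int_0^tq(s)Q(s)ds\in L^\iy(\T_\t,\cB_1)$ is assumed in \er{v5x}. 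Everything then lands in $L^1(\T_\t,\cB_1)$ and Howland--Schmidt (Theorem \ref{THS}) applies directly; the return to $W_\pm(h,h_0)$ uses Lemma \ref{TJ} exactly as you describe. Your gauge $e^{-iQ(t)}$ coincides with the paper's only when $Q(t)$ commutes with $q(t)$ (the scalar case of Theorem \ref{T1}); in the operator-valued case it produces the defect $D$ built from $[Q,q]$, which, as you correctly observe, is only $\cB_2$-valued, and you are then forced into a substantially harder argument.

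That harder argument has two genuine gaps. First, your splitting $D=\tfrac{i}{2}[Q,q]+D_1$ needs $D_1\in L^1(\T_\t,\cB_1)$; but $D_1$ is controlled by $\|[Q,[Q,q]](t)\|_{\cB_1}\le 4\|Q(t)\|_{\cB_2}^2\|q(t)\|_{\cB}$, and Condition H only gives $\|Q\|_{\cB_2}^2\in L^1(\T_\t)$ and $\|q\|_{\cB}\in L^1(\T_\t)$ separately -- the product of two $L^1$ functions need not be integrable. A similar issue affects your claim $\int_0^t[Q,q]\,ds\in L^\iy(\T_\t,\cB_1)$: the piece $Q(t)^2$ obeys $\|Q(t)^2\|_{\cB_1}\le\|Q(t)\|_{\cB_2}^2$, which is only $L^1$ in $t$, not $L^\iy$. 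Second, even granting $U_H(\t,0)-e^{-i\t h_0}\in\cB_1$, Birman--Krein for the unitary pair yields only the \emph{discrete-time} wave operators $s-\lim_{n}M_H^{-n}M_0^{n}P_{ac}$; passing to the continuous-time limit $t\to\pm\iy$ requires an interpolation argument for $t=n\t+r$ (controlling $(U(0,r)e^{-irh_0}-I)e^{-in\t h_0}P_{ac}f$ via relative compactness of the perturbation, as in Lemma \ref{TJ}), which you do not supply -- this is precisely the content of the Howland--Schmidt theorem the paper invokes, and that theorem is stated for $L^1(\T_\t,\cB_1)$-valued perturbations, not for a trace class difference of monodromies. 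Both gaps disappear if you replace $e^{-iQ}$ by the propagator of $q$: Condition H is tailored to exactly that construction.
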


We discuss the existence of  a propagator $U(t,s), t,s\in \R$ for a
Hamiltonian $h(t)=h_0 +V(t)$ on a Hilbert space $\cH$, where $h_0$
is a bounded self-adjoint operator on $\cH$ and $V(t)$ belongs to
$L^1(\T_\t,\cB(\cH))$ for some $\t>0$. It is well known that the
propagator $U(t,s)$ is determined in the usual way by iteration of
the integral equation
\[
\lb{p2} U(t,s)=U_0(t,s)-i\int_s^t U_0(t,r)V(r)U(r,s)dr,
\]
or another integral equation
\[
\lb{p2x} U(t,s)=U_0(t,s)-i\int_s^t U(t,r)V(r)U_0(r,s)dr,
\]
where $ U_0(t,s)=e^{-i(t-s)h_0}$. The solution of this equation has
the form
\[
\lb{p3}
\begin{aligned}
& U(t,s)=U_0(t,s)+\sum_{j=1}^\iy (-i)^j U_j(t,s), \qq
\\
&  U_j(t,s)= \int_s^t \int_s^{t_1}...
\int_s^{t_{j-1}}U_0(t,t_1)V(t_1)U_0(t_1,t_2)...U_0(t_{j-1},t_j)V(t_j)U_0(t_{j},s)dt_1...dt_j.
\end{aligned}
\]
We recall the well-known results from \cite{H79}.

\begin{lemma}
\lb{Ta1} Let $h_0$ be a bounded self-adjoint operator on a Hilbert
space $\cH$. Let an operator-valued function $t\to V(t)=V^*(t)$
belong to $L^1(\T_\t,\cB(\cH))$ for some $\t>0$. Then there exists a
unique propagator $U(t,s), t,s\in \R$, which satisfies  the integral
equation \er{p2} and has the form \er{p3} where the series converges
uniformly in every bounded subsets in $(t,s,V)\in \R^2\ts
L^1(\T_\t,\cB(\cH))$ and has an estimate
\[
\lb{p3x} \|U(t,s)-U_0(t,s)\|\le \min\{A(t,s),1\}e^{A(t,s)}, \qqq
t,s\in \R,
\]
where $A(t,s)=\int_s^t \|V(\s)\|d\s$.
Moreover, $\wt h=\wt h_0+V$ is the quasienergy operator and
satisfies
\[
\begin{aligned}
\label{qe1}
   (e^{is \wt h}f)(t)=U(t,t+s)f(t+s), \qq f\in \wt\cH, \qq t,s\in \R,
\end{aligned}
\]
\[
\begin{aligned}
\label{qe1x} (e^{iz \wt h}e^{-iz \wt h_0}-I)f(t) =i \int_t^{t+z}
U(t,r)V(r)U_0(r,t)f(t) dr,\qqq t,z\in \R.
\end{aligned}
\]
\end{lemma}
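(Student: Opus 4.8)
The plan is to construct the propagator by the Dyson--Picard series \er{p3}, estimate it, verify the propagator axioms together with the integral equations \er{p2}, \er{p2x}, and then pass to the induced group on $\wt\cH$ in order to identify the quasienergy operator and establish Duhamel's formula \er{qe1x}. First I would take \er{p3} as the definition of $U(t,s)$ and bound each term. Since $U_0$ is unitary, for $t\ge s$ (the case $t\le s$ is symmetric) the $j$-th summand is integrated over the simplex $s\le t_j\le\cdots\le t_1\le t$, so
$$
\|U_j(t,s)\|\le\int_s^t\!\!\int_s^{t_1}\!\cdots\!\int_s^{t_{j-1}}\prod_{i=1}^j\|V(t_i)\|\,dt_1\cdots dt_j=\frac{A(t,s)^j}{j!},\qq A(t,s)=\int_s^t\|V(\s)\|d\s.
$$
Hence $\sum_{j\ge1}\|U_j(t,s)\|\le e^{A(t,s)}-1$, which converges uniformly on bounded subsets of $(t,s,V)\in\R^2\ts L^1(\T_\t,\cB(\cH))$; combined with the elementary inequality $e^A-1\le\min\{A,1\}e^A$ this gives \er{p3x}. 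The recursion $U_j(t,s)=\int_s^t U_0(t,r)V(r)U_{j-1}(r,s)dr$, summed against the factors $(-i)^j$, reproduces \er{p2}, and the symmetric recursion yields \er{p2x}. Uniqueness is a Gronwall argument: the difference $W$ of two solutions of \er{p2} solves the homogeneous equation, so $\|W(t,s)\|\le\frac{A(t,s)^n}{n!}\sup\|W\|\to0$.

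Next I would verify the propagator axioms. Clearly $U(t,t)=I$. The cocycle law $U(t,s)U(s,r)=U(t,r)$ holds because, for fixed $r,s$, both $t\mapsto U(t,s)U(s,r)$ and $t\mapsto U(t,r)$ solve \er{p2} with the common value $U(s,r)$ at $t=s$, so uniqueness forces equality. Differentiating \er{p2} and \er{p2x} gives, almost everywhere, $\pa_t U(t,s)=-ih(t)U(t,s)$ and $\pa_s U(t,s)=iU(t,s)h(s)$; self-adjointness of $V$ then yields $\frac{d}{dt}\big(U(t,s)^*U(t,s)\big)=0$, so $U(t,s)$ is unitary, while strong continuity follows from the uniform convergence of \er{p3}. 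Finally, $\t$-periodicity of $h$ together with uniqueness forces $U(t+\t,s+\t)=U(t,s)$.

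Then I turn to the quasienergy operator. Setting $(\wt U(s)f)(t)=U(t,t+s)f(t+s)$, the periodicity $U(t+\t,s+\t)=U(t,s)$ keeps $\wt U(s)f$ in $\wt\cH$, the cocycle law gives the group property $\wt U(s_1)\wt U(s_2)=\wt U(s_1+s_2)$, and unitarity of $U(t,\cdot)$ with the periodicity give $\|\wt U(s)f\|_{\wt\cH}=\|f\|_{\wt\cH}$; strong continuity is inherited from that of $U$. By Stone's theorem there is a self-adjoint $\wt h$ with $e^{is\wt h}=\wt U(s)$, which is exactly \er{qe1}. Differentiating \er{qe1} at $s=0$ on a core of smooth vectors, using $\pa_\s U(t,\s)\big|_{\s=t}=ih(t)$ and $\pa=-i\pa_t$, gives $\wt h f=\pa f+h_0f+V(t)f=(\wt h_0+V)f$, so $\wt h=\wt h_0+V$. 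For \er{qe1x}, since $\wt h-\wt h_0=V$ and the two groups commute, $\frac{d}{dz}\big(e^{iz\wt h}e^{-iz\wt h_0}\big)=ie^{iz\wt h}Ve^{-iz\wt h_0}$; integrating from $0$ to $z$ and unfolding the action of each factor via \er{U4}, followed by the substitution $r\mapsto t+r$, produces \er{qe1x}.

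The main obstacle is the generator identification. Because $V$ lies only in $L^1(\T_\t,\cB)$, the map $t\mapsto U(t,s)$ is merely absolutely continuous and the differential relations of the second paragraph hold only almost everywhere, so the differentiation of \er{qe1} at $s=0$ has to be carried out on a suitable dense core and reconciled with the domain equality $\mD(\wt h)=\mD(\wt h_0)$. It is cleanest to establish \er{qe1x} first, directly from the integral equation \er{p2}, and then read off $\wt h=\wt h_0+V$ from it, thereby bypassing any pointwise differentiation in $t$.
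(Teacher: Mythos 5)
Your proof is correct and, for the one part the paper actually proves in detail --- the term-by-term bound $\|U_j(t,s)\|\le A(t,s)^j/j!$ over the simplex leading to \er{p3x} --- it is identical to the paper's argument (including the elementary step $e^A-1\le\min\{A,1\}e^A$, which the paper leaves implicit). Everything else in your write-up (propagator axioms via uniqueness of the Volterra equation, Stone's theorem, the derivation of \er{qe1x} from \er{p2x}, and the caveat about a.e.\ differentiability when $V$ is only $L^1$ in time) is a correct reconstruction of material the paper simply delegates to Howland \cite{H79}.
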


\no {\bf Proof.} We will show \er{p3x} only, since all the other
results  were obtained in \cite{H79}. Consider the $j$-th term
$U_j(t,s)$:
\begin{eqnarray*}
\|U_j(t,s)\|  & \le& \int_s^t \int_s^{t_1}...
\int_s^{t_{j-1}}\|U_0(t,t_1)V(t_1)U_0(t_1,t_2)...U_0(t_{j-1},t_j)V(t_j)U_0(t_{j},s)\|dt_1...dt_j
\\
& \le& \int_s^t
\int_s^{t_1}...\int_s^{t_{j-1}}\|V(t_1)\|...\|V(t_j)\|dt_1...dt_j
 \\
& =& {1\/j!}\int_s^t
\int_s^{t}...\int_s^{t}\|V(t_1)\|...\|V(t_j)\|dt_1...dt_j={1\/j!}
\rt(\int_s^t \|V(t_1)\|dt_1\rt)^j.
\end{eqnarray*}
This shows that for each $t,s\in \R$ the series \er{p3} converges
uniformly and absolutely on every bounded subsets in $
L^1(\T_\t,\cB(\cH))$. Summing them up, we obtain \er{p3x}. \BBox

\medskip
In order to prove Theorem \ref{T2} we recall standard facts, see
e.g., \cite{K21}.

\begin{proposition}
\lb{Tre2}
\no i)
 Let $h_0$ be a  bounded self-adjoint operator  on a separable
  Hilbert space $\cH$. Then,
the resolvent $R_0(\l)=(\pa+h_0-\l)^{-1}, \l\in \C\sm \R $ has the form
\[
\lb{ddx1}
\begin{aligned}
R_0(\l)f(t)=i e^{it \vp}\int_0^\t \rt(\1_{t-s}+ {e^{i\t
\vp}\/1-e^{i\t \vp}}\rt) e^{-is\vp} f(s)ds, \quad  f\in L^2(\T_\t,\cH),
\end{aligned}
\]
where  $\vp=\l-h_0$ and  $\1_t=1, t>0$, $\1_t=0, t<0$.

\no ii) In addition,  for any operator-valued function $V\in
L^2(\T_\t,\cB_2(\cH))$, we have
\[
\lb{res2} V(\pa -i)^{-1}\in \wt\cB_2.
\]
\end{proposition}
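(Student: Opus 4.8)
The plan is to handle the two parts separately: derive the kernel in (i) by solving a first–order operator ODE with periodic boundary conditions, and then deduce (ii) directly from the explicit kernel of (i). For (i), set $u=R_0(\l)f$, so that $(\pa+h_0-\l)u=f$ becomes $-iu'(t)+(h_0-\l)u(t)=f(t)$, i.e. $u'(t)-i\vp u(t)=if(t)$ with $\vp=\l-h_0$. Since $h_0$ is bounded self-adjoint, the operators $\vp$ and $e^{\pm it\vp}$ are defined by the functional calculus and mutually commute. Multiplying by the integrating factor $e^{-it\vp}$ gives $\frac{d}{dt}\left(e^{-it\vp}u(t)\right)=ie^{-it\vp}f(t)$, and integrating from $0$ to $t$ yields $u(t)=e^{it\vp}u(0)+ie^{it\vp}\int_0^t e^{-is\vp}f(s)\,ds$. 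The periodicity $u(\t)=u(0)$ imposed by $\wt\cH=L^2(\T_\t,\cH)$ then forces $(I-e^{i\t\vp})u(0)=ie^{i\t\vp}\int_0^\t e^{-is\vp}f(s)\,ds$; substituting the resulting $u(0)$ back and encoding the lower integration limit through the indicator $\1_{t-s}$ produces exactly \er{ddx1}. The only genuine point to check is the invertibility of $I-e^{i\t\vp}$ for $\l\in\C\sm\R$: writing $e^{i\t\vp}=e^{i\t\l}e^{-i\t h_0}$ with $e^{-i\t h_0}$ unitary, its spectrum lies on the circle of radius $e^{-\t\Im\l}\ne 1$, so $1\notin\s(e^{i\t\vp})$ and the inverse is bounded. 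Self-adjointness of $\pa+h_0$ on $\mD(\pa)$ guarantees a priori that $R_0(\l)$ is bounded for $\Im\l\ne0$, so this ODE solution is the genuine resolvent.

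For (ii), I would specialize (i) to $h_0=0$ and $\l=i$, so $\vp=i$ and the kernel of $(\pa-i)^{-1}$ is the scalar (times $I_\cH$) function $K(t,s)=ie^{s-t}\left(\1_{t-s}+\frac{e^{-\t}}{1-e^{-\t}}\right)$, which is bounded on $[0,\t]^2$, say $|K(t,s)|\le C_\t$. Composing on the left with $V=\lan V(t)\ran$, the operator $V(\pa-i)^{-1}$ is the integral operator on $\wt\cH$ with operator-valued kernel $\cK(t,s)=V(t)K(t,s)\in\cB_2(\cH)$. Invoking the Hilbert–Schmidt criterion for integral operators with $\cB_2(\cH)$-valued kernels, $\|V(\pa-i)^{-1}\|_{\wt\cB_2}^2=\iint\|\cK(t,s)\|_{\cB_2}^2\,d\mu(t)\,d\mu(s)$ against the normalized measure $d\mu=dt/\t$, and using $\|\cK(t,s)\|_{\cB_2}=|K(t,s)|\,\|V(t)\|_{\cB_2}\le C_\t\|V(t)\|_{\cB_2}$, the double integral is dominated by $C_\t^2\int_0^\t\|V(t)\|_{\cB_2}^2\,d\mu(t)=C_\t^2\|V\|_{L^2(\T_\t,\cB_2)}^2<\iy$ (the precise $C_\t$ absorbs the normalization factor between $ds$ and $d\mu$). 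This gives \er{res2}.

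Equivalently, one may pass to Fourier modes via $\F$: there $(\pa-i)^{-1}$ is the diagonal multiplier with symbols $(\o m-i)^{-1}$, while $V$ becomes the block operator with blocks the Fourier coefficients $\wh V_{n-m}\in\cB_2(\cH)$, so that $\|V(\pa-i)^{-1}\|_{\wt\cB_2}^2=\left(\sum_k\|\wh V_k\|_{\cB_2}^2\right)\left(\sum_m|\o m-i|^{-2}\right)$, finite by Parseval together with $\sum_m(\o^2m^2+1)^{-1}<\iy$. Neither part is deep; the work is in the bookkeeping. The main obstacle in (i) is the invertibility of $I-e^{i\t\vp}$, which is precisely where the hypothesis $\l\notin\R$ enters, while in (ii) the only care needed is the correct Hilbert–Schmidt norm formula for an operator-valued integral kernel (or, equivalently, the block/Parseval computation) — the factorization of the $t$- and $s$-dependence in $K$ making the estimate routine once the kernel's boundedness is recorded.
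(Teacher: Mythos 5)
Your proof is correct. Note that the paper itself does not prove Proposition \ref{Tre2}: it is stated as a ``standard fact'' with a citation to \cite{K21}, so there is no in-paper argument to compare against. Your derivation is the natural one and checks out in every detail: the integrating-factor solution of $u'-i\vp u=if$ with the periodic boundary condition $u(\t)=u(0)$ yields exactly \er{ddx1}, the invertibility of $I-e^{i\t\vp}$ is correctly reduced to $\|e^{i\t\vp}\|$ lying on a circle of radius $e^{-\t\Im\l}\neq 1$, and part (ii) follows from either the bounded scalar kernel of $(\pa-i)^{-1}$ combined with the Hilbert--Schmidt criterion for $\cB_2(\cH)$-valued kernels or, equivalently, from the Fourier-block computation with $\sum_m|\o m-i|^{-2}<\iy$; both routes you sketch are sound, so your write-up in fact supplies details the paper delegates to a reference.
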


We need the following fact for gauge transformations.

\begin{lemma} \label{TJ}
Let $h$ be self-adjoint on a Hilbert space $\cH$. Then for a bounded
operator $a$ on $\cH$ such that $a(h-i)^{-1}$ is compact, we have
for $f\in \cH_{ac}(h)$:
\[
\lb{J1} \|ae^{is h}f\|\to 0 \qqq \as \qq s\to \pm \iy.
\]
 Let, in addition,   $A\in
L^1(\T_\t,\cB_2)\cap L^\iy(\T_\t,\B)$ and $h$ be  bounded. Then
\[
\lb{J2} \|A(s)e^{is h}f\|\to 0 \qqq \as \qq s\to \pm \iy.
\]
\end{lemma}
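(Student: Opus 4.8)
The plan is to derive both statements from a single mechanism: for $f\in\cH_{ac}(h)$ the orbit $e^{ish}f$ tends weakly to $0$ as $s\to\pm\iy$, while a compact operator sends a weakly null family to a norm null one. The weak-convergence input is the Riemann--Lebesgue lemma: for $g\in\cH_{ac}(h)$ and any $\psi\in\cH$ the complex measure $\langle\psi,E_h(\cdot)g\rangle$ is absolutely continuous (it is dominated by $\|E_h(\cdot)\psi\|\,\|E_h(\cdot)g\|$, and $\|E_h(\cdot)g\|^2$ is a.c. since $g\in\cH_{ac}$), hence has an $L^1$ density and $\langle\psi,e^{ish}g\rangle=\int_\R e^{is\l}\,d\langle\psi,E_h(\l)g\rangle\to0$.

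To prove \er{J1} I would first take $f\in\mD(h)\cap\cH_{ac}(h)$ and use that $\cH_{ac}(h)$ reduces $h$, so it is preserved by $(h-i)^{\pm1}$ and $g:=(h-i)f\in\cH_{ac}(h)$. Then $ae^{ish}f=a(h-i)^{-1}e^{ish}(h-i)f=a(h-i)^{-1}\,e^{ish}g$, where $a(h-i)^{-1}$ is compact by hypothesis and $e^{ish}g$ is weakly null by the previous paragraph; therefore $\|ae^{ish}f\|\to0$. The domain restriction is removed by density: for general $f\in\cH_{ac}(h)$ pick $f_n\in\mD(h)\cap\cH_{ac}(h)$ with $f_n\to f$ and estimate $\|ae^{ish}f\|\le\|a\|\,\|f-f_n\|+\|ae^{ish}f_n\|$, choosing $n$ to kill the first term and then letting $|s|\to\iy$ for the fixed $n$.

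For \er{J2} the new feature is that the $\cB_2$-valued (hence compact) operator $A(s)$ varies with $s$, and here the $\t$-periodicity is essential. I would expand $A$ in its Fourier series $A(s)=\sum_{m\in\Z}\wh A_m\,e^{i\o ms}$ with $\wh A_m\in\cB_2$ and $\o=\tfrac{2\pi}{\t}$, so that $A(s)e^{ish}f=\sum_m\wh A_m\,e^{is(h+\o m)}f$. For each fixed $m$, $\wh A_m$ is compact, $h+\o m$ is bounded self-adjoint, and $f\in\cH_{ac}(h)=\cH_{ac}(h+\o m)$; thus part \er{J1}, applied with $a=\wh A_m$ and $h$ replaced by $h+\o m$, gives $\|\wh A_m\,e^{is(h+\o m)}f\|\to0$ as $s\to\pm\iy$. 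What remains is to sum these contributions, using the two-sided control $A\in L^1(\T_\t,\cB_2)\cap L^\iy(\T_\t,\cB)$: the $L^1$ bound to make the cumulative $\cB_2$-mass of the high Fourier modes small and the $L^\iy$ bound to keep every term under $\|A(s)\|_\cB\,\|f\|$, so as to interchange $\lim_s$ with $\sum_m$.

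The main obstacle is precisely this interchange, equivalently the passage from decay for each fixed mode (or, writing $s=n\t+r$ with $r\in[0,\t)$, decay for each fixed $r$) to decay that is uniform in $s$ over a whole period. The decay in \er{J1} is genuinely pointwise for one fixed compact operator, but the approximation of $A$ by finitely many Fourier modes is controlled only in $L^1(\T_\t,\cB_2)$, not uniformly in $s$; on the small set $E_R=\{s:\|A(s)\|_{\cB_2}>R\}$, whose measure is $O(1/R)$ by Chebyshev from the $L^1(\cB_2)$ bound, one has only $\|A(s)\|_\cB\le C$. I would therefore split off $E_R$, treat the good part (where $A$ is bounded in $\cB_2$, so the family $\{A(r)\}$ lies in a norm-compact set and the finite-mode reduction applies uniformly) by the argument above, and use the small measure of $E_R$ together with the periodicity to rule out a persistent contribution from the bad part. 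Securing this uniform control of the family $\{A(r)\}_{r\in[0,\t)}$ is the crux; the rest is routine.
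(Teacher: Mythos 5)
Your argument for \er{J1} is correct and is the standard one (weak convergence to zero of $e^{ish}f$ for $f\in\cH_{ac}(h)$ via Riemann--Lebesgue, compactness of $a(h-i)^{-1}$, and a density argument to remove the domain restriction); the paper simply cites \cite{RS75} for this part.

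For \er{J2} there is a genuine gap, and you have in fact located it yourself without closing it. The hypothesis $A\in L^1(\T_\t,\cB_2)$ gives only $\sup_m\|\wh A_m\|_{\cB_2}\le \t^{-1}\|A\|_{L^1(\cB_2)}$ together with $\|\wh A_m\|_{\cB_2}\to0$; it gives no summability of $\sum_m\|\wh A_m\|$, so the series $\sum_m\wh A_m e^{is(h+\o m)}f$ need not converge, and the partial Fourier sums need not converge to $A(s)$ pointwise or uniformly (for an $L^1$ function they can diverge almost everywhere). Your proposed repair does not work either: a family that is merely bounded in $\cB_2$ is not a norm-compact, or even totally bounded, set of compact operators, so the claim that ``the finite-mode reduction applies uniformly'' on the good set is unjustified; and on the exceptional set $E_R$, which recurs with period $\t$ and has positive measure, you control only $\|A(s)\|_{\cB}$, which gives no decay at all, so a limit over \emph{all} $s\to\pm\iy$ cannot be extracted from smallness of $|E_R|$. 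The paper avoids Fourier modes entirely: it forms $\lan A(t)\ran$ and the quasienergy operator $\wt h=\pa+h$ on $\wt\cH=L^2(\T_\t,\cH)$, uses \er{res2} to get $\lan A(t)\ran(\pa+h-i)^{-1}\in\wt\cB_\iy$, applies the abstract statement \er{J1} in $\wt\cH$ (noting that the constant function $f$ lies in $\wt\cH_{ac}(\wt h)$ when $f\in\cH_{ac}(h)$) to conclude $\int_0^\t\|A(t+\s)e^{i\s h}f\|^2\,dt\to0$, and then upgrades this averaged decay to decay of $s\mapsto\|A(s)e^{ish}f\|$ by an equicontinuity-in-$t$ estimate that uses precisely the $L^\iy(\T_\t,\cB)$ bound. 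That averaging-to-pointwise step is the device your sketch is missing, and it is the part of the lemma that actually requires an argument.
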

\no \begin{proof} The fact \er{J1} is well-known, see e.g.,
\cite{RS75}. We show \er{J2}. Consider operators $A=\lan A(t)\ran$
and $\wt h=\pa +h$ acting in $\wt H$. Let $\|\cdot \|=\|\cdot
\|_{\wt \cH}$ and let $\lan \cdot,\cdot \ran$ be the scalar product
in $\cH$. Due to \er{res2} the operator $A(\pa +h-i)^{-1}\in \wt
\cB_\iy$. Then \er{J1} give for $ s\to \pm \iy$:
$$
\|e^{i\s \pa}A e^{i\s \wt h}f\|_{\wt \cH}^2=\int_0^\t \|A(t+\s)
e^{i\s h}f\|^2dt=\int_0^\t F(t,s)dt
 \to 0,
$$
where $s=t+\s$ and $F(t,s):=\|A(s)e^{is h}e^{-it h}f\|^2$. This
yields $F(t,s)\to \gF(t)$ as $s\to \pm \iy$ for almost all $t\in
[0,\t]$, where $\gF(t)=0$ for almost all $t\in [0,\t]$. We show that
$\gF(t)$ is continuous. We prove it at the point $t=0$, the proof for the
other points is similar. Let $(s,t)\in \R\ts [0,\t]$. Define the operators
$$
X(s,t)=g(s)e^{it h}- e^{it h}g(s),\qq  g(s)=e^{-is h}A^*(s)A(s)e^{is
h}.
$$
Using the identity
$$
\begin{aligned}
F(0,s)-F(t,s)=\lan X(s,t)e^{-it h}f,f \ran=\lan g(s) (e^{-it
h}-I)f,f\ran   -\lan  (e^{it h}-I)g(s)e^{-it h}f,f \ran
\\
\end{aligned}
$$
we obtain
$$
\begin{aligned}
|F(0,s)-F(t,s)|\le 2\|f\|\|(e^{-it h}-I)f\| \sup_{t\in
(0,\t)}\|A(t)\|^2.
\\
\end{aligned}
$$
If $F(t,s)\to 0$ as $s\to \pm \iy$, then $|F(0,s)-F(t,s)|$ is small
for small $t$, which yields that  $\gF(0,s)=o(1)$ as $s\to \pm \iy$.
\end{proof}

We use one more  property  of a  propagator $U(t,s), t,s\in \R$ for  time
periodic  Hamiltonians.

\begin{lemma} \label{Tq1}
 Let an operator-valued function $q(\cdot): \R\to \cB(\cH)$ be such
 that
\[
\lb{gG1} q(t)=q^*(t), \quad \forall\ t\in \R,\qq q\in L^1(\T_\t,\cB),\qq
Q(\t)=0,
\quad J_2\in L^\iy(\T_\t,\cB_1),
\]
where $Q(t):=\int_0^t q(s)ds$ and $J_2(t):=\int_0^t q(s)Q(s)ds$.
 Then for the Hamiltonian  $q(t), t\in \R$ there exists  a propagator
$U(t,s), t,s\in \R$, which satisfies for all $t\in \T_\t$
\[
\lb{ug1} U(t,0)=I-iQ(t)+Q_\bu(t),
\]
\begin{equation}
  \label{u1}
 \|Q_\bu(t)\|_{\cB_1}\le Ce^{\int_0^t \|q(s)\|ds},\qqq \where \qq C=  \sup_{t\in
 \T_\t}\|J_2(t)\|_{\cB_1}.
\end{equation}
\end{lemma}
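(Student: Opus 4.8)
The plan is to read the propagator off the Dyson series of Lemma \ref{Ta1} applied with $h_0=0$, and then to rewrite every term of order $\ge2$ so that the trace-class operator $J_2$ appears explicitly as a factor. Since $q\in L^1(\T_\t,\cB)$ by \er{gG1}, Lemma \ref{Ta1} applies with $h_0=0$; then $U_0(t,s)=I$, the propagator $U(t,0)$ exists, and the expansion \er{p3} collapses to the time-ordered integrals
\[
U(t,0)=I+\sum_{j=1}^\iy(-i)^jU_j(t,0),\qq
U_j(t,0)=\int_{t>t_1>\cdots>t_j>0}q(t_1)q(t_2)\cdots q(t_j)\,dt_1\cdots dt_j,
\]
the series converging in operator norm by \er{p3x}. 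The first term is $(-i)U_1(t,0)=-i\int_0^tq(t_1)dt_1=-iQ(t)$, so setting $Q_\bu(t):=\sum_{j\ge2}(-i)^jU_j(t,0)$ gives \er{ug1} at once; it then remains to establish the trace-norm bound \er{u1}.

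The key step is the observation that the two innermost integrations in each $U_j$ reproduce $J_2$. By definition of $Q$ the innermost integral is $\int_0^{t_{j-1}}q(t_j)\,dt_j=Q(t_{j-1})$, placing $Q(t_{j-1})$ to the right of the remaining operators; integrating the next variable then gives $\int_0^{t_{j-2}}q(t_{j-1})Q(t_{j-1})\,dt_{j-1}=J_2(t_{j-2})$, which matches the operator ordering in $J_2(s)=\int_0^sq(r)Q(r)dr$. Hence for every $j\ge2$,
\[
U_j(t,0)=\int_{t>t_1>\cdots>t_{j-2}>0}q(t_1)\cdots q(t_{j-2})\,J_2(t_{j-2})\,dt_1\cdots dt_{j-2},
\]
with the convention that for $j=2$ the simplex integral is absent and $U_2(t,0)=J_2(t)$.

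This is where the hypotheses are used exactly as intended: $q$ itself need only lie in $L^1(\T_\t,\cB)$, but $J_2\in L^\iy(\T_\t,\cB_1)$ with $\|J_2(s)\|_{\cB_1}\le C$, so I would keep $J_2$ as the single trace-class factor and estimate the rest in operator norm. Using $\|AB\|_{\cB_1}\le\|A\|_{\cB}\|B\|_{\cB_1}$ this yields
\[
\|U_j(t,0)\|_{\cB_1}\le C\int_{t>t_1>\cdots>t_{j-2}>0}\|q(t_1)\|\cdots\|q(t_{j-2})\|\,dt_1\cdots dt_{j-2}={C\/(j-2)!}\lt(\int_0^t\|q(s)\|ds\rt)^{j-2},
\]
the last equality being the usual symmetrization of a time-ordered scalar integral over the simplex. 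Summing over $j\ge2$ and reindexing $m=j-2$ gives $\|Q_\bu(t)\|_{\cB_1}\le C\sum_{m\ge0}\frac1{m!}\big(\int_0^t\|q(s)\|ds\big)^m=Ce^{\int_0^t\|q(s)\|ds}$, which is \er{u1}. This bound also shows the series for $Q_\bu$ converges in $\cB_1$ and hence agrees with the operator-norm limit from Lemma \ref{Ta1}.

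The only real obstacle is to resist estimating the Dyson series term by term in operator norm, which would merely reproduce the bound \er{p3x} of Lemma \ref{Ta1} and discard all trace-class information (indeed $q$ need not be trace class at all). The decisive point is to recognize the second-order term as precisely $J_2(t)$ and, more generally, to extract a factor $J_2(t_{j-2})$ from each higher-order term; this is exactly what converts the single hypothesis $J_2\in L^\iy(\T_\t,\cB_1)$ into the claimed uniform trace-norm estimate, and it is the place where the full strength of \er{gG1} enters.
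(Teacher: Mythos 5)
Your proposal is correct and follows essentially the same route as the paper: both extract the trace-class factor $J_2$ from the two innermost integrations of the Dyson series for $h_0=0$, estimate the remaining simplex integral of $\|q\|$ in operator norm to get the ${1/(j-2)!}$ factor, and sum to obtain $Ce^{\int_0^t\|q(s)\|ds}$. The paper merely phrases the same computation via the recursion $J_{j+1}(t)=-i\int_0^t q(r)J_j(r)\,dr$ instead of the explicit simplex integral.
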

\begin{proof}
 From  Lemma  \ref{Ta1} we deduce that for $q(t),
t\in \R$ there exists  a propagator $U(t,s,q), t,s\in \R$ and
$J(t)=U(t,0,q), t\in \R$ having the form \er{ug1}, where
\[
\lb{1x1} Q_\bu(t)= \sum_{j=2}^\iy  J_j(t), \ \ J_{j+1}(t)=-i\int_0^t
q(r)J_j(r)dr,\ \ J_{3}(t)=-i\int_0^t q(s_1)J_2(s_1)ds_1,...
\]
Let $f_t=\|J_{2}(t)\|_{\cB_1}$ and $C=\sup_{t\in \T_\t}f_t$ and
$y_t=\|q(t)\|$. Then using \er{1x1} we have
$$
\begin{aligned}
& \|J_{3}(t)\|_{\cB_1}\le \int_0^t y_s f_sds\le C\int_0^t y_sds,
\\
&  \|J_{4}(t)\|_{\cB_1}\le C \int_0^t y_{s_1}ds_1 \int_0^{s_1}
y_{s_2}ds_2=
  {C\/2}\rt(\int_0^t y_{s_1}ds_1\rt)^2,
\\
&  \|J_{5}(t)\|_{\cB_1}\le  C \int_0^t y_{s_1}ds_1 \int_0^{s_1}
y_{s_2}ds_2 \int_0^{s_2} y_{s_3}ds_3
  ={C\/3!}\rt(\int_0^t y_{s_1}ds_1\rt)^3,....
\end{aligned}
$$
This shows that for each $t\in \R$ the series \er{1x1} converges
uniformly and absolutely on every bounded subsets in $(t,s,J_2)\in
\R^2\ts L^\iy(\T_\t,\cB_1(\cH))$.  Summing the majorants we obtain the
estimate \er{u1}.
\end{proof}

In order to discuss  scattering with time periodic trace class
perturbations $V(t)$ we recall the  following result of Howland
(p. 483 in \cite{H79}), see also Schmidt's paper \cite{S75}.

\begin{theorem}
\lb{THS} Let $h_0$ be a bounded self-adjoint operator acting on a
separable Hilbert  space $\cH$.  Let $V(t)$ be a measurable
self-adjoint function $V(\cdot): \R\to \cB_1(\cH)$ . Assume that
$V(t)$ has period $\t$ and $\int_0^\t \|V(t)\|_{\cB_1}dt<\iy$  and
let $U(t,s)$ be the propagator for  $h(t)=h_0+V(t), t\in \T_\t$.
Then the wave operators
\[
W_\pm=s-\lim U(0,t)e^{-ith_0}P_{ac}(h_0)\qqq \as \ t\to \pm \iy
\]
 exist and complete, i.e. $W_\pm\cH=\cH_{ac}(U(\t,0))$.
 \end{theorem}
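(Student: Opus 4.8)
The plan is to strip off the non-trace-class, zero-average part $q$ by a gauge transformation built from its own propagator, thereby reducing to a time-dependent \emph{trace-class} perturbation of $h_0$, and then to run trace-class scattering at the level of the period map. First I would let $J(t)$ be the propagator of the single Hamiltonian $q(t)$ with $J(0)=I$. By Lemma \ref{Tq1}, Condition H gives $J(t)=I-iQ(t)+Q_\bu(t)$ with $Q_\bu\in L^\iy(\T_\t,\cB_1)$, and $J$ is unitary. Conjugating, $\wt U(t,s):=J(t)^*U(t,s)J(s)$ is the propagator of $\wt h(t)=J(t)^*(h_0+v(t))J(t)$, in which $q$ has disappeared. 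Writing $\wt h(t)=h_0+\wt V(t)$ with $\wt V=J^*[h_0,J]+J^*vJ$ and $[h_0,J]=-i[h_0,Q]+[h_0,Q_\bu]$, the two decisive hypotheses of Condition H, namely $(Qh_0-h_0Q)\in L^1(\T_\t,\cB_1)$ and $Q_\bu\in L^\iy(\T_\t,\cB_1)$, together with $v\in L^1(\T_\t,\cB_1)$, yield $\wt V\in L^1(\T_\t,\cB_1)$. The one genuine subtlety is that $J$ is \emph{not} $\t$-periodic: since $Q(\t)=0$ its monodromy is $M_q:=J(\t)=I+Q_\bu(\t)$ with $Q_\bu(\t)\in\cB_1$, so the gauged system is periodic only up to a trace-class twist, which is why I cannot simply quote Howland's Theorem \ref{THS}.

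Next I would pass to the period map. Since $h$ is $\t$-periodic, $U(t+\t,s+\t)=U(t,s)$, so the monodromy $M:=U(\t,0)$ satisfies $U(0,n\t)=(M^*)^{n}$. From $U(\t,0)=J(\t)\wt U(\t,0)J(0)^*=M_q\wt U(\t,0)$ and the Dyson expansion (Lemma \ref{Ta1}) with $\wt V\in L^1(\T_\t,\cB_1)$, one has $\wt U(\t,0)-e^{-i\t h_0}\in\cB_1$, whence
\[
M-e^{-i\t h_0}=(M_q-I)\wt U(\t,0)+\big(\wt U(\t,0)-e^{-i\t h_0}\big)\in\cB_1 .
\]
Thus the monodromies $M$ and $M_0:=e^{-i\t h_0}$ are unitaries differing by a trace-class operator. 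By the trace-class (Kato--Rosenblum) scattering theorem for unitary operators, the discrete counterpart of Howland's Theorem \ref{THS}, and with $P_{ac}(M_0)=P_{ac}(h_0)$, the limits $\Omega_\pm:=s-\lim_{n\to\pm\iy}(M^*)^{n}M_0^{n}P_{ac}(h_0)$ exist and are complete, with $\Ran\Omega_\pm=\cH_{ac}(U(\t,0))$.

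Then I would interpolate over one period to recover the continuous wave operator. For $r\in[0,\t]$ put $N(r):=U(0,r)-e^{irh_0}$. Using $U(0,r)=\wt U(0,r)J(r)^*$, the decomposition $\wt U(0,r)=e^{irh_0}+\wt K(r)$ with $\wt K(r)\in\cB_1$, and $J(r)^*-I=iQ(r)+Q_\bu(r)^*$ with $Q(r)\in\cB_2$ (from $Q\in L^2(\T_\t,\cB_2)$) and $Q_\bu(r)^*\in\cB_1$, one finds $N(r)\in\cB_2$ with $r\mapsto N(r)$ norm-continuous on $[0,\t]$; hence $\{N(r):r\in[0,\t]\}$ is a norm-compact family of compact operators. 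Writing $t=n\t+r$, using $U(0,n\t+r)=(M^*)^{n}U(0,r)$ and $U(0,r)e^{-irh_0}=I+N(r)e^{-irh_0}$, I get
\[
U(0,t)e^{-ith_0}P_{ac}(h_0)f=(M^*)^{n}M_0^{n}P_{ac}(h_0)f+(M^*)^{n}N(r)e^{-ith_0}P_{ac}(h_0)f .
\]
The first term converges to $\Omega_\pm f$. For the second, unitarity of $(M^*)^{n}$ gives norm $\|N(r)e^{-ith_0}P_{ac}(h_0)f\|$, and $e^{-ith_0}P_{ac}(h_0)f\rightharpoonup0$ as $t\to\pm\iy$ (Riemann--Lebesgue on the absolutely continuous subspace); a norm-compact family of compact operators applied to a bounded weakly null net tends to $0$ in norm, uniformly in $r\in[0,\t]$, so the second term vanishes. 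Therefore $W_\pm=\Omega_\pm$ exists and is complete, $\Ran W_\pm=\cH_{ac}(U(\t,0))$, which is the assertion.

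The hard part is exactly the non-periodicity of the gauge $J$: the trace-class monodromy $M_q\neq I$ blocks a direct one-line reduction to Theorem \ref{THS}, and the naive periodic repair $J(t)e^{itG}$ with $e^{-i\t G}=M_q$ fails, since the spurious factor $e^{-itG}$ does not decay on $\cH_{ac}(h_0)$. Routing completeness through the period map and interpolating over a single period circumvents this, and it is precisely here that Condition H is used decisively, through the compactness of the accumulated potential $Q$ and of the one-period corrections $\wt K(r)$ and $Q_\bu(r)$.
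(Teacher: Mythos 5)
Your argument is correct, but you should know that the paper does not prove Theorem \ref{THS} at all: it is quoted as a known result of Howland (p.~483 of \cite{H79}) and Schmidt \cite{S75}, with only the iterative construction of the propagator recalled afterwards. What you have written is essentially a reconstruction of the classical Schmidt--Howland proof, and its core is sound: the Dyson series (Lemma \ref{Ta1}) shows that the monodromy $M=U(\t,0)$ differs from $M_0=e^{-i\t h_0}$ by a trace-class operator (each term $U_j(\t,0)$, $j\ge 1$, contains a factor of $V$ and has $\cB_1$-norm at most $\frac{1}{j!}\bigl(\int_0^\t\|V\|_{\cB_1}dt\bigr)^j$); the trace-class (Kato--Rosenblum/Birman--Krein) theorem for unitaries, together with $P_{ac}(e^{-i\t h_0})=P_{ac}(h_0)$, gives existence and completeness of the discrete wave operators $\Omega_\pm$; and your one-period interpolation, using norm-continuity in $\cB_1$ of $N(r)=U(0,r)-e^{irh_0}=-i\int_0^r e^{i\s h_0}V(\s)U(\s,r)\,d\s$, compactness of this family, and Riemann--Lebesgue decay of $e^{-ith_0}P_{ac}(h_0)f$, correctly recovers $W_\pm=\Omega_\pm$ with $\Ran W_\pm=\cH_{ac}(U(\t,0))$. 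One caveat: your entire opening gauge layer (Condition H, Lemma \ref{Tq1}, $J$, $M_q$) is not part of Theorem \ref{THS}, whose hypothesis is already $V\in L^1(\T_\t,\cB_1)$; there is no $q$ to strip off, so one takes $q=0$, $J=I$, $M_q=I$ and starts at the Dyson step --- what you have actually proved is closer to the stronger Theorem \ref{T2}, and there your extension has a small soft spot ($Q\in L^2(\T_\t,\cB_2)$ gives $Q(r)\in\cB_2$ only for a.e.\ $r$, so the asserted norm-compactness of $\{N(r)\}$ needs more care). That said, your observation that the propagator gauge is $\t$-periodic only up to the trace-class twist $M_q=J(\t)=I+Q_\bu(\t)$, and your repair $M-e^{-i\t h_0}=(M_q-I)\wt U(\t,0)+\bigl(\wt U(\t,0)-e^{-i\t h_0}\bigr)\in\cB_1$, is genuinely valuable: the paper's proof of Theorem \ref{T2} asserts $J(\t)=I$, which is exact for the explicit exponential gauge $e^{-iQ(t)}$ of Theorem \ref{T1} but is not automatic for the propagator of $q$, and your period-map route closes precisely that gap.
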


Here the propagator $U(t,s)$  is obtained in the usual way by iteration
of the integral equation \er{p2}, where  $U_0(t,s)=e^{-i(t-s)h_0}$.
There is no problem for bounded operators with locally integrable norm.
The function $F(t,s)=U_0(s,t)U(t,s)$ satisfies
\[
F(t,s)=\1-i\int_s^t U_0(s,r)V(r)U(r,s)dr
\]
and is therefore locally continuous in norm.

\bigskip

 \no {\bf Proof of Theorem  \ref{T2}.} Let $h(t)=h_0+V(t), t\in
\T_\t$ on the separable Hilbert  space $\cH$,  where the operator
$h_0$ is bounded self-adjoint and the $\t$-periodic operator-valued
function $V=V^*=v+q$ satisfies Condition H. We have the propagator
$\cU(t,s,q)$ for $q(t)$ and let $J(t)=\cU(t,0,q)$. Due to Lemma
\ref{Tq1} the propagator $J(t)$ has the form
\[
\begin{aligned}
  \label{bb1z}
J(t)=I-iQ(t)+Q_\bu(t)\qqq  \forall\ \ t\in \R, \qqq Q_\bu\in
L^\iy(\T_\t,\cB_1).
\end{aligned}
\]
Due to conditions on $h(t)$ and Lemma \ref{Ta1} there exists the
propagator $U(t,s)$ for $h(t), t\in \R$. We make the unitary
transformation $\ul U(t,s)=J^*(t) U(t,s) J(s)$. Using \er{Uq}, we
obtain
\[
\begin{aligned}
  \label{uU}
& i{d\/dt}\Big(\ul U(t,s)f\big)=J^*(t)
\Big(-q(t)+h(t)\Big)U(t,s)J(t)f=\ul h(t)\ul U(t,s)f,
\\
& \ul h(t)=J^*(t)\Big(h_0+v\Big)J(t)=h_0+\ul V(t),\qq \ul
V(t)=V_1(t)+J^*v(t) J,
\\
& V_1=J^*h_0J-h_0 =J^*(iK+K_\bu), \qq
 \qq K=Q\D-\D Q,\qq  K_\bu=h_0Q_\bu-Q_\bu h_0.
\end{aligned}
\]
Thus  $\ul U(t,s)$ be the propagator for $\ul h(t), t\in \R$. From
Condition H and \er{bb1z} we deduce that $K, K_\bu\in L^1(\T_\t,
\cB_1)$. Then $\ul V\in L^1(\T_\t, \cB_1)$ and due to
Howland-Schmidt's Theorem \ref{THS}, the  wave operators
\[
\lb{Wop1aa} W_\pm(\ul h,h_0)=s-\lim \ul U(0,t)e^{\mp i
h_ot}P_{ac}(h_0)\qqq \as \ t\to \pm \iy
\]
 exist and are complete, i.e. $ W_\pm(\ul h,h_0)\cH=\cH_{ac}(\ul U(\t,0))$.
 In addition, \er{bb1z} , \er{J2}  imply
$$
\begin{aligned}
W_\pm(h,h_0)=s-\lim U(0,t)e^{\mp i h_0t}P_{ac}(h_0)=s-\lim
U(0,t)J(t) e^{\mp i h_0t}P_{ac}(h_0)
\\
=s-\lim \ul U(0,t) e^{\mp i h_0t}P_{ac}(h_0) =W_\pm(\ul h,h_0) \qqq
\as \ t\to \pm \iy.
\end{aligned}
$$
 Thus the wave operators $W_{\pm}(h,h_0)$ exist and complete, i.e.
${\rm Ran}\, W_{\pm}(h,h_0)=  \cH_{ac}(U(\t,0))$. Note that
 using $J(\t)=I$ we have the identity $\ul U(\t,0)=U(\t,0)$. \BBox

\begin{example}
 Let $h_0$ be a  bounded self-adjoint operator  on a separable
  Hilbert space $\cH$. Consider a hamiltonian $h(t)=h_0+q(t)$.
Let a $2\pi$-periodic perturbation $q(t)$ satisfy
\[
q(t)=\sum_{n\in \Z, n\ne 0}q_ne^{i2\pi nt}, \qq q_n\in \cB_1,\qq
\sum_{n\in \Z,n\ne 0}|n|^{-2\a}\|q_n\|_{\cB_1}^2  <\iy,\qq
\a<{1\/2}.
\]
 Then we have that $q\in L^2([0,2\pi],\cB_2)$  and
\[
Q(t)=\int_0^tq(s)ds=\int_0^t \sum_{n\in \Z, n\ne 0}q_ne^{i2\pi
ns}ds=\sum_{n\in \Z, n\ne 0}q_n {e^{i2\pi nt}-1\/i2\pi n}.
\]
Thus we obtain
$$\|Q(t)\|_{\cB_1}^2\le \sum_{n\in \Z,n\ne 0}|n|^{-2\a}\|q_n\|_{\cB_1}^2 \sum_{j\in \Z,
n\ne 0}{1\/\pi^2 j^{2-2\a}}<\iy,\qqq t\in \R.
$$
\end{example}

\section{Scattering on graphs}
\label{SectionScatteringongraph}
\setcounter{equation}{0}

\textcolor{blue}{From Theorem \ref{T2} we obtain the following result.
\begin{corollary}
\lb{Tc1}
 Let Schr\"odinger operators $h(t)=\D_{\b(t)}+\gp+V(t)$  and
 $h_\a=\D_{\a}+\gp$ act on the Hilbert  space
$\cH=\ell^2(\cV)$ and such that $\s_{ac}(h_\a)\neq \es$, where the
magnetic potential $\b=\a+\b_*$ satisfies Condition M, the electric
potential $\gp\in \ell^\iy(\cV)$ is real and the potential $V=v+q$,
where  $v,q$ are $\t$-periodic and satisfy:
$$
\begin{aligned}
v\in L^1(\T_\t,\ell^1(\cV)),\qq q\in L^1(\T_\t,\ell^\iy(\cV)),\qq
Q(\t)=0,\qq  Q\in L^2(\T_\t,\ell^2(\cV)),
\\
\int_0^t q(s)Q(s)ds\in L^\iy(\T_\t,\ell^1(\cV)),\qqq  (Q\D-\D Q)\in
L^1(\T_\t,\ell^1(\cV)),
\end{aligned}
$$
where $Q_x(t)=\int_0^t q_x(s)ds$ for $(x,t)\in \cV\ts \R$.  Then the
wave operator
$$
 W_\pm=s-\lim U(0,t)e^{-ith_\a}P_{ac}(h_\a)\quad  \as \quad  t\to \pm \iy
$$
 exists and is complete, i.e. $\Ran W_\pm=\cH_{ac}(U(\t,0))$.
 \end{corollary}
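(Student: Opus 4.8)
The plan is to obtain Corollary \ref{Tc1} as a direct application of the abstract Theorem \ref{T2}, taking the free operator to be $h_0=h_\a=\D_\a+\gp$ and identifying the perturbation $h(t)-h_\a$ with the decomposition required by Condition H. First I would check that $h_\a$ is an admissible $h_0$: by \er{2} the magnetic Laplacian satisfies $\s(\D_\a)\ss[0,\vk_+]$, and $\gp\in\ell^\iy(\cV)$ is real and bounded, so $h_\a$ is a bounded self-adjoint operator on $\cH=\ell^2(\cV)$. Writing $F(t)=\D_{\b(t)}-\D_\a$ and using $\b=\a+\b_*$, the perturbation is
\[
h(t)-h_\a=\D_{\b(t)}-\D_\a+V(t)=F(t)+v(t)+q(t),
\]
where $\gp$ cancels. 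In the notation of Condition H I would take the trace-class summand to be $F+v$ and the oscillating summand to be $q$; the associated primitive $Q(t)=\int_0^tq(s)ds$ is exactly the $Q$ of the corollary, and $h(t)-h_\a$ is self-adjoint and $\t$-periodic since $v,q$ are real and $\t$-periodic and $F$ is a $\t$-periodic difference of self-adjoint magnetic Laplacians.

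The bulk of the verification is bookkeeping that translates the $\ell^p(\cV)$ hypotheses into operator-ideal conditions. For a real multiplication operator with diagonal $(a_x)_{x\in\cV}$ one has $\|a\|_{\cB}=\|a\|_{\ell^\iy}$, $\|a\|_{\cB_2}=\|a\|_{\ell^2}$ and $\|a\|_{\cB_1}=\|a\|_{\ell^1}$. Hence $q\in L^1(\T_\t,\ell^\iy(\cV))$ gives $q\in L^1(\T_\t,\cB)$; $v\in L^1(\T_\t,\ell^1(\cV))$ gives $v\in L^1(\T_\t,\cB_1)$; $Q\in L^2(\T_\t,\ell^2(\cV))$ gives $Q\in L^2(\T_\t,\cB_2)$; and $\int_0^tq(s)Q(s)ds\in L^\iy(\T_\t,\ell^1(\cV))$ gives the corresponding $\cB_1$-bound. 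The magnetic part is supplied by Lemma \ref{Tmp}(ii): since $\b_*$ satisfies Condition M, $F\in L^1(\T_\t,\cB_1)$, whence the trace-class summand $F+v\in L^1(\T_\t,\cB_1)$. The condition $Q(\t)=0$ is assumed outright.

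The single place where the magnetic gauge genuinely intervenes is the commutator requirement $(Qh_0-h_0Q)\in L^1(\T_\t,\cB_1)$, and the key observation is that it reduces to the free-Laplacian hypothesis stated in the corollary. Since $Q$ and $\gp$ are both multiplication operators they commute, so $[Q,h_\a]=[Q,\D_\a]$; a direct computation from \er{1} yields
\[
([Q,\D_\a]f)_x=\tfrac{1}{2}\sum_{\be=(x,y)\in\cA}e^{i\a(\be)}\big(Q_y-Q_x\big)f_y.
\]
Bounding this as a sum of rank-one operators and using $|e^{i\a(\be)}|=1$ gives
\[
\|[Q,\D_\a](t)\|_{\cB_1}\le\tfrac{1}{2}\sum_{\be=(x,y)\in\cA}|Q_y(t)-Q_x(t)|,
\]
which is exactly the quantity controlled by the hypothesis $(Q\D-\D Q)\in L^1(\T_\t,\ell^1(\cV))$; thus the modulus-one phases make the magnetic and free commutator estimates coincide, and $[Q,h_\a]\in L^1(\T_\t,\cB_1)$ follows. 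With Condition H verified for $h_0=h_\a$ and perturbation $F+v+q$, Theorem \ref{T2} applies verbatim and yields the existence and completeness of $W_\pm=s-\lim U(0,t)e^{-ith_\a}P_{ac}(h_\a)$ with $\Ran W_\pm=\cH_{ac}(U(\t,0))$. I expect this gauge step to be the only genuine subtlety: one must confirm that inserting the magnetic phases leaves the trace-norm commutator estimate unchanged, so that the corollary may legitimately state its hypothesis in terms of the free Laplacian $\D$.
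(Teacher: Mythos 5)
Your proposal is correct and follows essentially the same route as the paper, whose entire proof of Corollary \ref{Tc1} is the one-line remark that its hypotheses imply Condition H of Theorem \ref{T2}; you have simply carried out that verification explicitly (identifying multiplication-operator norms with $\ell^p$ norms, invoking Lemma \ref{Tmp}(ii) for the magnetic part, and checking the commutator condition). Your observation that the hypothesis $(Q\D-\D Q)\in L^1(\T_\t,\ell^1(\cV))$ suffices for $[Q,h_\a]\in L^1(\T_\t,\cB_1)$ because the phases $e^{i\a(\be)}$ have modulus one is a correct and worthwhile detail that the paper leaves implicit.
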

 \no {\bf Proof.} Conditions of the corollary imply Condition H used in Theorem \ref{T2}. \BBox
 }

 \medskip
 \textcolor{blue}{ In order to
 apply  Corollary \ref{Tc1} to scattering on graphs, we add  ome more improvement. The key idea is the
gauge transformation. In general cases, the gauge transformation is performed in terms of some propagator, which is not easy to control. In our case, however,  $V$ is a scalar potential, and we can compute the gauge transformation exactly as an integral. }

\medskip
\noindent
\no {\bf Proof of Theorem \ref{T1}.} We rewrite the Hamiltonian
$h(t)=\D_\b+\gp+V$ in the form
$$
h=\D_\b+\gp+V=h_0+q+v_\b,\qq h_0=\D_\a+\gp, \qq v_\b=\D_\b-\D_\a+v.
$$
From \er{V1} and \er{mp1V} we obtain
\[
v_\b\in L^1(\T_\t\ts \cB_1).
\]
In order to treat the potential $q$ we define a gauge transformation
 $J(t)=e^{-iQ(t)}, t\in \R$, where $Q_x(t)=\int_0^t q_x(s)ds$. Due to the Tailor
 series for $e^{-iQ}$   this function has the properties
\[
\begin{aligned}
  \label{Uq}
e^{-iQ(t)}=I-iQ(t)+Q_\bu(t)\qqq  \forall\ \ t\in \R, \qqq Q_\bu\in
L^1(\T_\t,\cB_1).
\end{aligned}
\]
Due to conditions on $h(t)$ and Lemma \ref{Ta1} there exists the
propagator $U(t,s)$ for $h(t), t\in \R$. We make the unitary
transformation $\ul U(t,s)=J^*(t) U(t,s) J(s)$. Using \er{Uq}, we
obtain
\[
\begin{aligned}
  \label{uU1}
& i{d\/dt}\Big(\ul U(t,s)f\big)=J^*(t)
\Big(-q(t)+h(t)\Big)U(t,s)J(t)f=\ul h(t)\ul U(t,s)f,
\\
& \ul h(t)=J^*(t)\Big(h_0+v_\b\Big)J(t)=h_0+\ul V(t),\qq \ul
V(t)=V_1(t)+J^*v_\b(t) J,
\\
& V_1=J^*h_0J-h_0 =J^*(iK+K_\bu), \qq
 \qq K=Q\D-\D Q,\qq  K_\bu=h_0Q_\bu-Q_\bu h_0.
\end{aligned}
\]
Thus  $\ul U(t,s)$ be the propagator for $\ul h(t), t\in \R$. From
\er{V4} and \er{Uq}, \er{uU1} we deduce that $K, K_\bu\in L^1(\T_\t,
\cB_1)$ and then $\ul V\in L^1(\T_\t, \cB_1)$.
 Then due to Howland-Schmidt's Theorem \ref{THS}, the  wave operators
\[
\lb{Wop1aa1} W_\pm(\ul h,h_0)=s-\lim \ul U(0,t)e^{\mp i
h_0t}P_{ac}(h_0)\qqq \as \ t\to \pm \iy
\]
 exist and are complete, i.e. $ W_\pm(\ul h,h_0)\cH=\cH_{ac}(\ul U(\t,0))$.
 In addition, \er{Uq} , \er{J2}  imply
$$
\begin{aligned}
W_\pm(h,h_0)=s-\lim U(0,t)e^{\mp i h_0t}P_{ac}(h_0)=s-\lim
U(0,t)J(t) e^{\mp i h_0t}P_{ac}(h_0)
\\
=s-\lim \ul U(0,t) e^{\mp i h_0t}P_{ac}(h_0) =W_\pm(\ul h,h_0) \qqq
\as \ t\to \pm \iy.
\end{aligned}
$$
 Thus the wave operators $W_{\pm}(h,h_0)$ exist and complete, i.e.
${\rm Ran}\, W_{\pm}(h,h_0)=  \cH_{ac}(U(\t,0))$. Note that
 using $J(\t)=I$ we have the identity $\ul U(\t,0)=U(\t,0)$. \BBox

\section{ Scattering  for potentials decaying in time}
\setcounter{equation}{0}

We consider a scattering problem for a magnetic Schr\"odinger equation
$$
\begin{aligned}
{\tfrac{ d }{d t}}u(t)= -i h(t) u(t),\qq h(t)=\D_{\b(t)}+ V(t),
\end{aligned}
$$
on a lattice $\Z^d, d\ge 1$, where the Hamiltonian $h(t) $ is
time-dependent in the following sense: $\D_{\b(t)}$ is a discrete
magnetic  Laplacian with time depending magnetic potential  $\b$ on
$\Z^d$. The  time-dependent electric potential $V(t)$ and magnetic
potential  $\b$ satisfy.

\medskip
\noindent {\bf Condition R} :{\it The function  $V$ has the form $V
=v+q$, where $v,q\in L^\iy(\R\ts \Z^d)$ and $Q_x(t):=\int_0^t
q_x(s)ds$.  The electric potential $V(t)$ and the
magnetic potential $\b(\be,t)$ on $\cE$ depends on time and
satisfies for all edge $\be=(x,y), |x-y|=1$
\[
\lb{rm}
\begin{aligned}
& \sup_{|t|\ge 1}|Q_x(t)|=o(1)\qq \as \qq |x|\to \iy,
\\
& |v_x(t)|\le  w(t)F_x,
\\
& |\b(\be,t)|+|Q_{y}(t)-Q_x(t)|\le  w(t)F_y,
\end{aligned}
\]
where $F_y=c\vr_a(x)+ b_y$,
for some $(w,b)\in L^2(\R)\ts\ell^p(\Z^d)$ and $a>1$ and a constant $c$
and $p$ satisfies \er{dp1}.}

\medskip

Our main theorem for systems decaying in time  is as follows.

\begin{theorem} \label{TRwo} Consider the Hamiltonian $h(t)=\D_{\b(t)}+V(t)$
on $\ell^2(\Z^d)$, $d\ge 3$ and assume that  the magnetic potential
$\b$ and the electric  potential $V = q + v$ satisfy Condition R.
Then the wave operator
\[
\begin{aligned}
\label{r}
   W_\pm=s-\lim U(0,t) e^{-it\D}\qqq  as\qqq t\to \pm\iy,
\end{aligned}
\]
 exists and is  unitary.
\end{theorem}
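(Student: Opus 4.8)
The plan is to remove the non-decaying piece $q$ by a gauge transformation, to prove existence of $W_\pm$ by Cook's method fed by the local-decay (Kato smoothness) estimates of Theorems \ref{Tsp2} and \ref{Tsp3}, and to obtain completeness by propagating that smoothness through the full propagator. Since $\s(\D)=\s_{ac}(\D)=[0,2d]$ we have $P_{ac}(\D)=I$, so ``$W_\pm$ is unitary'' means precisely that $W_\pm$ exists as an isometry on all of $\ell^2(\Z^d)$ and is complete. First I would set $J(t)=e^{-iQ(t)}$, $Q_x(t)=\int_0^t q_x(s)ds$, and $\ul U(t,s)=J^*(t)U(t,s)J(s)$; since $Q(0)=0$ we have $J(0)=I$ and $U(t,0)=J(t)\ul U(t,0)$. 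As in the proof of Theorem \ref{T1}, $\ul U$ is the propagator for $\ul h(t)=\D+\ul V(t)$, where $\ul V(t)$ collects the diagonal term $v(t)$ and the magnetic difference $\D_\vp(t)-\D$ with the gauged field $\vp(\be,t)=\b(\be,t)+Q_x(t)-Q_y(t)$. By the representation \er{ML} and Condition R, $|\vp(\be,t)|\le w(t)F_y$ and $|v_x(t)|\le w(t)F_x$, so $\ul V(t)$ factors as $\ul V(t)=\Theta(t)F$, where $F$ is multiplication by $F_x=c\vr_a(x)+b_x$ and $\Theta(t)$ is bounded on $\ell^2(\Z^d)$ with $\|\Theta(t)\|_{\cB}\le Cw(t)$ (the constant depending only on the degree $2d$ through a Schur estimate). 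Because $\sup_{|t|\ge1}|Q_x(t)|=o(1)$ as $|x|\to\iy$, the multiplication operator $J^*(t)-I$ vanishes at spatial infinity uniformly in $t$, so the argument behind \er{J1} gives $\|(J^*(t)-I)e^{-it\D}f\|\to0$; as $\ul U(0,t)$ is unitary this yields $W_\pm(h,\D)=W_\pm(\ul h,\D)$ whenever either exists, and it suffices to treat $\ul h$.

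For existence I would differentiate and apply Cook's method: since $\frac{d}{dt}\big(\ul U(0,t)e^{-it\D}f\big)=i\,\ul U(0,t)\ul V(t)e^{-it\D}f$, the strong limit exists once $\int_{\R}\|\ul V(t)e^{-it\D}f\|dt<\iy$ on a dense set. Using $\ul V(t)=\Theta(t)F$ and Cauchy--Schwarz in $t$,
$$
\int_{\R}\|\ul V(t)e^{-it\D}f\|dt\le C\int_\R w(t)\|Fe^{-it\D}f\|dt\le C\|w\|_{L^2(\R)}\Big(\int_\R\|Fe^{-it\D}f\|^2dt\Big)^{1/2}.
$$
By Kato's theory the resolvent bounds \er{r1} (with $a>1>\tfrac12$) and \er{r2} (with $b\in\ell^p$ satisfying \er{dp1}) are equivalent to the $\D$-smoothness estimates $\int_\R\|\vr_a e^{-it\D}f\|^2dt\le C\|f\|^2$ and $\int_\R\|be^{-it\D}f\|^2dt\le C\|f\|^2$; since $F\le c\vr_a+b$, the last factor is $\le C_F\|f\|$. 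Hence the Cook integral is $\le C\|w\|_{L^2}\|f\|$ and $W_\pm(\ul h,\D)$ exists as an isometry on all of $\ell^2(\Z^d)$.

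For completeness, and hence unitarity, I would prove that the inverse wave operator $\wt W_\pm=s\text{-}\lim e^{it\D}\ul U(t,0)$ also exists on all of $\ell^2(\Z^d)$; the chain rule $\wt W_\pm W_\pm=W_\pm\wt W_\pm=I$ then forces $W_\pm$ to be unitary. By Cook this reduces to $\int_0^{\pm\iy}\|F\ul U(t,0)g\|^2dt<\iy$, i.e. to $\D$-smoothness of $F$ along the \emph{full} propagator. I would obtain this by propagating the free estimate through the Duhamel formula \er{p2}: writing $\phi(t)=F\ul U(t,t_0)g$ and $\ul V(s)\ul U(s,t_0)g=\Theta(s)\phi(s)$,
$$
\phi(t)=Fe^{-i(t-t_0)\D}g-i\int_{t_0}^t Fe^{-i(t-s)\D}\Theta(s)\phi(s)ds.
$$
By Minkowski's inequality and translation invariance of the free smoothing, the map $\eta\mapsto\int_{t_0}^t Fe^{-i(t-s)\D}\eta(s)ds$ is bounded $L^1(I;\ell^2)\to L^2(I;\ell^2)$ with norm $\le C_F$ on any interval $I=[t_0,t_1]$; together with $\|\Theta(s)\|\le Cw(s)$ and Cauchy--Schwarz this gives $\|\phi\|_{L^2(I)}\le C_F\|g\|+CC_F\|w\|_{L^2(I)}\|\phi\|_{L^2(I)}$. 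Partitioning $[t_0,\iy)$ into finitely many intervals $I_j$ with $\|w\|_{L^2(I_j)}\le(2CC_F)^{-1}$ — possible since $w\in L^2(\R)$ — absorbs the inhomogeneous term, so $\|\phi\|_{L^2(I_j)}\le2C_F\|g\|$ with $g$ the (norm-preserving) datum at the left endpoint; summing the finitely many pieces gives $\int_{t_0}^\iy\|F\ul U(t,0)g\|^2dt<\iy$, and likewise on $(-\infty,t_0]$. Transferring back through $J$ yields the unitarity of $W_\pm$ for $h(t)$.

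The main obstacle is this last step: local decay is available only for the free Laplacian, so it must be transported to the time-dependent propagator $\ul U$. The two points that make it go through are the one-sided factorization $\ul V(t)=\Theta(t)F$ — which lets a single $\D$-smooth weight carry the estimate and avoids needing $\D$-smoothness of $b^{1/2}$, not provided by \er{r2} — and the use of a finite partition into small-$\|w\|_{L^2}$ subintervals, which is exactly how the merely $L^2$ (rather than $L^1$) time decay of $w$ is accommodated. The restriction $d\ge3$ enters here, through the threshold-uniform resolvent bound \er{r13} that underlies the global free smoothness of $\vr_a$.
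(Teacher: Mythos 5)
Your argument is correct, and its first two thirds coincide with the paper's: the paper also removes $q$ by the gauge transformation $J(t)=e^{iQ(t)}$ using the compactness of $\sup_{|t|\ge1}|Q_x(t)|$ together with \er{J1}, and also proves existence by a Cook/Duhamel estimate in which the perturbation is passed one-sidedly through the weight $F$ and controlled by the Kato-smoothness bounds \er{R2} derived from Theorems \ref{Tsp2} and \ref{Tsp3}. Where you genuinely diverge is the unitarity step. The paper's completeness argument is a one-liner you did not exploit: the Cook bound is \emph{uniform} in $f$, namely $\|W_+f-U(0,t)e^{-it\D}f\|\le (d+1)C_o\|f\|\big(\int_t^{\iy}w(s)^2ds\big)^{1/2}$, so $U(0,t)e^{-it\D}$ converges in operator norm, and a norm limit of unitaries is unitary; no information about the perturbed dynamics is needed. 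You instead establish existence of the inverse wave operator by transporting the free smoothing estimate to the full propagator $\ul U$ via a Duhamel bootstrap over a finite partition of the time axis into intervals with small $\|w\|_{L^2(I_j)}$. This route is sound (the a priori finiteness of $\|F\ul U(\cdot,t_0)g\|_{L^2(I_j)}$ needed for the absorption is available because $F$ is a bounded multiplication operator, and the last, unbounded interval is handled by exhaustion with uniform constants), and it buys something the paper's shortcut does not: the perturbed smoothness estimate $\int_\R\|F\,\ul U(t,0)g\|^2dt\le C\|g\|^2$, which is a useful local-decay statement in its own right and would survive in situations where $w\in L^2$ cannot be upgraded to norm convergence. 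But for the theorem as stated, observing the uniformity in $f$ of your own existence bound would let you delete the entire third paragraph.
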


\begin{remark}
Compared to Theorems  \ref{Two2a} and \ref{T1}, Theorem \ref{TRwo}
deals with the case when the potential $v$ is decaying in both $x\in
\Z^d$ and $t\in \R$.
\end{remark}

As an example,  let us  consider a potential $V = v$ decaying sufficiently rapidly in $x$.

\begin{example}
 Consider the Hamiltonian $h(t)=\D+v(t)$ on $\Z^d, d\ge 3$, where
 $$
 v(t)={A w(t)\/ 1+|x+\sin t|^a},  a>1, \qqq w\in L^2(\R),\qqq  A=\const,
 \qq |x|^2=x_1^2+...+x_d^2.
  $$
Since the potetial $v$ satisfies Condition R, by Theorem \ref
{TRwo},  the wave operator $W_\pm$ in \er{r} exists and is unitary.
\end{example}

We can also deal with the potential $V=q$, where $q$ is not decaying but
  oscillates sufficiently fast.

\begin{example} Consider the Hamiltonian $h(t)=\D+q_x(t)$ on $\Z^d, d\ge 3$,
where the potential $q_x(t)=A\cos (|x|^{a}t^\g), a>1$ and $A=\const,
\g>{3\/2}$. Note that $q$ is only oscillating in $t$ and $x$, but
the function
$$
Q_x(t)=\int_0^t q_x(s)ds=   {A\sin (|x|^{a}t^\g)\/\g t^{\g-1}|x|^{a}}+...,
$$
decays in $x, t$, since  $Q_x(t)={O(1)\/|x|^{2d}}$ as $|x|\to \iy$.
Thus $q$ satisfies Condition R and by Theorem \ref{TRwo},
the wave operator $W_\pm$ in \er{r} exists and is unitary.
\end{example}


\medskip
\no {\bf Proof of Theorem \ref{TRwo}.} Consider the operator
$h(t)=\D_{\b(t)}+V(t)$ on $\Z^d$, where the magnetic potential $\b$ and
the electric potential $V$ satisfy Condition R.

Firstly we discuss the case $q=0$.
From \er{p2x} we have
\[
\lb{R1} U(0,t)e^{-it\D}f=f-i\int_0^t U(0,s)X(s)e^{-is\D}fds,
\qq X(s)=\D_{\b(s)}-\D+v.
\]
We need results from \cite{KK71}. If there is an estimates \er{r2}, \er{r13}, then estimates
\[
\lb{R2}  \int_\R (\|\vr_a e^{-is\D}f\|^2+\|b e^{-is\D}f\|^2)ds\le
C_o^2\|f\|^2,\qq \forall \ f\in \ell^2(\Z^d),
\]
hold true, where $C_o$ does not depend on $f$.

 If  the edge
$\be=(x,y)\in\cA$, where $x,y\in \Z^d$ and $|x-y|=1$, then  we obtain
$$
( \big(\D_{\b(t)}- \D\big) f)_x=\tfrac{1}{2}
\!\!\!\!
\sum_{\be=(x,y)\in\cA}\!\!\!\!\big(1-e^{i\b(\be,t)}\big)f_y
=-i\sum_{\be=(x,y)\in\cA}
e^{i{\b(\be,t)\/2}} \sin {\b(\be,t)\/2}f_y.
$$
This and the estimate $|\b(\be,t)|\le w(t)F_y$ yield
$$
\begin{aligned}
|( \big(\D_{\b(t)}- \D\big)f)_x| \le w(t)\sum_{\be=(x,y)\in\cA}F_y |f_y|\le
 \sum_{\be=(x,y)\in\cA}|f_y|
\le \sqrt d \rt(\sum_{\be=(x,y)\in\cA} |F_yf_y|^2\rt)^{1\/2},
\end{aligned}
$$
since the degre ${\vk}_x= d$. Then the estimate \er{ii} implies
$$
\begin{aligned}
 \|(\D_{\b(t)}- \D\big) f\|^2\le
 w(t)^2d\sum_{x\in\Z^d} \sum_{|x-y|=1} |F_yf_y|^2
\le w(t)^2d^2 \|Ff\|^2,
\end{aligned}
$$
which together with Condition R  yields
\[\lb{R5}
\begin{aligned}
\|X(s) f\|\le \|(\D_{\b(s)}- \D\big) f\|+ \|v(s) f\|\le
 w(s)(d+1) \|Ff\|.
\end{aligned}
\]
From \er{R1} - \er{R5} we obtain
$$
\begin{aligned}
\|U(0,t)e^{-it\D}f-f\|\le \int_0^t \|X(s)e^{-is\D}f\|ds\le
(d+1)\int_0^t w(s)\|F e^{-is\D}f\|ds
\\
\le (d+1)\|w\|_{L^2(\R_+)}
\rt(\int_0^\iy \|b e^{-is\D}f\|^2ds \rt)^{1\/2}
\le (d+1)C_0\|w\|_{L^2(\R_+)}\|f\|,
\end{aligned}
$$
and similar arguments imply for a large time
$$
\begin{aligned}
\|W_+f-U(0,t)e^{-it\D}f\|\le
(d+1)\int_t^\iy w(s)\|F e^{-is\D}f\|ds
\\
\le (d+1)\int_t^\iy w(s)\|F e^{-is\D}f\|ds\le (d+1)C_o\|f\|
\rt(\int_t^\iy w(s)^2ds \rt)^{1\/2},
\end{aligned}
$$
which yields the norm convergence.
Thus the wave operators $W_\pm$ in \er{r}
 exists and unitary.

Let $q\ne 0$. In order to treat the potential $q$ we define a gauge
transformation $J(t)=e^{iQ(t)}, t\in \R$, where $Q_x(t)=\int_0^t q_x(s)ds$.
From  Condition R we have
$$
\1-J(t)=-2ie^{iQ(t)/2}\sin [iQ(t)/2],\qq |1-J_x(t)|\le Q_x^0,
$$
where $Q_x^0=\sup_{|t|\ge 1}|Q_x(t)|=o(1)$ as $|x|\to \iy$
and $Q^0$ is a compact operator in $\ell^2(\Z^d)$.
 Then from \er{R1} we obtain
\[
\begin{aligned}
\label{R3}
   W_+=s-\lim U(0,t) e^{-it\D}=s-\lim U(0,t)J(t) e^{-it\D}
   \qqq  as\qqq t\to +\iy.
\end{aligned}
\]
Then  we have the following:
\[
\lb{R4} U(0,t)J(t)e^{-it\D}f=f-i\int_0^t U(0,s)J(s)Y(s)e^{-is\D}fds,
\]
where
$$
Y(s)=J(s)^*(\D_{\b(s)}-\D+v) J(s)=J(s)^*(\D_{\b(s)}-\D) J(s)+v
=\D_{\vp(t)}+v,
$$
where the magnetic potential $\vp(\be,t)=\b(\be,t)+Q_x(t)-Q_y(t),
\qq \be=(x,y)$. The new magnetic potential $\vp(\be,t)$
and the potential $v$ satisfy Condition R.
Then by the above arguments, the wave operators $W_\pm$
 exist and are unitary.
\BBox



\begin{thebibliography}
{999}\setlength{\itemsep}{-\parskip} \footnotesize

\bibitem {AK19}
 Adachi, T.; Kiyose, A. \emph{On the Mourre estimates for
 Floquet Hamiltonians}.
 Lett. Math. Phys. 109 (2019), no. 11, 2513--2529.



\bibitem {AKS10} Adachi, T.; Kimura, T.; Shimizu,
Y. \emph{Scattering theory for two-body quantum systems with
singular potentials in a time-periodic electric field}. J. Math.
Phys. 51 (2010), no. 3, 032103, 23 pp.

\bibitem {AK16} Adachi, T; Kawamoto, M.
\emph{Quantum scattering in a periodically
  pulsed magnetic field},  Ann. Henri Poincare 17 (2016), no. 9, 2409--2438.

\bibitem {A13}  Ando, K.  \emph{Inverse scattering theory for
discrete Schr\"odinger operators
on the hexagonal lattice}, Ann. Henri Poincare 14 (2013), no. 2,
347--383.

\bibitem {AIM16}  Ando, K.; Isozaki, H.; Morioka, H.
 \emph{Spectral properties of Schr{\"o}dinger operators on perturbed
lattices}, Ann. Henri Poincare 17 (2016), no. 8, 2103--2171.

\bibitem {ADY20}
Aptekarev, A.I; Denisov, S.A; Yattselev, M.L.
 \emph{Discrete Schr\"odinger operator on a tree, Angelesco potentials,
 and their perturbations}, Proceedings of the Steklov Institute of Math.,
311(2020), 1--9.

\bibitem {BSL18} Bach, V.; de Siqueira Pedra, W.; Lakaev, S. N.
\emph{Bounds on the discrete spectrum of lattice Schr\"odinger
operators},  J. Math. Phys. 59 (2018), no. 2, 022109, 25 pp.


\bibitem {BS99}  Boutet de Monvel, A.;  Sahbani, J. \emph{On the
spectral properties of discrete Schr{\"o}dinger operators : (The
multi-dimensional case)}, Rev. Math. Phys. 11 (1999), 1061--1078.



\bibitem{DHKS}
Damanik, D.; Hundertmark, D.; R.Killip R.; Simon, B.
\emph{Variational estimates for discrete Schr\"odinger operators
with potentials of indefinite sign}, Commun. Math. Phys.
\textbf{238} (2003), 545--562.

\bibitem {H74} Howland, J. \emph{Stationary scattering theory for time-dependent
Hamiltonians}, Math. Ann. 207(1974), 315--335.

\bibitem {H79} Howland, J. \emph{Scattering theory for Hamiltonians periodic in time},
Indiana Univ. Math. J. 28 (1979), no. 3, 471--494.

\bibitem
 {IK12} Isozaki, H.; Korotyaev, E. \emph{Inverse problems,
trace formulae for discrete Schr\"odinger operators}, Ann. Henri
Poincar\'e \textbf{13} (2012), 751--788.

\bibitem{IM14} H. Isozaki; H. Morioka, \emph{A Rellich type theorem for
discrete Schr\"odinger operators},   Inverse Probl. Imaging, 8
(2014), no. 2, 475--489.


\bibitem{KK71} T. Kato and S. T. Kuroda, \emph{An abstract theory
of scattering}, Rocky Mountain Math. J., \textbf{1} (1971), 127--171.



\bibitem {K19}   Kawamoto, M. \emph{Mourre theory for time-periodic magnetic fields},
J. Funct. Anal. 277 (2019), no. 1, 1--30.



\bibitem {K84}  Korotyaev, E. L. \emph{Eigenfunctions of the monodromy
operator of the Schr\"odinger operator with a potential that is periodic
with respect to time}. Mat. Sb. (N.S.) 124(166) (1984), no. 3, 431--446.

\bibitem {K85} Korotyaev, E. L. \emph{Scattering theory for three-particle
systems with time-periodic pair interactions}. Teoret. Mat. Fiz. 62(1985), no. 2, 242--252.

\bibitem {K89}
Korotyaev, E. L. \emph{On scattering in an exterior homogeneous and
time-periodic magnetic field}. Mat. Sb. 180 (1989), no. 4, 491--512.


\bibitem {K25} Korotyaev, E. L. \emph{The number of eigenvalues
of discrete Hamiltonian periodic in time}, Journal of Mathematical Analysis
 and Applications, 547(2025), no 1, 129294.

\bibitem {K17}
 Korotyaev, E.
\emph{Trace formulae for Schr\"odinger operators on lattice},
Rus. J. Math. Phys.  29 (2022), no. 4, 542--557.

\bibitem {K21} Korotyaev, E.L.
\emph{Trace formulas for time periodic complex Hamiltonians on
lattice}, Journal of Mathematical Analysis and
Applications,  534 (2024), Issue 1, No 128045, pp. 31.


\bibitem {KL18}
 Korotyaev, E.;  Laptev, A. \emph{Trace formulae for
Schr\"odinger operators with complex-valued potentials on cubic
lattices},  Bulletin of Mathematical Sciences,  8 (2018), 453--475.

\bibitem {KM19}
E. Korotyaev; J. Moller,
 \emph{Weighted estimates for the Laplacian on the cubic lattice},
 Arkiv f\"or Matematik 57(2019), No. 2, 397--428.


\bibitem {KS17}  Korotyaev, E.; Saburova, N.  Magnetic Schr\"odinger
operators on periodic discrete graphs, Journal of Functional
Analysis 272(2017), 1625--1660.


\bibitem {KS20} Korotyaev, E.; Saburova, N.  \emph{Scattering
on periodic metric graphs}, Reviews in Mathematical Physics, 32 (2020) 2050024, 51 pp.



\bibitem {KS23} Korotyaev, E.; Saburova, N. \emph{Trace formulas
for magnetic Schr\"odinger operators on periodic graphs and their
applications}, Linear Algebra and its Applications,  676(2023),
395--440.

\bibitem {KS14} Korotyaev, E.; Saburova, N. Schr\"odinger operators on
periodic discrete graphs,  J. Math. Anal. Appl. 420 (2014), No 1,
576--611.

\bibitem {KS20a}
Korotyaev, E. L.; Sloushch, V. A.  \emph{ Asymptotics and estimates
for the discrete spectrum of the Schr\"odinger operator on a
discrete periodic graph}. Algebra i Analiz 32 (2020), no. 1, 12--39.
tranlated in St. Petersburg Mathematical Journal 32(2021), no (1),
9--29.

\bibitem {MW89} Mohar, B.; Woess, W. \emph{A survey on spectra
 of infinite graphs}, Bull. London Math. Soc. 21 (1989), no. 3, 209--234.


\bibitem {M00} Moller, J. S. \emph{Two-body short-range systems
  in a time-periodic electric field}. Duke Math. J. 105 (2000), no. 1,
  135--166.

\bibitem {N14} Nakamura, N. \emph{Modified wave operators for
discrete Sch\"odinger operators with long-range perturbations},
J. Math. Phys. 55 (2014), no. 11, 112101, 8 pp.

\bibitem {NT21} Nakamura, S.; Tadano, Y.;  \emph{On a continuum
limit of discrete Schr\"odinger operators on square lattice}, J. Spectr. Theory 11
(2021), no. 1, 355--367.

\bibitem {NG04} Novoselov, K.S.; Geim, A.K. et al,
 \emph{Electric field effect in atomically thin carbon films}, Science, 306
(2004), 666--669.


\bibitem{OP18} de Oliveira, C. R.; Pigossi, M.  \emph{Point spectrum and SULE for
time-periodic perturbations of discrete 1D Schr\"odinger operators
with electric fields},  J. Stat. Phys. 173 (2018), no. 1, 140--162.

\bibitem{PR18}
Parra, D.; Richard, S. \emph{Spectral and scattering theory for
Schr\"odinger operators on perturbed topological crystals}, Rev.
Math. Phys. 30 (2018), no. 4, 1850009, 39 pp.


\bibitem {RS75} Reed M., Simon B. Methods of Modern Mathematical
Physics Vol. 2, Fourier Analysis, Self.-- 1975.

\bibitem  {RoS09} Rozenblum, G.; Solomyak, M. \emph{On the
spectral estimates for the Schr{\"o}dinger operator on ${\Z}^d$, $d
\geq 3$}, Problems in Mathematical Analysis, No. 41, J. Math. Sci.
N. Y. \textbf{159} (2009), 241--263.

\bibitem {S75} Schmidt, G. \emph{On scattering by time depending
perturbations}, Indiana Univ. Math. Journal, 24(1975), no 10,
925--935.

\bibitem {Ta19} Tadano, Y. \emph{Long-range scattering for
discrete Schr\"odinger operators}. Ann. Henri Poincare 20 (2019),
no. 5, 1439--1469.

\bibitem {Ta19m} Tadano, Y. \emph{Long-range scattering theory for discrete Sch\"odinger
operators on graphene}, Journal of Mathematical Physics, 2019

\bibitem {Ta20} Tadano, Y.
\emph{Construction of Isozaki-Kitada modifiers for discrete Schr\"
odinger operators on general lattices},  arXiv preprint
arXiv:2012.00412, 2020 - arxiv.org

\bibitem {TT19} Tadano, Y.; Taira, K. \emph{Uniform bounds of
discrete Birman-Schwinger perators}, Trans. Amer. Math. Soc. 372 (2019), no. 7, 5243--5262.

\bibitem {T20}
Taira, K. \emph{Limiting absorption principle on $L^p$-spaces and
scattering theory}, J. Math. Phys. 61 (2020), no. 9, 092106, 28 pp.

\bibitem {V14}
Vesalainen, E. \emph{Rellich type theorems for unbounded domains},
Inverse Probl. Imaging 8 (2014), no. 3, 865--883.


\bibitem {Y77} Yajima, K. \emph{Scattering theory for Schr\"odinger equations with
 potentials periodic in time}. J. Math. Soc. Japan 29 (1977), no. 4,
 729--743.



\end{thebibliography}
\end{document}